\newif\iflongpaper
\newbox\brbox
\brbox\hbox{$[$}
\newbox\brsmallbox
\brsmallbox\hbox{$\scriptstyle[$}
\let\zupp[
\def\crazy{\futurelet\next\dostuff}
\def\dostuff{\ifx\next\zupp\schnapp\else
  \mathchoice{\mathopen{\copy\brbox}}%
             {\mathopen{\copy\brbox}}%
             {\mathopen{\copy\brsmallbox}}%
             {\mathopen{\copy\brsmallbox}}\fi}
{\catcode`\[\active\gdef[{\crazy}
\gdef\schnapp#1]]{{\def[{\llbracket}#1\rrbracket}}
}
\let\oldsim=\sim
\def\sim{\mathbin{\oldsim}}
\newtheorem{fact}{Fact}
\theoremstyle{definition}
\newcommand{\N}{\ensuremath{{\mathbf{N}}}\xspace} 
\newcommand{\Z}{\ensuremath{{\mathbf{Z}}}\xspace} 
\newcommand{\X}{\ensuremath{{\mathcal{X}}}\xspace} %
\newcommand{\C}{\ensuremath{{\mathcal{C}}}\xspace}
\def\cal{\mathcal}
\newcommand{\Ww}{$W[1]$}
\newcommand{\FPT}{\textnormal{FPT}\xspace} 
\def\WReach{{\rm WReach}}
\DeclareMathOperator{\td}{td}
\newcommand{\bu}{{\bar u \xspace}}
\newcommand{\bv}{{\bar v \xspace}}
\newcommand{\bw}{{\bar w \xspace}}
\newcommand{\bx}{{\bar x \xspace}}
\newcommand{\by}{{\bar y \xspace}}
\newcommand{\bz}{{\bar z \xspace}}
\newcommand{\calC}{\mathcal{C}}
\newcommand{\vG}{{\smash{\vec{G}}}}
\newcommand{\vH}{{\smash{\vec{H}}}}
\newcommand{\fG}{\vec{G}_{\pi}}
\newcommand{\hG}{\vec{G}_{\pi}^2}
\renewcommand{\phi}{\varphi}
\renewcommand{\epsilon}{\varepsilon}
\def\col{\textnormal{col}}
\def\wcol{\textnormal{wcol}}
\def\dist{\textnormal{dist}}
\DeclareMathOperator{\wdist}{wdist}
\newcommand{\dom}{\textnormal{dom}}
\newcommand{\cnt}[1]{\#{#1}\,}
\newcommand{\FOC}{{\normalfont{FOC}}$(\mathbf{P})$\xspace}
\newcommand{\FOCless}{{\normalfont{FOC}}$(\{>\})$\xspace}
\newcommand{\FOCONE}{{\normalfont{FOC}}$_1(\mathbf{P})$\xspace}
\newcommand{\FOX}{{\normalfont{FO}$(\{{>}\kern1pt0\})$}\xspace}
\def\FO{{\rm FO}\xspace}
\newcommand{\suc}{\textnormal{succ}}
\newcommand{\enc}{\textnormal{enc}}
\newcommand{\enccomp}{\overline{\textnormal{enc}}}
\newlength{\leftbarwidth}
\newlength{\leftbarsep}
\renewenvironment{leftbar}[1][blue]
{%
    \MakeFramed{\hsize\hsize\advance\hsize-\width\FrameRestore}%
}
{\endMakeFramed}
\crefname{algocf}{alg.}{algs.}
\Crefname{algocf}{Algorithm}{Algorithms}
\titlerunning{Restricted FO Counting Properties on Nowhere Dense Classes and Beyond}
\title{Evaluating Restricted First-Order Counting Properties on Nowhere Dense Classes and Beyond}
\titlerunning{Evaluating Restricted FO-Counting Properties on Nowhere Dense Classes and Beyond}
\author{Jan Dreier}{TU Wien}{dreier@ac.tuwien.ac.at}{https://orcid.org/0000-0002-2662-5303}{}
\author{Daniel Mock}{RWTH Aachen University}{mock@cs.rwth-aachen.de}{https://orcid.org/0000-0002-0011-6754}{Supported by the German Science Foundation DFG, grant no.\ DFG-927/15-2}
\author{Peter Rossmanith}{RWTH Aachen University}{rossmani@cs.rwth-aachen.de}{https://orcid.org/0000-0003-0177-8028}{Supported by the German Science Foundation DFG, grant no.\ DFG-927/15-2}
\authorrunning{J. Dreier, D. Mock, and P. Rossmanith}
\keywords{nowhere dense, sparsity, counting logic, dominating set, FPT}
\begin{document}
\maketitle
\begin{abstract}
It is known that first-order logic with some counting extensions can
be efficiently evaluated on graph classes with bounded expansion, where
depth-$r$ minors have constant density.  More precisely, the formulas
are $\exists x_1\ldots x_k \#y\,\phi(x_1,\ldots,x_k, y)>N$,
where $\phi$ is an FO-formula.
If $\phi$ is quantifier-free, we can
extend this result to \emph{nowhere dense} graph classes with an
almost linear FPT run time.  Lifting this
result further to slightly more general graph classes, namely almost nowhere
dense classes, where the size of depth-$r$ clique minors is subpolynomial,
is impossible unless $\rm FPT=W[1]$.  On the other hand, in almost
nowhere dense classes we can approximate such counting formulas with a
small additive error.
Note those counting formulas are contained in \FOCless but not \FOCONE.

In particular, it follows that partial covering problems, such as partial
dominating set, have fixed parameter algorithms on nowhere dense
graph classes with almost linear running time.
\end{abstract}

\section{Introduction}

First-order logic can be used to express algorithmic problems.
FO-model checking on certain classes of structures is therefore a
meta-algorithm, which solves many problems at the same time.
For example, the three classical problems that started the research on
parameterized complexity are all FO-expressible:
Vertex Cover, Independent Set, and
Dominating Set~\cite{DowneyF1999,DowneyF2013}.  Dominating Set with
the natural parameter---the size of the minimal dominating set---is
$\rm W[2]$-complete on general graphs, but fixed parameter tractable (fpt)
on many special graph classes.
The study of \emph{sparsity}, initiated by Nešetřil and Ossona
de Mendez, has led to the concept of \emph{bounded expansion} and
\emph{nowhere dense} graph classes~\cite{sparsity}.
They generalize many well-known notions of sparsity, such as bounded degree,
planarity, bounded genus, bounded treewidth, (topological)
minor-closed, etc.\ and
have led to quite general algorithmic
results~\cite{Seese1996,FrickG2001,DawarGK2007,DvorakKT2013}.
Most notably, Grohe, Kreutzer, and Siebertz showed that FO-model checking
is fpt on nowhere dense graph classes~\cite{GroheKS17}.  This shows,
e.g., that dominating set is fpt on nowhere dense graphs, a result that
was already known:  Dawar and Kreutzer were able to find a specific
algorithm several years earlier~\cite{DawarK2009} that solves 
generalizations of the dominating set problem.  All of them are
FO-expressible, which shows how strong meta-algorithms are.

Partial dominating set, also called $t$-dominating set, is another
generalization of dominating set:  The input is a graph $G$ and two numbers
$k$ and~$t$.  The question is, whether $G$ contains $k$ vertices that
dominate at least $t$ vertices.  The parameter is $k$, as in the classical
dominating set problem.  (If you choose $t$ as the parameter---which also
makes sense---the problem becomes fixed-parameter tractable even on general
graphs~\cite{KneisMR2007}.)  
The length of an FO-formula expressing the existence of a partial dominating depends on $t$, which is not bounded by any function of $k$
and therefore all the results on first-order
model checking do not help when we parameterize by $k$ only.  Golovach and Villanger showed that partial
dominating set remains hard on degenerate graphs~\cite{GolovachV2008},
while Amini, Fomin, and Saurabh have shown that partial dominating set is
fixed-parameter tractable in minor-closed graph classes, which generalized
earlier positive results~\cite{AminiFS2011}.
Very recently, this was improved to graph
classes with bounded-expansion, while simultaneously using only linear
fpt time instead of polynomial fpt time, i.e, the running time is now
only $f(k)n$~\cite{DreierR2021}.

This result was achieved by another meta-theorem for the counting
logic \FOCless on classes of bounded expansion.
\FOCless is a fragment of the logic \FOC, introduced by Kuske and Schweikardt
in order to generalize first-order logic to counting problems~\cite{KuskeS2017}.
\FOC is a very expressive counting logic and allows counting
quantifiers $\#\by\phi(\bx,\by)$, which count for how many $\by$ the
\FOC-formula $\phi(\bx,\by)$ is true.  Moreover, arithmetic operations
are allowed as well as all predicates in~$\bf P$, which might contain
comparisons, equivalence modula a number, etc.  Kuske and Schweikardt
showed that the \FOC-model checking problem is fixed parameter
tracktable on graphs of bounded degree and hard on trees of bounded
depth.  The fragment \FOCless is more restrictive and allows only
counting quantifiers of single variables and no arithmetic operations.
The only predicate is comparison against an arbitrary number, but not
between counting terms.
While \FOCless-model checking is still hard on trees of
bounded depth, there is an ``approximation scheme'' for \FOCless on classes of
bounded expansion \cite{DreierR2021}:  An algorithm gives either the
right answer or says ``mayby,'' but only if the formula is both almost
satisfied and not satisfied.  For a fragment of \FOCless, which
captures in particular the partial dominating set problem, we can
compute even an exact answer to the model checking problem
in linear fpt time~\cite{DreierR2021}.
That fragment consists of formulas of the form   
\begin{equation}\label{equ:pdslike}
    \exists x_1 \dots \exists x_k \cnt y \phi(y, x_1, \dots, x_k) > N ,
\end{equation} 
where $\phi$ is a first-order formula and $N$ an arbitrary number. The
semantics of the \emph{counting quantifier} $\cnt y \phi(y, v_1, \dots,
v_k)$ is the number of vertices $u$ in $G$ such that $G$ satisfies
$\phi(u, v_1, \dots, v_k)$. As an example, the existence of partial
dominating set can be expressed as
\begin{equation}\label{formula:pds}
    \exists x_1 \dots \exists x_k \cnt y \bigvee_{i =1}^k E(y, x_i)
    \lor y = x_i > t,
\end{equation}
where $k$ is the number of the dominating, and $t$ the number of dominated
vertices. The length of the formula only depends on~$k$.  This implies
that partial dominating set can be solved in linear fpt time on classes of
bounded expansion. 

There is another fragment of \FOC, which should not be confused with
\FOCless. In \FOCONE, introduced by Grohe and Schweikardt \cite{GroheS18},
the counting terms may contain at most one free variable. They show
that \FOCONE is fixed-parameter tractable on nowhere dense graph classes
\cite{GroheS18}. Note that formula~\ref{formula:pds} is in \FOCless but
not in \FOCONE as the counting term relies on $k$ free variables. Hence,
\FOCless and \FOCONE are orthogonal in there expressiveness.

There has been some research about \emph{low degree graphs}. A graph class
has low degree if every (sufficiently large) graph has degree at most
$n^\varepsilon$ for every $\epsilon>0$. Examples are classes with bounded
degree or classes with degree bounded by a polylogarithmic function. These
graph classes are incomparable to nowhere dense classes. Especially,
classes of low degree are not closed under subgraphs. On those classes,
Grohe has shown that first-order model-checking can be solved in
almost linear time \cite{Grohe01}.  Recently, Durand, Schweikardt, and
Segoufin have generalized the result to query counting with constant
delay and almost linear preprocessing time~\cite{DurandSS22}. Vigny
explores dynamic query evaluation on graph classes with low degree~\cite{vigny}.

\emph{Almost nowhere dense} is a property which subsumes both low degree
and nowhere dense classes.  Whereas a nowhere dense class $\cal C$
can be characterized that for every $r$ graphs do not contain up to $r$
times subdivided cliques of arbitrary sizes, for an almost nowhere dense
class arbitrary sizes are allowed, but their growth must be bounded by
subpolynomial function in the size of the graph.

\begin{table}
\halign to \hsize{\strut#\hfil\tabskip=1em plus 2em&%
        #\hfil&%
	#\hfil&%
	#\hfil&%
	#\hfil\tabskip=0em\cr
\noalign{\hrule\smallskip}
Graph class&\hfil FO-MC&\hfil \FOCONE&\hfil \FOCless&\hfil PDS like\cr
\noalign{\smallskip\hrule\smallskip}
bounded expansion&fpt \cite{DvorakKT2013}&fpt \cite{GroheS18}&
                  hard \cite{DreierR2021}&
		  fpt \cite{DreierR2021}\cr
                  &&&
                  $(1+\epsilon)$-approx fpt \cite{DreierR2021}&
		  \cr
\noalign{\smallskip}
nowhere dense&fpt \cite{GroheKS17}&fpt \cite{GroheS18}&
                  hard, approx open &\bf fpt\rlap{$^c$}\cr
\noalign{\smallskip}
almost nowhere dense&\bf hard\rlap{$^a$}&
                  \bf hard\rlap{$^a$}&\bf hard\rlap{$^a$}&\bf hard\rlap{$^a$}\cr
                    &&
                  &&\bf approx$\pm\delta$ fpt$^b$\cr
\noalign{\smallskip}
general graphs&hard&hard&hard&hard\cr
\noalign{\smallskip\hrule\smallskip}
}
$^a$ \Cref{cor:pds-hard-awnd},
$^b$ \Cref{cor:approx-awnd},
$^c$ \Cref{thm:algo}

\smallskip
\caption{Results of this paper (in boldface) and some related known results.
\emph{Hard} means at least $\rm W[1]$-hard.
\emph{PDS like} indicates problems similar to the partial dominating
set problems:  All problems that can be expressed by a \FOCless
formula of the form~(\ref{equ:pdslike}).  The mentioned approximation
results are quite different.
Numbers are approximated either with a relative or an absolute
error.}
\label{tbl:results}
\end{table}

\iflongpaper
\else
Due to space limitations in this extended abstract many proofs,
definitions, results, and comments can be found only in the appendix,
which contains a full version of this paper.  Of course, all main
results are presented in this short version as well.
\fi

\subsection{Our Results}
In this work, we consider a fragment of \FOCless, which we will call \emph{PDS-like formulas}, namely formulas of the form
\[
\exists x_1 \dots \exists x_k \cnt y \phi(y, x_1, \dots, x_k) > N
\]
for a quantifier-free FO-formula $\phi$ and an (arbitrarily big) number $N
\in \Z$.  This logic is strong enough to express the partial dominating
set problem as formula (\ref{formula:pds}) is contained in the fragment
described above.  Remember that this fragment and \FOCONE are orthogonal.
Table~\ref{tbl:results} contains an overview of most of the results in
this paper.

In formulas that start with existential quantifier it is natural to
ask for a witness, if we can indeed fulfill the formula.  For example,
in the partial dominating set problem we are usually interested in
actually finding the dominating set rather than verify than one
exists.  Often, this is not an issue as problems are self-reducible.
Using self-reducibility to find a witness incurs a runtime penalty.
The next theorem shows that solving the model checking problem, and
finding a witness, for formulas in the form of~\ref{equ:pdslike} is
possible.

\begin{restatable}{theorem}{theoremalgo}
\label{thm:algo}
    Let $C$ be a nowhere dense graph class.
    For every $\epsilon >0$, every graph $G \in \cal C$ and every quantifier-free first-order formula $\phi(y\bx)$ we can compute a vertex tuple $\bu^*$ that maximizes $[[\cnt y \phi(y\bu^*)]]^G$ in time $O(n^{1+\epsilon})$.
\end{restatable}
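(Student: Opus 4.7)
The plan is to reduce the maximization to a bounded family of pattern-counting subproblems indexed by atomic types, and to solve each using the generalized coloring machinery available on nowhere dense classes. Since $\phi(y, \bar x)$ is quantifier-free, $\phi(y, \bar u)$ depends only on the atomic type of $(y, \bar u)$ in $G$, that is, on which pairs among $\{y, u_1, \dots, u_k\}$ are equal and which are connected by an edge. For fixed $k$ and $|\phi|$ there are only $O(1)$ such atomic types, and $\phi$ partitions them into \emph{accepting} and \emph{rejecting} ones. I will compute, for each accepting type $\tau$, a tuple $\bar u_\tau$ maximizing the number of $y$ realizing $\tau$ with $\bar u_\tau$, and then return the best of these tuples.

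For a fixed accepting type $\tau$, split its witnesses $y$ into a \emph{global} part (where $y$ is distinct from and nonadjacent to every $u_i$) and a \emph{local} part (where $y$ equals or is adjacent to some $u_i$). In the global part, the atomic type of $(y, \bar u)$ is determined by the atomic type of $\bar u$ alone, so the per-tuple count becomes $n - |N[\bar u]|$ up to small corrections; the task then reduces to selecting a realization of the sub-pattern induced on $\bar u$ whose closed neighborhood is smallest, which is handled uniformly by the same enumeration as the local case. In the local part, each witness configuration $(y, \bar u)$ forms a connected pattern of size at most $k+1$ that must be enumerated together with per-tuple multiplicities.

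The workhorse is the weak $r$-coloring number: for nowhere dense $\C$ and any $r, \epsilon' > 0$, we have $\wcol_r(G) \le g(r, \epsilon') \cdot n^{\epsilon'}$, and an ordering realizing this bound together with its back-reach sets can be computed in time $O(n^{1 + \epsilon'})$. Ordering $V(G)$ accordingly with $r$ chosen proportional to $k$, every connected pattern on at most $k+1$ vertices is contained in the weakly $r$-reachable set of its minimum element. I iterate over each root vertex $v$, enumerate the at most $O(n^{k\epsilon'})$ candidate tuples $\bar u$ drawn from the back-reach of $v$, and for each $\bar u$ increment, in a table keyed by $\bar u$, a counter each time an incident $y$ yields the target atomic type. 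Aggregating per $\tau$ and taking the argmax across all enumerated tuples produces $\bar u_\tau$; the overall winner is $\bar u^*$.

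The main obstacle is that the objective is a \emph{count}, not a pure existence statement, so the usual irrelevant-vertex style of arguments does not apply directly: every tuple is a potential maximizer. The key to circumventing this is to charge each witness–tuple incidence to the unique minimum vertex in its closure under the weak-coloring order, which ensures that no incidence is lost and that each vertex is responsible for only $\wcol_r(G)^{O(k)} = O(n^{O(k)\epsilon'})$ incidences. Choosing $\epsilon' = \epsilon / (c \cdot k)$ for a suitable constant $c$ absorbs all polynomial-in-$k$ overhead into the $f(k)$ factor and yields the $O(n^{1+\epsilon})$ bound; storing the argmax alongside each counter during aggregation (instead of only its value) recovers the witness $\bar u^*$ itself.
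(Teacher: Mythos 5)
Your proposal has a genuine gap that breaks the whole enumeration strategy: the optimal tuple $\bar u^*$ need not be ``local'' in the sense required by your charging argument. You charge each witness--tuple incidence $(y,\bar u)$ to the $\pi$-minimum vertex of the configuration and then enumerate only tuples contained in the weak back-reach set of some root $v$. But the pattern $(y,\bar u)$ is typically \emph{not connected}: for a fixed accepting type $\tau$ in your ``local'' part, $y$ is adjacent to or equal to \emph{some} $u_i$, while the remaining $u_j$ can be anywhere in the graph, arbitrarily far from $y$ and from each other. For such a configuration the minimum vertex does not weakly reach the rest, so the tuple is never placed into any table. This is not an edge case---in partial dominating set, for instance, the $k$ optimal centers are typically scattered, and the witnesses $y$ aggregated in $[[\cnt y\,\phi(y\bar u)]]^G$ come from $k$ disjoint neighborhoods all over the graph. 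The same dispersion defect hits your global part: ``select $\bar u$ of prescribed type minimizing $|N[\bar u]|$'' is again an optimization over tuples whose induced sub-pattern need not be connected, and it cannot be relegated to ``the same enumeration as the local case.'' Furthermore, global and local contributions cannot be optimized independently---one must optimize their sum jointly over $\bar u$---so even repairing each part separately would not yield the argmax.

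The paper attacks exactly this obstacle with a different machine: it builds a \emph{radius-$r$ decomposition tree} from the Splitter-game characterization of nowhere denseness (using the neighborhood covers $\mathcal X_r$ of Grohe--Kreutzer--Siebertz), and runs a bottom-up dynamic program whose states record, for each tree node, a partial assignment of $\bar x$ to vertices on the root path together with an ``adjacency profile'' of the not-yet-assigned variables. Merging subtree contributions requires \emph{cover systems} to select disjoint clusters that localize $N[\bar w]$ for the part of $\bar u$ assigned below, which is itself solved as a disjoint-cluster maximization (an independent-set-like LinEMSOL problem on an intersection graph). This DP is precisely what lets a dispersed tuple $\bar u$ be assembled piece by piece, a capability your per-root local enumeration lacks. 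Your use of $\wcol_r$ and back-reach sets overlaps with ingredients in the paper, but without a decomposition-tree DP (or some equivalent way to glue together contributions from far-apart parts of $\bar u$) the approach cannot produce the maximizer.
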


As an immediate corollary, we get that the model-checking problem for PDS-like formulas and thus, also the partial dominating set problem
are solvable in almost linear fpt time on nowhere dense graph classes,
where the parameter is the length of the formulas or the solution size $k$ respectively.  Moreover, our meta-algorithm
does not only work for partial dominating set, but for variants such as
partial total or partial connected dominating set as well.

Note that \Cref{thm:algo} does not follow from the fact that model-checking for \FOCONE or that query-counting for FO-logic is fixed-parameter tractable \cite{GroheS18} as we do not count the number of solutions to a query, but the number of witnesses to some solution. Also, PDS-like formulas form a fragment orthogonal to \FOCONE. Moreover, we were not able to prove \Cref{thm:algo} by using the result from \cite{GroheS18} as a subroutine: formulas inside a counting quantifier are allowed to have at most one free variable and this weakens self-reducibility or similar techniques drastically.

The above theorem cannot be extended to the more general case of almost nowhere
dense graph classes. It turns out that even for non-counting formulas this is
not possible, as the (classical) dominating set problem becomes $\rm W[1]$-hard
on some almost nowhere dense graph classes. This lower bound implies as a
special case that plain FO-model checking is intractable on some almost nowhere
dense graph classes.  As far as we are aware this does not follow directly from
previously known results.

However, we can go beyond nowhere dense classes if we do not insist on an exact
solution: The model-checking problem for PDS-like formulas can be approximated with an
\emph{additive} subpolynomial error in almost linear fpt time on almost nowhere
dense classes of graphs. To be more precise, we get the following, slightly more
general result.


\begin{restatable}{corollary}{corollaryapproxanwd}
    \label{cor:approx-awnd}
    Let $\cal C$ be an almost nowhere dense class of graphs.
    For every $\varepsilon > 0$, every graph $G \in \mathcal{C}$ and every quantifier-free first-order formula $\phi(y\bx)$, we can compute 
    in time $O(n^{1+\epsilon})$ a vertex tuple $\bu \in V(G)^{|\bx|}$ with \[
        |\max_{\bu} [[\cnt y \phi(y \bu)]]^G - [[\cnt y \phi(y \bu^*)]]^G|\leq n^\epsilon .
    \]
\end{restatable}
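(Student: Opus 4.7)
The plan is to derive this approximation result from the exact algorithm of \Cref{thm:algo} via a ``core removal'' strategy. Setting $k = |\bx|$ and $\delta = \epsilon/(2(k+1))$, the idea is to identify a small set $S \subseteq V(G)$ with $|S| \le n^\delta$ whose deletion leaves a nowhere dense graph, solve the problem essentially exactly on $G - S$ using \Cref{thm:algo}, and absorb the contribution of $S$ into the additive $n^\epsilon$ error.

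The first step is to produce such a core $S$ in almost-linear time. The defining hypothesis of almost nowhere dense---subpolynomial growth $n^{o(1)}$ of depth-$r$ clique minor sizes---should ensure that the ``density'' obstructions preventing $\cal C$ from being nowhere dense are concentrated on at most $n^\delta$ vertices, extractable by a peeling argument that removes a branch vertex from each excessive clique minor.

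The second step is algorithmic. For each subset $T \subseteq [k]$ of tuple positions and each assignment of $S$-vertices to $T$ (at most $|S|^k \le n^{\delta k}$ candidates), I would substitute the chosen $S$-vertices into $\phi$ by extending the signature with $O(k)$ fresh unary predicates marking these vertices and their neighborhoods, and then invoke \Cref{thm:algo} on the nowhere dense graph $G - S$ with the partially instantiated quantifier-free formula to find the best completion in $V(G - S)^{k - |T|}$. Reevaluating each computed candidate's true count $[[\cnt y \phi(y\bu)]]^G$ in $O(n)$ time and returning the maximizer yields the final answer. The total runtime is $O(|S|^k \cdot n^{1+\delta}) = O(n^{1+\delta(k+1)}) = O(n^{1+\epsilon/2})$. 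For the error bound, note that for every tuple $\bu$,
\[
    0 \;\le\; [[\cnt y \phi(y\bu)]]^G - [[\cnt y \phi(y\bu)]]^{G-S} \;\le\; |S| \;\le\; n^\epsilon,
\]
because only witnesses $y \in S$ are lost when restricting to $G - S$. Since the enumeration considers the specific $S$-profile of the optimal tuple $\bu^*$, the returned tuple has count in $G$ within $n^\epsilon$ of the optimum.

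The main obstacle is the first step. Examples of almost nowhere dense graphs containing many vertex-disjoint copies of $K_t^{(r)}$-subdivisions with $t = n^{o(1)}$ suggest that a subpolynomial-size vertex removal need not always produce a nowhere dense remainder, so the naive peeling likely needs to be supplemented by a structural decomposition specific to almost nowhere dense classes, or alternatively by a direct adaptation of the proof of \Cref{thm:algo} that tolerates $n^{o(1)}$-slack in its internal structural parameters without requiring an explicit deletion.
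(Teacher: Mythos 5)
Your plan hinges on a ``core removal'' step that cannot work, as you yourself begin to suspect at the end. Take the class $\mathcal{C}$ where $G_n$ is a disjoint union of $n/\log n$ copies of $K_{\log n}$. This class is almost nowhere dense (the largest depth-$r$ clique minor has size $\log n$, which is subpolynomial), yet deleting any set $S$ of $n^\delta$ vertices can destroy at most $n^\delta$ of the $n/\log n$ cliques, so for large $n$ the remainder $G_n - S$ still contains intact cliques of size $\log n \to \infty$. The resulting class $\{\,G - S(G) : G \in \mathcal{C}\,\}$ therefore has no uniform bound on depth-$0$ clique minors and is somewhere dense. This is not a technicality to patch over: nowhere denseness requires the coloring-number bounds to hold on \emph{all subgraphs}, which is a class-level guarantee that no subpolynomial deletion can restore. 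The alternative you float in the last sentence---running the machinery of \Cref{thm:algo} directly with $n^{o(1)}$-slack in the parameters---also does not go through, because that algorithm builds the radius-$r$ decomposition tree via the Splitter game (\Cref{prop:game}), whose number of rounds $\ell$ is bounded only for nowhere dense classes, and recurses on subgraphs $G[X_v]$, which in an almost nowhere dense class can have wildly worse coloring numbers than $G$ itself. Your second step (enumerating $S$-profiles and marking $S$-adjacencies with unary predicates, absorbing the $\le |S|$ lost witnesses into the additive error) is fine as far as it goes, but it rests on a set $S$ that does not exist.

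The paper proves this corollary by an entirely different, non-recursive route. \Cref{thm:opti} works directly on $G$ and gives additive error $4^{|\phi|}\wcol_2(G)^{O(|\phi|)}$ in time $\wcol_{f(|\phi|)}(G)^{f(|\phi|)}n$; its proof uses functional representations, transitive fraternal augmentations, and low-treedepth colorings (all computed on $G$ itself, never on worsening subgraphs), and the approximation error comes from deliberately discarding a controlled family of clauses in \Cref{lem:res}. The corollary then follows by plugging in characterization~5 of \Cref{thm:characterization_almost_nowhere_dense}, namely that $\wcol_r(G)\le|G|^\varepsilon$ for almost nowhere dense classes. Note this also yields a stronger guarantee than what you were aiming for: on bounded-expansion subclasses the error stays constant rather than $n^\varepsilon$.
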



Talking about characterizations of almost nowhere dense graph classes,
we provide a plethora of different characterizations, similar to the
ones for bounded expansion and nowhere denseness. We show that a class
is almost nowhere dense classes if and only if measures like $r$-shallow
(topological) minor, forbidden $r$-subdivisions and (weak) $r$-coloring
numbers are bounded by $f(r,\varepsilon)n^\varepsilon$.

We also examine almost nowhere dense classes from an algorithmic point of view:
Whereas it is ``natural'' to consider monotonicity as closure property for
nowhere dense graph classes, it is similarly natural to consider closure under
edge deletion for almost nowhere dense graph classes. Consider a graph class
$\cal C$ which is closed under deleting edges. Then we show that the problem of
finding an $r$ times subdivided $k$-clique is fpt for every fixed~$r$ on $\cal
C$ if and only if $\cal C$ is almost nowhere dense.  In particular, for every
graph class that is not almost nowhere dense, but closed under deletion of
edges, there exists a number~$r$ such that finding $r$-subdivided
$k$-cliques cannot be solved in fpt time under some complexity
theoretic assumption, and, therefore, the FO model checking problem   
for formulas of the form 
$\exists\bx\phi(\bx)$ where $\phi(\bx)$ is quantifier free and has predicates
for adjacency and distance-$r$ adjacency, cannot be solved either.
The situation for distance-$r$
independent set is different:  Like finding an $r$-times subdivided clique it is
fpt on almost nowhere dense graph classes, but there exists a graph class which
is not almost nowhere dense and is closed under edge deletion where the problem
is fpt.

\subsection{Techniques}
For \Cref{thm:algo}, we use a novel dynamic programming technique on game trees
of Splitter games.  Splitter games were introduced by Grohe, Kreutzer, and
Siebertz~\cite{GroheKS17} to solve the first-order model-checking problem on
nowhere dense classes.  Together with their new concept of sparse neighborhood
covers they achieved small recursion trees of constant depth.

Splitter games can be understood as a localized variation of the cops and
robbers game for bounded treedepth (not to be confused with locally bounded
treedepth).  In contrast to \cite{GroheKS17} we apply a dynamic programming
approach, similar to the ones used on bounded tree-depth decompositions.  In
contrast to bounded treedepth, a graph decomposes into neighborhoods of small
radius instead of connected components when removing vertices according to
Splitter's winning strategy. A challenge is that the resulting
neighborhoods---in contrast to connected components---are not disjoint and lead
to double counting for counting problems (an issue that does not occur in
FO-model checking). To avoid double counting we introduce so-called cover
systems specifically for the subgraph ``induced'' by the solution.  The
existence of such cover systems shows that there is a disjoint selection of
small neighborhoods that cover all the vertices relevant to our counting
problem.  By solving a certain variation of the independent set problem, we can
find such a selection and can  safely combine the results of local parts of the
graph as in dynamic programs for bounded tree-depth.

To achieve our second result \Cref{cor:approx-awnd}, we adapt the techniques of the
proof for solving the corresponding exact counting problem on classes of bounded
expansion~\cite{DreierR2021}: We replace $\cnt y \phi(y\bx)$ by a sum of
gradually simpler counting terms until they are simple enough to be easily
evaluated.  During this process we use transitive fraternal augmentations and a
functional representation to encode necessary information into the graph, which
is needed during the above simplification of counting terms.  Along the way some
difficult to handle literals appear in only a few number of terms. Ignoring them
leads to the imprecision of our approximation.  As the number of functional
symbols in (almost) nowhere dense graph classes is not bounded by a constant as
it is the case in classes of bounded expansion, the techniques from
\cite{DreierR2021} have to adapted and extended. The main problem why their
proof cannot be used directly is that the replacement of formulas leads to
formulas of constant size in the case of bounded expansion, but to a
non-constant size in our case. Here we use some new tricks and observe, that
even though the transformed formulas can be of subpolynomial length, they can
basically be replaced by many short formulas.

\section{Preliminaries}

\iflongpaper
\subsection{Graphs.}
We obtain results for \emph{labeled graphs}. A labeled graph is a tuple $G = (V,
E, P_1, \dots, P_m)$, where $V$ is the vertex set, $E$ is the edge set and $P_1,
\dots, P_m \subseteq V$ the labels of $G$. The \emph{order $|G|$} of $G$ equals
$|V|$. We define the \emph{size $||G||$} of $G$ as $|V| + |E| + |P_1| + \dots +
|P_m|$. Unless otherwise noted, our graphs are undirected. For a directed graph
$G$, the indegree of a node $v$ equals the number of vertices $u$ such that
there is an arc $uv$ in $G$. The maximal indegree of all nodes in $G$ is denoted
by $\Delta^-(G)$.

While our results all work for labeled graphs, we will sometimes ignore labels
in long chains of transformations between structures in order to keep the proof
uncluttered.  The presence of labels, however, is never a real problem.

\subsection{Sparse Graph Classes}

A graph $G'$ is an \emph{$r$-subdivision} of a graph $G$ if $G'$ can be obtained
from $G$ by replacing all edges by vertex disjoint paths with \emph{exactly} $r$
inner vertices. Similarly, $G'$ is an \emph{${\le}r$-subdivision} of a graph $G$
if $G'$ is obtained from $G$ by replacing all edges by vertex disjoint paths
with \emph{at most} $r$ inner vertices. Here, the number of subdivisions may
differ for each edge. In $G'$, the vertices of $G$ are called \emph{principal
vertices} and the remaining ones are called \emph{subdivision vertices}. A graph
$G$ is a \emph{topological depth-$r$ minor} of a graph $H$ if an
${\le}r$-subdivision of $G$ is isomorphic to a subgraph of $H$.

\begin{definition}[Bounded expansion]
A graph class $\cal C$ has \emph{bounded expansion} if for all $r \in \N$ there
exists $t \in \N$ such that for all $G \in \cal C$, and all topological
depth-$r$ minors $H$ of $G$, $||H||/|H|\le t$.
\end{definition}

\begin{definition}[Nowhere dense]
A graph class~$\cal C$ is \emph{nowhere dense} if for all $r \in \N$ there
exists a $t \in \N$ such that no $G \in \cal C$ contains $K_t$ as a topological
depth-$r$ minor.  If a graph class is not nowhere dense it is called
\emph{somewhere dense.}
\end{definition}

\fi

\subsection{Weak coloring numbers}
A central concept in this paper are generalized coloring numbers, especially the
weak coloring numbers introduced by Kierstead and
Yang~\cite{coloringdefinition}. An \emph{ordering} $\pi$ of a graph $G$ is a
linear ordering of its vertex set and the set of all such orderings is denoted
by $\Pi(G)$.

\begin{definition}[Kierstead and Yang~\cite{coloringdefinition}]
A vertex $u \in V$ is \emph{weakly $r$-reachable} from a vertex $v \in V$ with
respect to $\pi \in \Pi(G)$ if $u \leq_\pi v$ and there exists a path $P$ from
$u$ to $v$ of length at most $r$ such that $u\leq_\pi w$ for each $w \in V(P)$.
The set of weakly $r$-reachable vertices from $v$ with respect to $\pi$ is
denoted by $\WReach_r[G, \pi, v]$.  Note that $v$ is always included in this
set. We write $\wdist_{G,\pi}(u,v) \leq d$ if $u \in \WReach_r[G, \pi, v]$ or $v
\in \WReach_r[G, \pi, u]$.

The \emph{weak $r$-coloring number} of a graph $G$ (and an ordering $\pi$) is
defined as
\begin{align*}
\wcol_r(G, \pi) &\coloneqq \max_{v \in V(G)} |\WReach_r[G, \pi, v]| \\
\wcol_r(G) &\coloneqq \min_{\pi \in \Pi(G(V))} \wcol_r(G, \pi) .
\end{align*}
\end{definition}

\begin{figure}
\centerline{\includegraphics{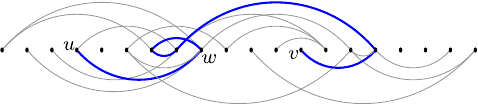}} \caption{$u$ is
weakly $5$-reachable from $v$ by the highlighted path, but $w$ is not weakly
reachable from~$v$.}
\label{fig:wreach}
\end{figure}

The weak $1$-coloring number of a graph is one more than its degeneracy, which
is the smallest number $d$ such that every subgraph $H \subseteq G$ has a vertex
of degree at most $d$ in~$H$. The weak coloring number can be seen as a
localized version of tree-depth, as
\[
\wcol_1(G) \leq \wcol_2(G) \leq \dots \leq \wcol_\infty(G) = \td(G)
~\hbox{\cite{sparsity}}.
\]
Figure~\ref{fig:wreach} contains an example of weak $r$-reachability. Weak
coloring numbers can be used to characterize nowhere dense graph classes:

\begin{proposition}[\cite{Zhu2009,nesetril2011600}]
A graph class $\calC$ is nowhere dense if and only if there exists a function
$f$ such that for every $r \in \N$, every $\varepsilon > 0$, every graph $G \in
\calC$ satisfies $\wcol_r(H) \leq f(r,\varepsilon) |H|^\varepsilon$ for every $H
\subseteq G$.
\end{proposition}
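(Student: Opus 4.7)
The plan is to prove the two directions separately. For the reverse direction, I would assume every subgraph $H \subseteq G$ with $G \in \calC$ satisfies $\wcol_r(H) \le f(r,\varepsilon)|H|^\varepsilon$ for all $r,\varepsilon$, and suppose toward contradiction that $\calC$ is somewhere dense: for some $r_0$, the cliques $K_t$ appear as topological depth-$r_0$ minors of graphs in $\calC$ for arbitrarily large $t$. Each such graph then contains, as a subgraph, an ${\le}r_0$-subdivision $H$ of $K_t$ with $|H| \le r_0 t^2 + t$. The key ingredient is the standard inequality $\wcol_1(M) \le \wcol_{c(s)}(H)$ whenever $M$ is a depth-$s$ minor of $H$, proved by designating the $\pi$-minimum vertex in each branch set as its representative and routing weakly reachable neighbors through the branch sets. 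Applying this with $M = K_t$ and $s = O(r_0)$ yields $t \le \wcol_{c(r_0)}(H) \le f(c(r_0),\varepsilon)(r_0 t^2 + t)^\varepsilon$, a contradiction for $\varepsilon < 1/2$ and $t$ large.

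For the forward direction, I would assume $\calC$ is nowhere dense and follow the strategy of Zhu and of \Nesetril and Ossona de Mendez. The plan is to introduce the $r$-admissibility $\mathrm{adm}_r(G)$, the minimum over orderings $\pi$ of the maximum over $v$ of the number of internally vertex-disjoint paths of length at most $r$ from $v$ to $\pi$-smaller vertices. A standard inequality due to \Dvorak bounds $\wcol_r(G)$ polynomially in $\mathrm{adm}_r(G)$, reducing the problem to showing that $\mathrm{adm}_r$ is subpolynomially bounded on all subgraphs $H \subseteq G$. This I would obtain by iteratively peeling off a vertex of small $r$-back-connectivity; if no such vertex exists, then Menger's theorem supplies many internally disjoint short paths, which can be recursively combined into a topological depth-$O(r)$ minor $K_t$, and nowhere denseness caps $t$, which in turn caps the peeling cost.

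The hard part will be the quantitative control in the forward direction. Each level of the recursion on $r$ inflates the exponent of $|H|$, so reaching the bound $f(r,\varepsilon)|H|^\varepsilon$ for \emph{every} $\varepsilon > 0$ requires carefully shrinking the working exponent at each recursive level and absorbing the loss into a rapidly growing function $f$. A further delicate point is ensuring that the extracted topological minor has depth bounded in terms of $r$ alone, and not growing with the peeling depth. I would follow the inductive construction in \cite{Zhu2009,nesetril2011600} to handle these details.
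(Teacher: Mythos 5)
The paper states this proposition without proof, citing Zhu and Ne\v{s}et\v{r}il--Ossona de Mendez; there is no in-paper argument to compare against, so your outline has to be judged on its own.

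Your reverse direction is essentially correct. You could actually avoid the general minor inequality $\wcol_1(M)\le\wcol_{c(s)}(H)$ altogether: since what you actually have is a ${\le}r_0$-subdivision of $K_t$, the paper itself (in the proof of \Cref{thm:characterization_almost_nowhere_dense}, step $\neg3\Rightarrow\neg5$) gives a direct two-line argument that $\wcol_{r_0+1}(H)\ge t$ by taking the $\pi$-maximal principal vertex $v$ and, for each other principal vertex $w$, the $\pi$-minimum vertex $m_w$ on the connecting path; the $m_w$ are distinct and all lie in $\WReach_{r_0+1}[H,\pi,v]$. Your branch-set formulation is correct in spirit but slightly mismatched with the topological-minor setting you are actually in, and the one-line justification of $\wcol_1(M)\le\wcol_{c(s)}(H)$ for arbitrary depth-$s$ minors $M$ glosses over the fact that the $\pi$-minimum vertex on a connecting path need not lie in the branch set of the $\pi$-smaller endpoint; the specialization to $M=K_t$ sidesteps this.

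The forward direction is where the gap is. You correctly identify the standard route -- bound $\wcol_r$ polynomially by $r$-admissibility via \Dvorak's inequality, then control admissibility by a peeling argument that, when it stalls, extracts a shallow topological clique minor of bounded depth -- and you correctly flag that the hard part is keeping the exponent from inflating with the recursion depth and keeping the extracted minor's depth a function of $r$ alone. But you then explicitly defer exactly that hard part to \cite{Zhu2009,nesetril2011600}. As a reconstruction of the proof's architecture this is accurate, but it is not a self-contained proof of the direction that actually carries the mathematical weight; the contradiction argument you give in full is the easy direction.
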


\iflongpaper 


When weak coloring numbers are used within an algorithm, it is often essential
to find an ordering of the vertices of a graph with a small weak coloring
number. The situation is similar to efficient algorithms on tree decompositions:
First a tree decomposition has to be found. Even though computing $\wcol_r(G)$
is NP-hard for $r \geq 3$ in general~\cite{GroheKRSS18}, it is possible to
compute in parameterized quasi-linear time orderings which are approximately
optimal:

    

\def\corgks{\cite[Cor.~5.8]{GroheKS17}}
\begin{proposition}[Grohe, Kreutzer, Siebertz \corgks]
\label{prop:nd-ordering}
Let $\calC$ be a nowhere dense graph class.
There is a function $f$ such that for all $r \in \N$, $\varepsilon > 0$
and $G \in \calC$ with $|G| \geq f(r, \varepsilon)$,
an ordering $\pi$ of $G$ with
$|\WReach_r[G, \pi, v]| \leq |G|^\varepsilon$ for all $v \in V(G)$ 
can be found in time $f(r, \varepsilon) \cdot |G|^{1+\varepsilon}$.
\end{proposition}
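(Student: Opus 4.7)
The plan is to reduce the problem to finding an ordering with small \emph{$r$-admissibility}. For an ordering $\pi$ and vertex $v$, define $\text{adm}_r(G,\pi,v)$ to be the maximum number of internally vertex-disjoint paths of length at most $r$ from $v$ to $\pi$-smaller vertices, with all internal vertices $\pi$-larger than $v$, and let $\text{adm}_r(G,\pi) = \max_v \text{adm}_r(G,\pi,v)$. Dvořák's well-known sandwich inequality $\text{adm}_r(G,\pi) \leq \wcol_r(G,\pi) \leq \text{adm}_r(G,\pi)^r$, combined with the weak-coloring characterisation of nowhere dense classes cited just above the proposition but invoked with the smaller exponent $\varepsilon/r$, yields a function $g$ such that $\min_\pi \text{adm}_r(G,\pi) \leq g(r,\varepsilon)\,|G|^{\varepsilon/r}$ for every $G \in \calC$ whose size exceeds a threshold. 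It therefore suffices to \emph{algorithmically} produce an ordering witnessing an admissibility bound of roughly $|G|^{\varepsilon/r}$ and then appeal to the sandwich.

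Next I would compute $\pi$ by a greedy peeling procedure: initialise $S := V(G)$; at each step locate a vertex $v \in S$ whose back-admissibility in $G[S]$---the maximum number of internally vertex-disjoint length-${\leq}r$ paths from $v$ to $S \setminus \{v\}$---is minimum, put $v$ into the largest still-empty position of $\pi$, and remove $v$ from $S$. Since the minimum back-admissibility in $G[S]$ is at most $\text{adm}_r(G[S]) \leq g(r,\varepsilon)|G|^{\varepsilon/r}$, and because back-admissibility in $G[S]$ is exactly $\text{adm}_r(G,\pi,v)$ in the completed ordering, this forces $\text{adm}_r(G,\pi) \leq g(r,\varepsilon)|G|^{\varepsilon/r}$, hence $|\WReach_r[G,\pi,v]| = |\WReach_r[G,\pi,v]| \leq g(r,\varepsilon)^r\,|G|^\varepsilon$ for every $v$. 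Rescaling the exponent to $\varepsilon/2$ at the outset and enlarging the threshold $f(r,\varepsilon)$ so that $g(r,\varepsilon)^r \leq |G|^{\varepsilon/2}$ turns this into the clean bound $|G|^\varepsilon$ demanded by the statement.

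The hard part will be matching the almost-linear running time $f(r,\varepsilon) \cdot |G|^{1+\varepsilon}$. A naive implementation recomputes a maximum flow for every remaining vertex in every round, which is at least quadratic. The key observation---really the heart of the argument in \cite{GroheKS17}---is locality: deleting $v$ only affects back-admissibility of vertices at $G[S]$-distance at most $r$ from $v$, and throughout the procedure every such $r$-neighborhood has size at most $\wcol_r(G[S]) \leq g(r,\varepsilon)^r\,|G|^\varepsilon$ by the very invariant we are maintaining. Each local max-flow recomputation then operates on an auxiliary graph of size $|G|^{O(\varepsilon)}$ and completes in time $|G|^{O(\varepsilon)}$; maintaining the remaining vertices in buckets indexed by their current back-admissibility lets us retrieve the minimum in constant time per step. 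Over $|G|$ removals with $|G|^{O(\varepsilon)}$ updates each, this totals $f(r,\varepsilon)\,|G|^{1+O(\varepsilon)}$ time, and a final rescaling of $\varepsilon$ absorbs the constant in the exponent. The most delicate point is that the neighborhood-size bound depends on the admissibility invariant itself, so the runtime analysis has to proceed inductively along the peeling order, with the sub-threshold case $|G| < f(r,\varepsilon)$ handled trivially and folded into the fpt factor.
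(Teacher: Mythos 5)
Note first that the paper does not actually prove this statement: it is imported verbatim as \corgks, so there is no ``paper's own proof'' to compare against. (The nearby \Cref{lem:wcol-approx}, which the paper \emph{does} prove, follows a genuinely different route, through transitive fraternal augmentations via Siebertz's Thm.~4.6.4, \cite[Lem.~6.7]{GroheKS17} and \cite[Lem.~7.11]{sparsity}; it trades the sharp $|G|^\varepsilon$ bound for applicability outside nowhere dense classes.) Your plan --- sandwich via $r$-admissibility, greedy peeling, locality for the running time --- is the spirit of the cited Grohe--Kreutzer--Siebertz argument, so I will comment on where the write-up itself breaks.

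The greedy measure is misstated in a way that matters. You define the back-admissibility of $v$ relative to $S$ as the maximum number of internally disjoint length-${\le}r$ paths from $v$ to $S\setminus\{v\}$ \emph{in $G[S]$}, i.e.\ with internal vertices inside $S$, and you compare its minimum to $\text{adm}_r(G[S])$. But the quantity that equals $\text{adm}_r(G,\pi,v)$ in the finished order is the one whose internal vertices lie in $V\setminus S$, the already-placed suffix --- those are exactly the $\pi$-larger vertices. These are different quantities, and the greedy has to minimise the latter; moreover, the correct upper bound for its minimum over $v\in S$ is $\text{adm}_r(G)$, not $\text{adm}_r(G[S])$. (To see the bound: fix an optimal order $L$ of $G$, take $v$ to be the $L$-maximal vertex of $S$, and truncate each of a too-large disjoint family of such paths at the first vertex ${\le}_L v$ it meets; this produces more than $\text{adm}_r(G)$ internally disjoint witnesses for $\text{adm}_r(G,L,v)$, a contradiction. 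The bound $\text{adm}_r(G[S])$ has no bearing because the relevant paths leave $S$: for $K_{1,2}$ with the centre already removed and $S$ the two leaves, the correct back-admissibility of a leaf is $1$ while $\text{adm}_r(G[S])=0$.) Your running-time sketch has an analogous defect: the claim that ``every such $r$-neighborhood has size at most $\wcol_r(G[S])\le g(r,\varepsilon)^r\,|G|^\varepsilon$'' is false --- a star has $\wcol_r=2$ for every $r$ but a $1$-ball of size $n$, so weak colouring numbers do not bound ball sizes. What one can say is that any $u$ whose back-admissibility changes when $v$ is removed satisfies $u\in\WReach_r[G,\pi,v]$ for the order being built, so the invariant bounds the number of \emph{affected} vertices, not neighbourhood sizes; turning that into an amortised $f(r,\varepsilon)\,|G|^{1+\varepsilon}$ bound is precisely where the real effort in \cite{GroheKS17} lies, and as written this part of your argument is a hope rather than a step.
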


As we are dealing with somewhere dense graph classes in \Cref{sec:approx}, we cannot use
Proposition~\ref{prop:nd-ordering} to construct orderings with small generalized coloring numbers.
Revisiting the proof of \Cref{prop:nd-ordering},
we notice that
even without the assumption of nowhere denseness,
one can find in linear time
orderings which approximate the weak coloring numbers:

\begin{lemma}\label{lem:wcol-approx}
There is a computable function $f$ and an algorithm that computes for
a graph $G$ a vertex ordering $\pi$ such that $\wcol_r(G,\pi)\leq
\wcol_{f(r)}(G)^{f(r)}$ for every $r\in\N$.  

The running time of the construction is $\wcol_{f(r)}(G)^{f(r)}|G|$.
\end{lemma}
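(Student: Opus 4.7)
The plan is to revisit the proof of \Cref{prop:nd-ordering}, that is \corgks, without invoking the nowhere-dense assumption, and simply track how the bounds degrade. That construction produces $\pi$ by a greedy peeling: at each step the vertex minimising an admissibility-style parameter in the current residual subgraph is appended to the end of the still-unbuilt suffix of $\pi$ and then deleted. The peeling itself is oblivious to whether $G$ belongs to a nowhere-dense class; nowhere-denseness is used only at the very end of~\corgks\ to convert a polynomial dependence on $\wcol_{r'}(G)$ into the $n^{\varepsilon}$ bound.

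For the bookkeeping, let $\mathrm{adm}_r(G,\pi,v)$ denote the maximum number of internally vertex-disjoint paths of length at most $r$ from $v$ to vertices strictly earlier than $v$ in $\pi$. Two classical facts (see~\cite{sparsity}) link admissibility to weak coloring numbers. First, $\mathrm{adm}_r(G,\pi,v) \leq |\WReach_r[G,\pi,v]|$, since each disjoint-path witness sits inside the weak reachability set. Second, there is a computable $f$ with $\wcol_r(G,\pi) \leq f(r)\cdot\bigl(\max_v \mathrm{adm}_{f(r)}(G,\pi,v)\bigr)^{f(r)}$ for \emph{every} ordering $\pi$, proved by decomposing each weak reachability path into a bounded number of admissibility witnesses along $\pi$, without using any optimality of $\pi$.

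The first step of the plan is to run the peeling algorithm of~\corgks\ with target radius $r' := f(r)$. A standard exchange argument shows that the resulting $\pi$ satisfies $\max_v \mathrm{adm}_{r'}(G,\pi,v) \leq \mathrm{adm}_{r'}(G) \leq \wcol_{r'}(G)$. Plugging this into the second fact gives $\wcol_r(G,\pi) \leq f(r)\cdot \wcol_{f(r)}(G)^{f(r)}$, which is the claimed bound after enlarging $f$. For the running time, computing $\mathrm{adm}_{r'}$ at a single vertex reduces to a bounded-depth flow computation inside its $r'$-ball and costs $\wcol_{r'}(G)^{O(r')}$; with the incremental data structures of~\corgks\ each vertex is processed once, which yields the stated $\wcol_{f(r)}(G)^{f(r)}|G|$.

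The main point requiring care is that the polynomial equivalence between $\wcol_r$ and $\mathrm{adm}_{f(r)}$ must hold pointwise for the particular $\pi$ returned by the peeling, not merely for the optimum. Fortunately, the textbook proof of the equivalence already bounds $\wcol_r(G,\pi)$ in terms of $\max_v \mathrm{adm}_{f(r)}(G,\pi,v)$ for an arbitrary fixed $\pi$, by walking along each weak reachability path and extracting admissibility witnesses; granting this uniform form, the only deviation from~\corgks\ is to keep the polynomial expression $\wcol_{f(r)}(G)^{f(r)}$ on the right-hand side rather than collapsing it to $n^{\varepsilon}$.
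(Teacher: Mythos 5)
Your plan retraces the admissibility route that underlies the actual argument behind Corollary~5.8 of Grohe–Kreutzer–Siebertz, whereas the paper's own proof follows a different chain through transitive fraternal augmentations: it invokes Theorem~4.6.4 of Siebertz's thesis to compute, in time $q_r\bigl(\nabla_{2^{r+1}}(G)\bigr)\cdot n$, an $r$-step transitive fraternal augmentation $\vec{G}_r$ with bounded indegree; then applies Lemma~6.7 of Grohe–Kreutzer–Siebertz (a tfa of indegree $d$ yields, constructively, an ordering with $\wcol_r \le 2(d+1)^2$); and finally closes the loop via $\nabla_{(r-1)/2}(G)+1 \le \wcol_r(G)$ from the sparsity book. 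Both strategies give a bound of the same shape; the augmentation route buys an essentially free running-time bound ($\|\vec{G}_r\|$ directly controls the work), while the greedy-admissibility route is more elementary and combinatorial.

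Two ingredients of your version need more careful support than a blanket citation of the sparsity book. Your ``second fact'' is a \emph{pointwise} inequality $\wcol_r(G,\pi)\le f(r)\cdot\bigl(\max_v \mathrm{adm}_{f(r)}(G,\pi,v)\bigr)^{f(r)}$ for \emph{every} ordering $\pi$. The standard statements relate the optimal parameters $\wcol_r(G)$, $\col_r(G)$, and $\mathrm{adm}_r(G)$; the pointwise form does go through (via $\wcol_r(G,\pi)\le\col_r(G,\pi)^r$ and a pointwise degradation of Dvo\v{r}\'ak's $\col_r\!\le\!1+\mathrm{adm}_r(\mathrm{adm}_r-1)^{r-1}$), but you are asserting it as textbook and should pin it down, since it is exactly where the whole argument lives. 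Second, the running-time claim that ``computing $\mathrm{adm}_{r'}$ at a single vertex reduces to a bounded-depth flow inside its $r'$-ball and costs $\wcol_{r'}(G)^{O(r')}$'' is not self-evident: the relevant search space when a vertex is peeled is governed by the current suffix, not by a static $r'$-ball, and the near-linear bound genuinely depends on the incremental data structures from Grohe–Kreutzer–Siebertz. You lean on those as a black box; the paper sidesteps this entirely by reading the time bound off the augmentation size. Neither is a fatal flaw, but they are precisely the spots where the paper opted for a different, more self-contained derivation. (Your exchange argument that the right-to-left peeling attains $\mathrm{adm}_{r'}(G)$ exactly is fine: truncate each backconnectivity path at its first vertex that is $\pi^*$-smaller than the peeled vertex; the truncations stay internally disjoint and witness $\pi^*$-admissibility.)
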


\begin{proof}
    We use Theorem~4.6.4 of~\cite{Siebertz2016} stating: {\it``For every
    integer $r>0$ there is a polynomial $q_r(x)$ such that for every
    graph~$G$ one can compute in time $q_r\bigl(\nabla_{2^{r+1}}(G)\bigr)$
    an orientation $\vG$ of $G$ and a transitive fraternal augmentation
    $\vG_1\subseteq\ldots\subseteq\vG_r$ with $\vG_1=\vG$ such that
    $\Delta^-(\vG_r)\leq q_r\bigl(\nabla_{2^{r+1}}(G)\bigl)$.''}
    The proof is not contained in~\cite{Siebertz2016}, but follows easily
    from Corollary~5.3 in~\cite{NesetrilM2008a}.  A close look at the
    proof reveals that the running time is indeed
    $q_r\bigl(\nabla_{2^{r+1}}(G)\bigr)n$, where $q_r$ is a polynomial
    that is computable given~$r$.
    
    Using this result yields a digraph $\vG_r$ such that 
    $\Delta^-(\vG_r)\leq q_r\bigl(\nabla_{2^{r+1}}(G)\bigl)$.
    Grohe, Kreutzer, and Siebertz show that if $\vH$ is an $r$-transitive
    fraternal augmention of a graph~$G$ with $\Delta^-(\vH)\leq d$, then
    $\wcol_r(G)\leq 2(d+1)^2$~\cite[Lemma~6.7]{GroheKS17}.  Moreover,
    in the proof of this lemma it is shown that an order can be constructed
    in linear time that witnesses this bound on the weak coloring number.
    
    Hence, we compute a linear order $\pi$ on the vertices of $\vG_r$
    such that
    \begin{equation}\label{equ:wcolnabla}
    \wcol_r(G,\pi)\leq
    2\bigl(q_r\bigl(\nabla_{2^{r+1}}(G)\bigr)+1\bigr)^2.
    \end{equation}
    The grad $\nabla_{(r-1)/2}(G)$ is bounded by the weak coloring number via
    $\nabla_{(r-1)/2}(G)+1\leq\wcol_r(G)$~\cite[Lemma~7.11]{sparsity}.
    Combining this bound with the bound in~(\ref{equ:wcolnabla})
    yields $\wcol_r(G,\pi)\leq2\bigl(q_r(\wcol_{2^{r+2}+1}(G)+1)\bigr)^2
    \leq\wcol_{2^{r+2}+1}(G)^{f(r)}$ for some~$f(r)$.
    
    Altogether we have constructed an ordering $\pi$ in time linear in
    $||\vG_r||\leq\Delta^-(\vG_r)|G|\leq \wcol_{f(r)}(G)^{f(r)}|G|$
    such that $\wcol_r(G,\pi)\leq \wcol_{f(r)}(G)^{f(r)}$.
\end{proof}
\fi

\subsection{Splitter game}
We will use a game-based characterization of nowhere denseness introduced by
Grohe, Kreutzer and Siebertz \cite{GroheKS17}. Given a graph $G$, a radius $r$
and a number of rounds $\ell$, \emph{the $(\ell, r)$-Splitter game} on $G$ is an
alternating game between two players called \emph{Splitter} and
\emph{Connector}. The game starts on $G_0 = G$. In the $i$th round, the
Connector chooses a vertex $v_i$ from $G_i$. Then the Splitter chooses a vertex
$s_i$ from the radius-$r$ neighborhood of $v_i$ in $G_i$. The game continues on
$G_{i+1} = G_i[v_i] - s_i$. Splitter wins if after $\ell$ rounds the graph is
empty. Grohe, Kreutzer and Siebertz showed that nowhere dense graph classes can
be characterized by Splitter games:
\begin{proposition}{\rm \cite{GroheKS17}}\label{prop:game} Let $\C$ be a nowhere
    dense class of graphs. Then, for every $r > 0$, there is $\ell > 0$, such
    that for every $G \in \C$, Splitter has a strategy to win the $(\ell,
    r)$-splitter game on $G$.
\end{proposition}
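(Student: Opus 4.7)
The plan is to prove \Cref{prop:game} by exhibiting an explicit winning strategy for Splitter, exploiting the equivalent characterization of nowhere denseness via bounded weak coloring numbers: for every $r'$ and every $\epsilon > 0$, $\wcol_{r'}(G) \leq f(r', \epsilon) |G|^\epsilon$ holds for every $G \in \C$. Combined with the exclusion characterization (for every $r'$ there is $t = t(r', \C)$ such that no $G \in \C$ contains $K_t$ as a depth-$r'$ topological minor), this provides the structural handle needed to bound the number of rounds by a constant depending only on $r$ and $\C$.

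The concrete strategy I would use is the following. At the start of round $i$, Splitter computes an approximately optimal ordering $\pi_i$ of $G_i$ with small $\wcol_{2r+1}(G_i, \pi_i)$ using \Cref{lem:wcol-approx}; given Connector's move $v_i \in G_i$, Splitter plays the $\pi_i$-smallest vertex $s_i \in N_r^{G_i}(v_i)$. The key step is to prove a monotonicity lemma: if no graph in $\C$ contains $K_t$ as a depth-$(2r+1)$ topological minor, then $G_{i+1} = G_i[N_r(v_i)] \setminus \{s_i\}$ cannot contain $K_{t-1}$ as such a minor. The argument is contrapositive: a $K_{t-1}$-subdivision inside $G_{i+1}$ sits entirely in $N_r(v_i)$, and since $s_i$ is the $\pi_i$-minimum in this ball, every principal vertex of the subdivision weakly $(2r+1)$-reaches $s_i$; stitching $s_i$ onto those reachability paths yields an extension to a $K_t$-subdivision of depth $\leq 2r+1$ in $G_i$, contradicting the exclusion. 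Iterating this monotonicity, the excluded-clique parameter strictly decreases each round, so Splitter wins in $\ell = \ell(r, \C) \leq t(2r+1, \C) + 1$ rounds, independent of $|G|$.

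The main obstacle I expect is the clique-extension step in the monotonicity lemma: the short paths produced by weak reachability from $s_i$ to the various principal vertices must be made internally disjoint, and disjoint from the existing subdivision paths of the putative $K_{t-1}$, or at least one must pay only a bounded overhead per conflict. This is where the slack between depth $r$ (the radius Connector uses) and depth $2r+1$ (the subdivision depth tracked in the exclusion) pays off, as it gives enough room to reroute paths through fresh witnesses. If this direct combinatorial extension turns out to be too fragile, the fallback is to replace it by uniform quasi-wideness of nowhere dense classes, which provides, for every $r, k$, a bound $N$ and $s$ such that any sufficiently large vertex set admits an $r$-independent subset of size $k$ after deleting at most $s$ vertices; recursing on $k$ then forces Splitter's victory in a number of rounds depending only on $r$ and the quasi-wideness parameters of $\C$.
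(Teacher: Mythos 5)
The paper does not supply its own proof of this statement; it is cited from \cite{GroheKS17}, so you are on your own here, and the route you sketch diverges from theirs.

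There is a genuine gap in your ``monotonicity lemma.'' You want: if $G_i$ excludes $K_t$ as a depth-$(2r{+}1)$ topological minor, then $G_{i+1}=G_i[N_r(v_i)]-s_i$ excludes $K_{t-1}$. Your proposed proof stitches $s_i$ onto a putative $K_{t-1}$-subdivision in $G_{i+1}$ using the fact that $s_i$ is $\pi_i$-minimal in the ball, so every principal vertex $p_j$ weakly $2r$-reaches $s_i$. That reachability is real, but the witnessing paths all go from $p_j$ to $v_i$ to $s_i$ (that is precisely how you argue they stay in the ball), so they share the intermediate vertex $v_i$ and are not internally disjoint. Worse, $s_i$ may simply have too small a degree to serve as a branch vertex at all: take $G_i$ to be an $r$-subdivision of $K_{t-1}$ with one extra pendant vertex $s$ attached to a branch vertex $v_i$; here $G_i$ excludes $K_t$ at any bounded depth, yet $G_i[N_r(v_i)]-s$ still contains the full $K_{t-1}$-subdivision, and there is no way to route $t-1$ disjoint paths out of a degree-one vertex. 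The ``slack'' between $r$ and $2r{+}1$ gives you longer paths, not more of them, so it does not repair disjointness. In short, deleting a single well-chosen vertex cannot be expected to lower the excluded-topological-clique parameter by one, and no choice of ordering fixes this.

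Your fallback via uniform quasi-wideness is much closer to a viable route and is in the spirit of arguments in the sparsity literature, but as written it is a pointer, not a proof: you would need to make precise what set you apply quasi-wideness to, how the bounded deletion set $S$ interacts with Splitter's single move per round, and how the recursion on $k$ terminates with a bound $\ell(r,\C)$ independent of $|G|$. The proof in Grohe--Kreutzer--Siebertz does not proceed by a per-round decrement of an excluded-clique bound; instead it analyzes a full play $v_1,\dots,v_\ell$ (all pairwise within distance $r$ in the corresponding intermediate graphs, hence in $G$) and extracts a bounded-depth subdivided clique from a long play by a global combinatorial argument, which is exactly the disjointness work your local lemma tries to sidestep. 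I would either carry out the quasi-wideness route in full detail, or follow the original extraction argument; the per-round monotonicity lemma is not salvageable as stated.
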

Note that a winning move of Splitter in a current play can be computed in almost
linear time \cite[Remark 4.7]{GroheKS17}.

\subsection{Sparse neighborhood covers}

Even though the splitter game ends after a bounded number of rounds $\ell$ for nowhere dense classes, the game tree, i.e. the tree spanned by all possible plays of Splitter and Connector, can still be large, e.g. in the dimensions of $n^\ell$. 
To make the game trees small and useful for algorthmic use, Grohe, Kreutzer and Siebertz introduced sparse neighborhood covers \cite{GroheKS17}. These covers group ``similar'' neighborhoods into a small number cluster of bounded radius. These clusters can be used instead of the neighborhoods, reducing the size of the game tree to $O(n^{1+\varepsilon})$.

\begin{definition}\cite{GroheKS17} For a radius $r \in \N$, an
\emph{$r$-neighborhood cover $\mathcal X$} of a graph $G$ is a set of connected
subgraphs of $G$ called \emph{clusters}, such that for every vertex $v \in V(G)$
there is some $X \in \mathcal X$ with $N_r[v] \subseteq V(X)$. The degree of $v$
in $\mathcal X$ is the number of clusters that contain $v$ and the radius of
$\X$ is the maximal radius of a cover in $\X$. A class $\C$ admits \emph{sparse
neighborhood covers} if there exists $c \in \N$ and for all $r \in \N$ and all
$\varepsilon >0$ a number $d = d(r,\varepsilon)$ such that every graph $G \in
\C$ admits an $r$-neighborhood cover of radius at most $c$ and degree at most $d
|G|^\varepsilon$.
\end{definition}

\begin{proposition}\label{prop:nhood-cover} {\rm\cite{GroheKS17}} Every nowhere
dense class $\C$ of graphs admits a sparse neighborhood cover. For a graph $G
\in \C$ and $r \in \N$ such an $r$-neighborhood cover can be computed in time
$f(r,\varepsilon) n^{1+\varepsilon}$ for every $\varepsilon>0$.
\end{proposition}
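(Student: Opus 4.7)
The plan is to build the cover by the standard ``home vertex'' construction driven by an ordering with small weak reachability sets. First, invoke \Cref{prop:nd-ordering} to compute, in time $f(r,\varepsilon)|G|^{1+\varepsilon}$, an ordering $\pi$ of $V(G)$ with $|\WReach_{2r}[G,\pi,v]| \leq |G|^\varepsilon$ for every $v \in V(G)$. For each vertex $v$, let $h(v)$ denote the $\pi$-minimum vertex of $\WReach_{2r}[G,\pi,v]$, which I will call the \emph{home} of $v$; by definition there is a $\pi$-monotone path $P_v$ of length at most $2r$ witnessing $h(v) \in \WReach_{2r}[G,\pi,v]$.

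Next, for each vertex $u$ I would form a cluster $X_u$ as the connected subgraph given by the union of $N_r[v] \cup V(P_v)$ over all $v$ with $h(v) = u$. By construction, $N_r[v] \subseteq V(X_{h(v)})$, so every $r$-neighborhood is covered. Since each participating $v$ has $d(v,u) \leq 2r$ via $P_v$, every vertex of $X_u$ lies within distance $3r$ of the central vertex $u$, so the radius of the cover is bounded by the constant $c := 3r$, independent of $\varepsilon$ and $|G|$.

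The main obstacle, as usual for this construction, is the degree bound. If $w \in X_u$, then some $v$ with $h(v) = u$ satisfies $w \in N_r[v] \cup V(P_v)$, and chaining the path from $w$ to $v$ with $P_v$ yields a walk of length at most $3r$ from $w$ to $u$. The $P_v$ portion is $\pi$-above $u$ by definition of $h$, but the portion from $w$ to $v$ carries no a-priori ordering guarantee; naively this does not yet place $u$ in $\WReach_{3r}[G,\pi,w]$. I would resolve this by redefining $h(v)$ as the $\pi$-minimum vertex of a slightly larger weakly reachable set, e.g.\ $\WReach_{3r}[G,\pi,v]$, and by keeping in $X_u$ only those $w$'s whose shortest walk to the witnessing $v$ stays $\pi$-above $u$; the remaining $w$'s are charged to a different cluster whose central vertex is $\pi$-smaller. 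Once the definitions are tightened this way, every cluster $X_u$ containing $w$ corresponds to a distinct vertex of $\WReach_{O(r)}[G,\pi,w]$, so the degree of $w$ in the cover is bounded by $|G|^\varepsilon$ (after rescaling $\varepsilon$).

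Finally, for the running time, the ordering is computed in $f(r,\varepsilon)|G|^{1+\varepsilon}$ time, the sets $\WReach_{O(r)}[G,\pi,v]$ can be enumerated by a bounded-depth BFS respecting $\pi$, in total time $|G|\cdot |G|^\varepsilon \cdot f(r) = f(r,\varepsilon)|G|^{1+\varepsilon}$, and the clusters $X_u$ are assembled in a single pass over these sets. Altogether the construction fits within the promised $f(r,\varepsilon)|G|^{1+\varepsilon}$ bound, yielding the required sparse $r$-neighborhood cover.
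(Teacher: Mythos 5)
Your home-vertex construction is a genuine alternative route; the paper itself only cites \cite{GroheKS17} and, in \Cref{def:Xr}, spells out the cover used there as $\mathcal{X}_r = \{X_{2r}[G,\pi,v] \mid v \in V(G)\}$ with $X_{2r}[G,\pi,v] = \{u \mid v \in \WReach_{2r}[G,\pi,u]\}$; no assignment of vertices to home clusters is needed. With that definition the degree bound is immediate (the clusters containing $u$ are indexed by the at most $|G|^\varepsilon$ vertices of $\WReach_{2r}[G,\pi,u]$), and the covering property follows by taking $v = \min_\pi N_r[w]$ and noting that for any $u \in N_r[w]$ the walk from $u$ through $w$ to $v$ stays inside $N_r[w]$, hence $\pi$-above~$v$.

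The gap in your write-up is in the degree-bound paragraph. You correctly ask whether the segment from $w$ to the witness $v$ stays $\pi$-above $u = h(v)$; in fact it does, and your proposed fix is neither needed nor sound. For the positive part: take a shortest path $Q$ from $w$ to $v$ (length $\le r$) and let $z$ be its $\pi$-minimum vertex. The subpath of $Q$ from $z$ to $v$ is $\pi$-above $z$ and has length at most $r$, so $z \in \WReach_r[G,\pi,v] \subseteq \WReach_{2r}[G,\pi,v]$; since $h(v)$ is by definition the $\pi$-minimum of that set, $z \ge_\pi h(v)$, hence all of $Q$ is $\pi$-above $h(v)$. Concatenating $Q$ with $P_v$ then witnesses $h(v) \in \WReach_{3r}[G,\pi,w]$, and your degree bound holds once \Cref{prop:nd-ordering} is invoked with radius $3r$ rather than $2r$. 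For the negative part: your patch of ``keeping in $X_u$ only those $w$'s whose shortest walk to the witnessing $v$ stays $\pi$-above $u$'' and ``charging the remaining $w$'s to a different cluster'' is unsound, because the cover property requires the \emph{entire} set $N_r[v]$ to lie in a single cluster; evicting some $w \in N_r[v]$ from $X_{h(v)}$ destroys exactly the property you must preserve, and nothing in the write-up identifies which cluster would then contain all of $N_r[v]$ or why it would still have bounded radius. Fortunately, as shown, that patch is never triggered.
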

Indeed, the existence of such covers is another characterization of nowhere
dense classes.

\begin{definition}\label{def:Xr} For a graph $G$ with a vertex order $\pi$, $r
\in \N$ and a vertex $v \in V(G)$, we define $X_r[G,\pi,v]$ as $\{ u \in V(G)
\mid v \in \WReach_{r}[G,\pi,u]\}$. We let $\mathcal{X}_r = \{ X_{2r}[G,\pi,v]
\mid v \in V(G) \}$.
\end{definition}
From the proof of \Cref{prop:nhood-cover} it follows, that the set family
$\mathcal{X}_r$ is such a sparse neighborhood cover.

\subsection{Low treedepth colorings}
A crucial algorithmic tool in the study of bounded expansion and nowhere dense
graph classes are \emph{low treedepth colorings}, also known as $r$-centered
colorings.

\begin{definition}
    An $r$-treedepth coloring of a graph $G$ is a coloring of vertices of $G$ 
    such that any $r' \leq r$ color classes induce a subgraph with treedepth at 
    most $r'$.
\end{definition}

The following statement by Zhu \cite{Zhu2009} is modified such that it is
constructive and holds also for a given vertex ordering $\pi$. It follows from
the original proof.
\begin{proposition}[{\cite[Proof of Thm.\ 2.6]{Zhu2009}}]
    \label{prop:treedepth-col} If $\pi$ is a vertex ordering of a graph $G$ with
    $\wcol_{2^{r-2}}(G, \pi) \leq m$, an $r$-treedepth coloring can be computed
    with at most $m$ colors in time~$O(m n)$.
\end{proposition}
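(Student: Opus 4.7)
The plan is to follow Zhu's construction~\cite{Zhu2009} in a form that respects a given ordering $\pi$.  First I would describe the coloring procedure: traverse the vertices of $G$ in the order $\pi$, and for each vertex $v$ look up $\WReach_{2^{r-2}}[G,\pi,v]$ and assign $v$ a color from $\{1,\ldots,m\}$ distinct from the colors already assigned to the (at most $m-1$) other vertices of this set.  By hypothesis the set has size at most $m$, so a valid color always exists.  The family of weak-reachability sets of depth $2^{r-2}$ for all vertices can be produced in total time $O(mn)$ by a standard BFS-style procedure along $\pi$ (as used elsewhere in the sparsity literature), and the greedy color assignment then adds only $O(m)$ work per vertex, yielding the overall $O(mn)$ bound.

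For correctness, I would prove by induction on $r'$ that for every $r'\le r$ and every choice of $r'$ color classes $C_{i_1},\ldots,C_{i_{r'}}$, the induced subgraph $H = G[C_{i_1}\cup\cdots\cup C_{i_{r'}}]$ satisfies $\td(H)\le r'$.  The right strengthening to carry through the induction is a distance invariant: in every connected component $K$ of $H$, the $\pi$-minimum vertex $v$ of $K$ is weakly $2^{r'-1}$-reachable in $G$ from every other vertex of $K$ along a path that stays inside $K$.  Given this invariant, the treedepth decomposition of $K$ is built by taking $v$ as the root: removing $v$ from $K$ leaves components whose induced subgraphs use at most $r'-1$ of the chosen color classes, and detours around $v$ at most double intra-component distances, preserving the invariant at level $r'-1$.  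The inductive hypothesis then supplies treedepth at most $r'-1$ for these sub-components, and attaching them below $v$ gives $\td(K)\le r'$.

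For the base case $r'=1$, any edge $uv$ with $u<_\pi v$ satisfies $u\in\WReach_{1}[G,\pi,v]\subseteq\WReach_{2^{r-2}}[G,\pi,v]$, so the greedy step forces $u$ and $v$ to receive distinct colors; hence a single color class is an independent set of treedepth $1$.  The main obstacle is controlling the cumulative blow-up of distances as minimum vertices are peeled off across the $r$ levels of the induction: each peeling step can only at most double the relevant distances, and the induction requires that the final radius still be witnessed by weak reachability in $G$ (not just in the shrinking induced subgraphs).  Aligning these numerical constants is the delicate bookkeeping at the heart of Zhu's argument, and it is precisely this bookkeeping that the exponent $2^{r-2}$ in the hypothesis is chosen to accommodate.
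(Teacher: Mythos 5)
Your construction — a greedy coloring along $\pi$ that avoids the colors of the other vertices in $\WReach_{2^{r-2}}[G,\pi,v]$ — is indeed Zhu's, and following \cite[Thm.\ 2.6]{Zhu2009} is exactly what the paper intends by this proposition. The gap is in the correctness argument: the distance invariant you propose, namely that the $\pi$-minimum vertex $v$ of every connected component $K$ of the subgraph induced by $r'$ color classes is weakly $2^{r'-1}$-reachable from every other vertex of $K$ along a path inside $K$, is simply false. On a long path $P_n$ ordered naturally, the greedy rule produces a cyclic coloring with $2^{r-2}+1$ colors; choosing all of them yields $K = P_n$, whose $\pi$-minimum $v_1$ is at distance $n-1$ from the last vertex, far beyond any bound of the form $2^{r'-1}$. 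So the strengthening you pick cannot be carried through the induction, because it is not true.

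The inductive mechanism also does not chain even if one grants the invariant. First, to conclude that removing $v$ leaves components using only $r'-1$ color classes, you need $v$ to be the unique vertex of its color in $K$, which requires its weak-reachability radius to be at most $2^{r-2}$; but your exponent $2^{r'-1}$ exceeds $2^{r-2}$ at the top level $r'=r$, so the greedy rule gives you nothing there. Second, the ``detour around $v$ at most doubles distances'' step produces a walk through $v$, and $v$ lies outside the component $K'\subseteq K-v$ whose level-$(r'-1)$ invariant you are trying to establish; a walk that leaves $K'$ does not witness the reachability-inside-$K'$ that the invariant demands, and paths \emph{avoiding} $v$ may be much longer, not shorter. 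You correctly identify that the delicate part of Zhu's argument is the alignment of the doubling constants with the exponent $2^{r-2}$, but the invariant you write down is not the one that makes that bookkeeping work — Zhu's argument controls weak reachability along the chain of $\pi$-minima produced by the peeling (an elimination-tree statement), not from the $\pi$-minimum to \emph{every} vertex of the component.
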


Graph classes of bounded expansion can be characterized by low treedepth
colorings, i.e.,  each graph has an $r$-treedepth coloring with at most $f(r)$
many colors.

\iflongpaper

\subsection{Logic}
We are mainly interested in a small fragment of first-order counting
logic, namely formulas of the form $\cnt y \phi(y \bx)>N$ where $\phi$
is a quantifier-free first-order formula with free variables~$y\bx$
and $N$ is a natural number.

The length of a formula $\phi$ is denoted by $|\phi|$ and equals its
number of symbols, where the length of $N$ counts as one.
All signatures are finite and the cardinality $|\sigma|$ of a signature $\sigma$
equals the number of its symbols.
We often interpret conjunctive clauses $\omega \in \FO$ as a set of literals
and write $l \in \omega$ to indicate that $l$ is a literal of $\omega$.

We denote the universe of a structure $G$ by $V(G)$.
We interpret a labeled graph $G = (V,E, P_1, \dots, P_m)$ as a logical structure with a universe $V$,
binary relation $E$ and unary relations $P_1, \dots, P_m$.

The notation $\bx$ stands for a non-empty tuple $x_1 \dots x_{|\bx|}$.
We write $\phi(\bx)$ to indicate
that a formula $\phi$ has free variables $\bx$. 
Let $G$ be a structure, $\bu \in V(G)^{|\bx|}$ be a tuple
of elements from the universe of $G$, and $\beta$ be the assignment with $\beta(x_i) = u_i$ for $i \in
\{1,\dots , |\bx|\}$. 
For simplicity, we write $G \models \phi(\bu)$ and $[[\phi(\bu)]]^G$
instead of $(G, \beta) \models \phi(\bx)$
and $[[\phi(\bx)]]^{(G, \beta)}$.

The logic \FO is defined in the usual way for functional structures. The
\emph{functional depth} of a formula is the maximum level of nested function
applications, e.g., the formula $f(g(x))=y$ has functional depth $2$. We define
$FO[d, \sigma]$ to be all first-order formulas with functional depth $d$ and
functional signature $\sigma$.

We will both use functional and relational structures, but we will restrict
ourselves to functions of arity one and relations of arity one and two. A
structure $G$ with signature $\sigma$ has \emph{multiplicity} $m$ if for every
distinct pair $u, v \in V(G)$, the number of function symbols $f \in \sigma$
with $u = f_G(v)$ or $v = f_G(u)$ and relation symbols $R \in \sigma$ such that
$R^G(u,v)$ is at most~$m$.

\fi

\section{Exact Evaluation on Nowhere Dense Classes} \label{sec:exact}

In this section we consider the model-checking problem for formulas
$\exists x_1 \dots x_k \cnt y \phi(y \bx) > N$ on nowhere
dense graph classes for quantifier-free first-order formulas $\phi$.
We show that this problem can be solved in almost linear fpt time by
solving its optimization variant $\max_{\bu \in V(G)^{\bx}} \cnt y [[\phi(y \bu)]]$.

\iflongpaper 
\subsection{Replace Formulas with Clauses}
We start with a simplification of the input formula. The quantifier-free formula
$\phi$ is transformed into a set of weighted positive clauses, i.e. formulas
which are conjunctions of positive edge relations with an integer weight
assigned to them. 
The advantage of positive clauses is that each vertex $u$ satisfying $\omega(u \bv)$ is adjacent to a vertex in $\bu$, making the problem very local. 

\begin{lemma}\label{lem:remove_uneq}
    Consider a quantifier-free FO-formula $\phi(y\bx)$ with signature $\sigma$.
    In time $f(|\phi|)$ one can construct a set $\Omega$ with the following
    properties:
    \begin{enumerate}
    \item The set $\Omega$ contains pairs of the form $(\mu, \omega(y \bx))$
    where $\mu \in \Z$ and $\omega(y\bx)$ is a conjunctive clause containing
    only positive literals,
    \item $|\Omega| \leq 4^{|\phi|}$,
    \item $|\omega|\leq|\phi|$ for each $(\mu, \omega) \in \Omega$,
    \item $|\mu| \leq 4^{|\phi|}$ for every $(\mu, \omega) \in \Omega$,
    \item for every graph $G$ and every $\bu \in V(G)^{|\bx|}$, \\
    $\displaystyle[[\cnt y \phi(y \bu)]]^{G} = \sum_{\mathclap{(\mu, \omega) \in
    \Omega}} \mu [[\cnt y \omega(y \bu)]]^{G}.$
    \end{enumerate}
\end{lemma}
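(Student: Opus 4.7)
The plan is to express $[[\cnt y \phi(y\bu)]]^G$ as a sum over mutually exclusive truth assignments to the atomic subformulas of $\phi$, and then to eliminate each negated atom by inclusion--exclusion.

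Let $A_1,\dots,A_k$ enumerate the distinct atomic subformulas occurring in $\phi$; since each contributes at least one symbol, $k\leq|\phi|$. For $\tau\in\{0,1\}^k$ set $C_\tau(y\bx) := \bigwedge_{\tau_i=1} A_i \wedge \bigwedge_{\tau_i=0} \neg A_i$, and let $T\subseteq\{0,1\}^k$ consist of those $\tau$ for which the propositional formula obtained by substituting the truth value $\tau_i$ for $A_i$ in $\phi$ evaluates to true; this is decidable in time $f(|\phi|)$. The $C_\tau$ are pairwise mutually exclusive and together describe precisely the tuples satisfying $\phi$, so
\[
[[\cnt y \phi(y\bu)]]^G \;=\; \sum_{\tau\in T}[[\cnt y C_\tau(y\bu)]]^G .
\]

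Next, for fixed $\tau\in T$ I write $C_\tau = P_\tau \wedge \bigwedge_{j=1}^{l_\tau}\neg N_j^\tau$, where $P_\tau$ collects the positive atoms of $C_\tau$ and $N_1^\tau,\dots,N_{l_\tau}^\tau$ its negated ones. Inclusion--exclusion applied to the events ``$N_j^\tau$ holds'' gives
\[
[[\cnt y C_\tau(y\bu)]]^G \;=\; \sum_{S\subseteq[l_\tau]}(-1)^{|S|}[[\cnt y \omega_{\tau,S}(y\bu)]]^G ,
\]
where $\omega_{\tau,S} := P_\tau \wedge \bigwedge_{j\in S} N_j^\tau$ is a conjunction of positive atoms of $\phi$. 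Combining the two displays and merging identical $\omega$'s produces the set $\Omega = \{(\mu_\omega,\omega)\}$ in which $\mu_\omega$ is the signed sum of the corresponding $(-1)^{|S|}$ terms.

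For the quantitative bounds, the number of pairs $(\tau,S)$ is at most $2^k\cdot 2^k\leq 4^{|\phi|}$, which simultaneously caps $|\Omega|$ and every $|\mu_\omega|$ (the latter being a signed sum of $\pm 1$'s). Each $\omega_{\tau,S}$ is a conjunction of a subset of $\{A_1,\dots,A_k\}$, and since these atoms appear as pairwise disjoint subformulas of $\phi$, their combined length together with the at most $k-1$ connecting conjunctions does not exceed $|\phi|$, so $|\omega_{\tau,S}|\leq|\phi|$. The whole procedure inspects only $\phi$ and runs in time $f(|\phi|)$, as required. The one step that deserves genuine care is precisely this last length estimate, since the target $|\phi|$ is asked for on the nose rather than up to a constant factor; the remainder of the argument is a routine unrolling of case analysis combined with inclusion--exclusion and involves no interaction with $G$.
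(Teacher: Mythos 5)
Your proof is correct and follows essentially the same route as the paper: both pass to a complete disjunctive normal form indexed by truth assignments to the atoms of $\phi$, and then eliminate negated atoms by inclusion--exclusion, merging duplicate positive clauses to accumulate the integer weights. The paper performs the inclusion--exclusion iteratively one negated literal at a time, whereas you do it in a single sweep over subsets $S\subseteq[l_\tau]$; the length argument you flag as delicate (disjoint atomic subformulas plus $k-1$ connectives give $\sum_i|A_i|+(k-1)\le|\phi|$) is also sound and matches the paper's accounting.
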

%

\begin{proof}
    Let $L$ be the set of literals in $\phi$.
    We construct a formula $\phi'$, equivalent to $\phi$, in disjunctive normal form (disjunction of conjunctions).
    We can assume $\phi'$ to be complete in the sense that every atom in $L$ occurs in every clause of $\phi'$ (either positively or negatively).
    Thus, every clause of $\phi'$ contains $|L| \le |\phi|$ literals.
    For every conjunctive clause $\omega$ of $\phi'$
    we add the tuple $(1,\omega)$ into a set $\Omega$.
    Since by completeness the clauses of $\phi'$ are mutually exclusive,
    \begin{equation}\label{eq:lambdasum}
        [[\cnt y \phi(y\bu)]]^G = \sum_{(\mu,\omega)\in\Omega}
	\mu [[\cnt y \omega(y\bu)]]^G.
    \end{equation}
    Fix a tuple $(\mu,\omega) \in \Omega$. 
    Unless $\omega$ contains only positive literals,
    we can write it as $\omega' \land \neg l$, where $l$ is a positive literal.
    By first ignoring $l$
    and then subtracting what we counted too much we get
    \begin{equation}\label{eq:inclexcl}
    \mu [[\#y\, \omega(y\bx)]]^{\vG}
    = \mu [[\#y\, \omega'(y\bx) \land l]]^{G} 
    - \mu [[\#y\, \omega'(y\bx)]]^{G}.
    \end{equation}
    We remove $(\mu,\omega)$ from $\Omega$
    and add two new entries with conjunctive clauses as
    in~(\ref{eq:inclexcl}) such that $\Omega$ still satisfies~(\ref{eq:lambdasum}).
    Both newly introduced formulas contain one negative literal less.
    It can happen that we want to add some $(\mu,\omega')$ to
    $\Omega$ when $\Omega$ already contains~$(\mu',\omega')$.  In that
    case we replace the latter by $(\mu+\mu',\omega')$.

    We perform this procedure on $\Omega$ until no longer possible.
    The length of each clause in $\Omega$ is still at most $|\phi|$.
    The size of $\Omega$ is at most $4^{|\phi|}$ as the complete DNF formula 
    $\phi'$ has at most $2^{|\phi|}$ clauses of length at most $|\phi|$, and 
    applying the previously described inclusion-exclusion steps exhaustively to one clause results 
    in at most $2^{|\phi|}$ new clauses.

    As the bound for $|\Omega|$ follows from counting the resulting clauses (without deduplicating possible duplicates), the same bound of $4^{\phi}$ also follows for the weights.
\end{proof}   
\fi

\subsection{Radius-$r$ Decomposition Tree}
In the following, we will introduce a new kind of decomposition, which heavily
relies on the ideas from \cite{GroheKS17}. We call it the radius-$r$
decomposition tree. For illustration, consider a tree-depth decomposition of a
graph $G$. It has the property that after the removal of the root $v$ in the
decomposition, for each connected component $C$ of $G-v$ there exists a child of
$v$ in the decomposition that contains $C$. In the radius-$r$ decomposition
tree, not every connected component is represented by a child but every
radius-$r$ neighborhood of $G-v$ instead. Another difference is that these
neighborhoods are not necessarily disjoint. We will use this radius-$r$
decomposition tree as the structure on which a dynamic program will solve
$\max_{\bu} [[\cnt y \phi(y \bu)]]^G$.


\begin{definition}\label{def:our-tree}
Let $G$ be a graph. Let $r,\ell \in \N$ be such that splitter has a winning
strategy for the $\ell$-round radius-$2r$ splitter game on $G$. Let $\pi$ be an
ordering of $G$.

A \emph{radius-$r$ decomposition tree $T_r(G,\pi,\ell)$} is a pair $(T, \beta)$
where $T$ is a tree of depth $\ell$ and $\beta \colon V(T) \to V(G)$. We
construct it recursively. If $G$ is empty, $T_r(G,\pi,\ell)$ is the empty tree. 

Let $s \in V(G)$ be the first move of the winning strategy of splitter for the
$(\ell, 2r)$-splitter game on $G$. The root is a node $t$ with $\beta(t) = s$.
For every $v \in V(G)$ we append the decomposition tree $T_r(G[X_v], \pi,
\ell-1)$ where $X_v = X_{2r}[G-s, \pi, v]$.
\end{definition}
Note that the case $\ell = 0$ while the graph is not empty, cannot happen due to
the Splitter having a winning strategy. 

\begin{corollary}
Let $G$ be a graph, $\pi$ a vertex ordering of $G$, $r, \ell \in \N$ and $T =
T_r(G,\pi,\ell)$ a radius-$r$ decomposition tree. Let $t \in V(T)$ be a node and
$T_t$ be the subtree of $T$ starting at $t$. Then for every $u \in W :=
\beta(V(T_t)) \setminus \{\beta(t)\}$ there exists a child $t'$ of $t$ such that
$N_r^{G[W]}[u] \subseteq \beta(T_{t'})$.  
\end{corollary}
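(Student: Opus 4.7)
The plan is to reduce the statement to the $r$-neighborhood cover property of the $X_{2r}$-sets from \Cref{def:Xr}, after an inductive identification of $\beta(V(T_t))$ with the vertex set of the graph at node $t$.

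First I would establish by induction on $\ell$ that for every node $t$ arising in the construction, $\beta(V(T_t)) = V(G_t)$, where $G_t$ denotes the graph supplied to the recursive call that produced $T_t$, and that $G_t$ is an induced subgraph of $G$. The base case (both $G_t$ and $T_t$ empty) is immediate. For the inductive step, let $s = \beta(t)$ and, for each child $t_v$ of $t$ corresponding to a vertex $v$, let $X_v = X_{2r}[G_t - s, \pi, v]$, so that $G_{t_v} = G_t[X_v]$. By the inductive hypothesis, $\beta(V(T_{t_v})) = X_v$ and $G_t[X_v]$ is induced in $G$. Since $v \in X_v$ trivially for every $v \in V(G_t - s)$, the union $\bigcup_v X_v$ equals $V(G_t - s)$, yielding $\beta(V(T_t)) = \{s\} \cup \bigcup_v X_v = V(G_t)$.

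Second, for the node $t$ in the statement, set $s = \beta(t)$ and observe that $W = V(G_t) \setminus \{s\}$ by the above, and $G[W] = G_t - s$ because $G_t$ is induced in $G$. Third, fix $u \in W$. The cover property noted directly after \Cref{def:Xr} is purely combinatorial and holds for any graph and any ordering: choosing $v$ to be the $\pi$-minimum vertex of $N_r^{G_t - s}[u]$ and concatenating, for each $w \in N_r^{G_t - s}[u]$, shortest paths from $w$ to $u$ and from $u$ to $v$ yields a path of length at most $2r$ on which $v$ is $\pi$-minimum (every vertex of this path lies in $N_r^{G_t - s}[u]$), so $N_r^{G_t - s}[u] \subseteq X_{2r}[G_t - s, \pi, v] = X_v$. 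In particular $X_v \neq \emptyset$, so the corresponding child $t' = t_v$ exists, and by the first step $\beta(T_{t'}) = X_v$; combining the inclusions gives $N_r^{G[W]}[u] = N_r^{G_t - s}[u] \subseteq \beta(T_{t'})$, as required.

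The main obstacle is purely bookkeeping: keeping straight which graph the sets $X_v$ and the neighborhoods $N_r$ refer to at each level of the recursion, and observing that the cover property used here is a universal statement about $X_{2r}$-families and does not invoke any sparsity assumption on $G$.
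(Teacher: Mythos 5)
Your proof is correct and follows the same conceptual route as the paper, which simply cites that $\mathcal{X}_r = \{X_{2r}[G,\pi,v] \mid v \in V(G)\}$ is an $r$-neighborhood cover (a fact extracted from the proof of Proposition~\ref{prop:nhood-cover} in~\cite{GroheKS17}) and declares the corollary immediate. What you add, usefully, is the bookkeeping induction identifying $\beta(V(T_t))$ with $V(G_t)$ and establishing that $G_t$ is induced in $G$ (so $G[W]=G_t-s$), together with a self-contained two-line proof of the cover property: picking $v$ to be the $\pi$-minimum of $N_r^{G_t-s}[u]$ and concatenating the two $\leq r$-paths through $u$. You also rightly observe that this cover property is a universal combinatorial fact about $X_{2r}$-families and requires no sparsity assumption, which is a point the paper's citation to a nowhere-dense-class proposition slightly obscures. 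One cosmetic nit: the concatenation of two shortest paths through $u$ is a priori a walk, not a path; you should note that shortcutting to a simple path only removes vertices, so the resulting path still lies inside $N_r^{G_t-s}[u]$, and the weak-reachability certification goes through unchanged.
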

\iflongpaper
As $\cal X_r = \{ X_{2r}[G,\pi,v] \mid v \in V(G) \}$ is by
\Cref{prop:nhood-cover} a radius-$r$ cover, the fact follows immediately.
\else
As $\cal X_r = \{ X_{2r}[G,\pi,v] \mid v \in V(G) \}$ is a radius-$r$ neighborhood cover \cite{GroheKS17}, the fact follows immediately.
Note, that for nowhere dense classes $\cal X_r$ is even a sparse neighborhood cover.
\fi

\begin{lemma}\label{lem:tree-good} Let $G$ be a graph, $\pi$ a vertex ordering
of $G$ and $r, \ell \in \N$. Then, the radius-$r$ decomposition tree $T =
T_r(G,\pi,\ell)$ (\Cref{def:our-tree}) has size $|T|\leq \wcol_{2r}(G,\pi)^\ell
n$ and depth~$\ell$. The construction time is linear in $|T|$.
\end{lemma}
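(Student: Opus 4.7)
My plan is to prove the size and depth bounds together by induction on $\ell$, with the construction time following by amortization. The combinatorial core is a double-counting identity on the clusters $X_v$.

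For the base case $\ell = 0$, Splitter can only win the zero-round game on an empty board, so $G$ must be empty, $T$ is the empty tree, and both bounds trivially hold. For the inductive step, unfolding \Cref{def:our-tree} gives $|T_r(G,\pi,\ell)| \le 1 + \sum_{v \in V(G)} |T_r(G[X_v], \pi, \ell-1)|$, and the depth of $T$ is at most one more than the maximum depth of any subtree, hence at most $\ell$ by induction. For the size, I would first note that weak coloring numbers are monotone under restriction to induced subgraphs carrying the inherited order---any witness path for weak $2r$-reachability inside $G[X_v]$ is also a witness path in $G$---so $\wcol_{2r}(G[X_v], \pi) \le \wcol_{2r}(G, \pi)$. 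The inductive hypothesis then bounds each subtree by $\wcol_{2r}(G,\pi)^{\ell-1} |X_v|$.

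The combinatorial heart of the argument is the double-counting step
\[
\sum_{v \in V(G)} |X_v| \;=\; \sum_{u \in V(G-s)} \bigl|\WReach_{2r}[G-s, \pi, u]\bigr| \;\le\; (n-1)\,\wcol_{2r}(G, \pi),
\]
which follows immediately by swapping the order of summation in the definition $X_v = X_{2r}[G-s, \pi, v] = \{u : v \in \WReach_{2r}[G-s,\pi,u]\}$ and using the bound on $\WReach_{2r}$ in $G-s$, which again is at most $\wcol_{2r}(G, \pi)$ by monotonicity. Combining gives $|T| \le 1 + (n-1)\,\wcol_{2r}(G,\pi)^\ell \le n\,\wcol_{2r}(G,\pi)^\ell$, as claimed.

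For the construction time, at each node we must compute Splitter's next move (in almost linear time in the current graph by the remark following \Cref{prop:game}) and extract the sets $X_v$ from $\pi$. Precomputing the weak-reachability sets on $G$ once and inheriting them to each induced subgraph makes the per-node work proportional to the current graph's size, and the same recursive inequality that bounds $|T|$ shows that the total graph size summed over all nodes of $T$ is linear in $|T|$. The main obstacle is really just setting up the double-counting identity and making sure $\wcol_{2r}$ behaves well under restriction; both size and time bounds then fall out of the same recursion.
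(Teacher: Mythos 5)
Your proof is correct, and it takes a genuinely different route from the paper for the size bound. The paper argues directly about multiplicities: for each vertex $v\in V(G)$, it bounds the number of tree nodes $t$ with $\beta(t)=v$ by observing that the $\beta$-labels along the root path of $t$ all lie in $\WReach_{2r}[G,\pi,v]$, a set of size at most $\wcol_{2r}(G,\pi)$, and the root path has length at most $\ell$, giving at most $\wcol_{2r}(G,\pi)^\ell$ such nodes per vertex. You instead set up the recurrence $|T_r(G,\pi,\ell)|\le 1+\sum_{v}|T_r(G[X_v],\pi,\ell-1)|$, use subgraph-monotonicity of $\wcol_{2r}$ under the inherited ordering, and close it with the double-counting identity $\sum_v|X_v|=\sum_u|\WReach_{2r}[G-s,\pi,u]|\le(n-1)\wcol_{2r}(G,\pi)$. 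The trade-off: the paper's argument is shorter and gives an explicit handle on the multiplicity of each $\beta$-label, which is conceptually useful later, but it leans on the (somewhat implicit) claim that root-path labels embed into weak reachability sets; your recursion avoids that claim entirely, needing only the elementary facts that $\wcol$ is monotone under induced subgraphs and that $X_r$ is the "inverse" of $\WReach_r$, and even yields the marginally sharper bound $1+(n-1)\wcol_{2r}(G,\pi)^\ell$. Your remarks on construction time are at the same level of rigor as the paper, which asserts rather than proves that bound; both leave the amortization implicit, so no gap there relative to the original.
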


\begin{proof}
    By construction, the depth of the tree is determined by the depth of the
    splitter game, which is $\ell$.
    
    Consider the root path $P_t$ of some node $t \in V(T)$. Then $\beta(P_t)
    \subseteq \WReach_{2r}[G,\pi, \beta(t)]$. As the length of $P_t$ is at most
    $\ell$, $\beta(t)$ appears at most $\WReach_{2r}[G,\pi, \beta(t)]^\ell \leq
    \wcol_{2r}(G,\pi)^\ell$ times (as a $\beta$-label of nodes) in $T$. Thus,
    $|T| \leq  \wcol_{2r}(G,\pi)^\ell n$.
\end{proof}

\begin{corollary}
Let $\cal C$ be a nowhere dense graph class. For every $r \in \N$ the
$r$-decomposition tree has constant depth, almost linear size and can be
computed in almost linear time. 
\end{corollary}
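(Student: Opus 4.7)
The plan is to combine three ingredients already established in the preliminaries: the game-based characterization of nowhere denseness (Proposition \ref{prop:game}), the almost-linear construction of vertex orderings with $n^\varepsilon$-bounded weak reachability sets (Proposition \ref{prop:nd-ordering}), and the size/construction bound of the radius-$r$ decomposition tree (Lemma \ref{lem:tree-good}). Fix $r\in\N$ and $\varepsilon>0$.

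First I would invoke Proposition \ref{prop:game} applied to radius $2r$ to obtain a constant $\ell=\ell(r)$ such that, for every $G\in\cal C$, Splitter has a winning strategy in the $(\ell,2r)$-splitter game on $G$. This immediately settles the depth claim since, by Lemma \ref{lem:tree-good}, the depth of $T_r(G,\pi,\ell)$ is exactly $\ell$, which does not depend on $|G|$. Next I would apply Proposition \ref{prop:nd-ordering} with the parameter $\varepsilon'\coloneqq\varepsilon/\ell$ to compute, in time $f(2r,\varepsilon')\cdot n^{1+\varepsilon'}$, a vertex ordering $\pi$ of $G$ satisfying $|\WReach_{2r}[G,\pi,v]|\le n^{\varepsilon'}$ for every $v\in V(G)$; in particular $\wcol_{2r}(G,\pi)\le n^{\varepsilon'}$.

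Plugging this ordering into Lemma \ref{lem:tree-good} would then yield
\[
|T_r(G,\pi,\ell)|\ \le\ \wcol_{2r}(G,\pi)^{\ell}\cdot n\ \le\ n^{\varepsilon'\ell}\cdot n\ =\ n^{1+\varepsilon},
\]
proving the almost linear size bound. For the running time, constructing $\pi$ costs $O(n^{1+\varepsilon})$ by Proposition \ref{prop:nd-ordering}, and Lemma \ref{lem:tree-good} guarantees that the decomposition tree itself can be built in time linear in $|T_r(G,\pi,\ell)|=O(n^{1+\varepsilon})$. The recursive construction requires, at each node, Splitter's winning move in a residual splitter game; by the remark following Proposition \ref{prop:game}, such a move can be computed in almost linear time, which—summed over the at most $n^{1+\varepsilon}$ nodes, each associated with a subgraph whose size is dominated by the global bound—stays within $O(n^{1+\varepsilon})$ after adjusting $\varepsilon$ by a constant factor.

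There is no real obstacle here; the corollary is essentially a bookkeeping exercise once the three cited results are in hand. The only subtlety I would be careful about is keeping the aggregate cost of computing Splitter's moves and of extracting the clusters $X_{2r}[G-s,\pi,v]$ across all tree nodes within $O(n^{1+\varepsilon})$, which is handled by choosing $\varepsilon'=\varepsilon/(2\ell)$ instead of $\varepsilon/\ell$ so that the product of the per-node overhead $n^{\varepsilon'}$ with the number of nodes $n^{1+\varepsilon'\ell}$ remains $n^{1+\varepsilon}$.
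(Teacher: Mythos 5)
Your proof is correct and follows the route the paper intends: combine the constant round bound $\ell=\ell(r)$ from Proposition~\ref{prop:game}, a computed ordering $\pi$ with $\wcol_{2r}(G,\pi)\le n^{\varepsilon'}$ on nowhere dense classes, and the size/depth/construction bound of Lemma~\ref{lem:tree-good}, then set $\varepsilon'$ small enough that $n^{\varepsilon'\ell}\cdot n\le n^{1+\varepsilon}$.

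Two small points worth noting. First, the paper, in the proof of Theorem~\ref{thm:algo}, actually obtains $\pi$ from Lemma~\ref{lem:wcol-approx} (which works beyond nowhere dense classes) rather than Proposition~\ref{prop:nd-ordering}; for the present corollary, which is specifically about nowhere dense classes, your choice of Proposition~\ref{prop:nd-ordering} is equally valid and arguably more direct. Second, your accounting of the construction time — ``per-node overhead times number of nodes'' — is a little loose: Splitter's move at a node is computed on the associated residual subgraph, whose size is not bounded by the per-node overhead you write, so the clean way to certify almost linearity is the recurrence $R(j,n)\le\sum_{X\in\mathcal X_r}cR(j-1,|X|)+cn^{1+\delta}$, which the paper cites from Grohe, Kreutzer, Siebertz~\cite{GroheKS17} and which yields $R(\ell,n)\le c^\ell n^{1+\varepsilon}$. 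Your conclusion is nonetheless correct, and the adjustment of $\varepsilon'$ you propose suffices once this recurrence is used.
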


\subsection{Cover Systems}

Given a subgraph $H$ in $G$ with a vertex ordering $\pi$ of $G$. A \emph{cover
system} of $H$ in $G$ is a family $\cal Z$ of clusters $Z_i = X_r[G,\pi,v] \in
Z$ for some $r \in \N$ such that every connected component $C$ of $H$ is
contained in some~$Z_i$. A cover system is \emph{non-overlapping} if all
distinct clusters have an empty intersection.
\begin{lemma}\label{lem:ball2} For every graph $G$ with a vertex ordering $\pi$,
    every $D \subseteq V(G)$ of size $k$, there exists a cover system of
    $G[N[D]]$ in $G$ of size at most $k$ where each cluster has the same radius
    $r \leq 2^k$.
\end{lemma}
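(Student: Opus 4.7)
The plan is to exhibit, for each connected component of $G[N[D]]$, a single cluster $X_r[G,\pi,w]$ that contains it, using one uniform radius $r \leq 2^k$.

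First I would observe that every connected component of $G[N[D]]$ meets $D$: any vertex $u \in N[D]\setminus D$ is by definition adjacent in $G$ to some $v \in D$, and the edge $uv$ belongs to $G[N[D]]$, placing $u$ in the same component as $v$. In particular there are at most $|D|=k$ components, which immediately gives the claimed size bound on the cover.

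For each component $C$, let $D_C = C \cap D$, so $|D_C|\leq k$, and let $w_C$ be the $\pi$-smallest vertex of $C$. Any path inside $C$ from $w_C$ to an arbitrary $u \in C$ uses only vertices that are $\geq_\pi w_C$, and so witnesses $w_C \in \WReach_\ell[G,\pi,u]$, where $\ell$ is its length. Taking a shortest such path yields $V(C)\subseteq X_{r_C}[G,\pi,w_C]$, where $r_C$ is the eccentricity of $w_C$ inside the subgraph $C$. It therefore suffices to bound $r_C$ uniformly by $2^k$.

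I would prove the bound $r_C \leq 2^{|D_C|}$ by induction on $|D_C|$. The base case $|D_C|=1$ is immediate because $C$ is contained in the closed neighborhood of a single $D$-vertex and so has diameter at most $2=2^1$. For the inductive step I would distinguish two cases. If there is some $v \in D_C$ whose removal breaks the covered region $G[N[D_C]]$ into strictly more pieces, then each piece has strictly fewer $D$-vertices and the inductive hypothesis applied piecewise directly yields the bound $2^{|D_C|-1}\leq 2^{|D_C|}$. Otherwise, removing any single $v \in D_C$ keeps the region connected; in that case I would apply the inductive hypothesis to $D_C \setminus \{v\}$, obtaining a cover of the sub-problem with radius $\leq 2^{|D_C|-1}$, and then extend to a cover of the full $C$ by routing paths through $v$, which costs at most another factor of two and gives $r_C \leq 2\cdot 2^{|D_C|-1} = 2^{|D_C|}$.

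Setting $r = 2^k$ uniformly for every cluster then yields a cover system of size at most $k$ with the required radius bound. The main obstacle is the second case of the induction: one must carefully ensure that extending the inductive cover through the removed $D$-vertex preserves the property that $w_C$ is the $\pi$-minimum on the newly constructed path, and that the length really at most doubles, which is exactly what converts the additive growth in $|D_C|$ into the exponential $2^k$ in the lemma statement.
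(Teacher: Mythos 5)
Your approach is genuinely different from the paper's. The paper does not look at components of $G[N[D]]$ at all: it starts with one radius-$2$ cluster $X_2[G,\pi,\min_\pi N[d]]$ per vertex $d\in D$, and then repeatedly merges any two intersecting clusters of common radius $r$ into a single cluster $X_{2r}[G,\pi,\min_\pi\{z,z'\}]$, equalizing radii along the way; since each merge strictly decreases the number of clusters, the radius at the end is at most $2^k$, and the surviving clusters are pairwise \emph{disjoint}. That disjointness is what the downstream dynamic program actually relies on to avoid double counting, even though the lemma statement does not spell it out. Your construction, one cluster per component of $G[N[D]]$ centered at the component's $\pi$-minimum, does not give disjointness: $X_r[G,\pi,w_C]$ consists of all vertices of $G$ from which $w_C$ is weakly $r$-reachable, which typically reaches far outside $C$, so two clusters for two different components can intersect. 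So even with a correct radius bound the construction would not slot into the rest of the argument the way the paper's does.

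Independently of that, the inductive step bounding $r_C$ is where the proof fails as written. You are trying to bound the eccentricity of a \emph{specific} vertex $w_C$ inside $C$, but in both cases of the induction you apply the hypothesis to strictly smaller regions, which only controls the eccentricity of each piece \emph{from that piece's own $\pi$-minimum}. Those minima are not $w_C$ in general, and nothing in the sketch explains how the piecewise bounds compose into a bound centered at $w_C$. In Case 1 the conclusion ``$2^{|D_C|-1}\le 2^{|D_C|}$'' is a non sequitur: the pieces are joined only through the removed $D$-vertex $v$, so the distance in $C$ from $w_C$ to a vertex in a far-away piece is governed by a route through $v$ whose length is a \emph{sum}, not the maximum of the piece radii. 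In Case 2 the phrase ``costs at most another factor of two'' has no argument behind it: routing through $v$ costs an additive amount, and nothing ties that amount to $2^{|D_C|-1}$. You flag this gap in your final sentence, but that is exactly the content of the lemma, not a technicality the outline already handles, and the recursion you set up does not obviously decrease a quantity that your bound is about.
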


\begin{proof}
    We start with the clusters $X_2[G,\pi, \min_\pi N[d]]$ for every $d \in D$.
    Call this collection $\cal Z$. Note that $\cal Z$ is already a valid cover
    system of $G[N[D]]$ in $G$. If two distinct clusters $X_{r}[G,\pi, z]$ and
    $X_{r}[G,\pi, z']$ from $\cal Z$ intersect, we replace both with a new
    cluster $X_{2r}[G,\pi, \min_\pi\{z,z'\}]$ in $\cal Z$. Every vertex or edge
    covered by the two old clusters stays covered in the new one. Also, if two
    clusters $X_{r}[G,\pi, z]$ and $X_{r'}[G,\pi, z']$ are of a different
    radius, say, $r' < r$, we replace $X_{r'}[G,\pi, z']$ with $X_{r}[G,\pi,
    z']$ to match the radii of all the clusters.

    We repeat this until no intersecting clusters remain. As the number of
    clusters decreases with every step, the radius is at most $2^k$ at the end.
\end{proof}

For \Cref{thm:algo}, one needs to find clusters from $\cal X_r$ which 
are disjoint and maximize the sum of weights of clusters. This is captured by
the following definition. We can solve this problem in almost linear time on
nowhere dense graph classes, by noticing that the intersection graphs of the
sparse neighborhood covers $\cal X_r$ are almost nowhere dense. Then, one can
use treedepth colorings and LinEMSOL.

\begin{definition}[Disjoint Cluster Maximization] \label{prob:disj-clusters}
Given a graph, a set system $\cal X_r$ as defined in \Cref{def:Xr}, labelled by
a function $\Lambda : \cal X_r \to 2^\Lambda$ of size~$k$. Each combination of a
cluster $X \in \cal X_r$ and label $\lambda \in \Lambda(X)$ is weighted by a
function $w$. 

Problem: Find pairwise disjoint clusters $X_1, \dots, X_k \in \cal X_r$ such
that for each label $\lambda_i \in \Lambda$ the cluster $X_i$ is labeled
$\lambda_i$ and $X_1, \dots, X_k$ maximize $\sum_{i=1}^k w(X_i,\lambda_i)$ for
such cluster sets.

Parameter: $r, k$
\end{definition}

\iflongpaper
\begin{lemma} \label{lem:sparse_intersections}
  Let $\cal C$ be a nowhere dense class of graphs and $r \in \N$. Then there
  exists an almost nowhere dense graph class $\cal I$ such that for every graph
  $G \in \cal C$, the intersection graph $I$ of $\cal X_r$ (defined in
  \Cref{def:Xr}) is contained in $\cal I$.
\end{lemma}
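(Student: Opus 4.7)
The plan is to bound the weak $r'$-coloring numbers of the intersection graph $I$ by those of $G$, and then invoke the weak-coloring-number characterization of almost nowhere denseness together with the nowhere denseness of $\calC$. First, I would observe that $|V(I)| = |V(G)|$: each cluster $X_v := X_{2r}[G,\pi,v]$ contains $v$ trivially, and if $X_v = X_w$ then $v \in X_w$ forces $w \leq_\pi v$, while symmetrically $w \in X_v$ forces $v \leq_\pi w$, so $v = w$. Thus the map $v \mapsto X_v$ is a bijection, allowing me to identify $V(I)$ with $V(G)$ and equip it with the ordering $\pi'$ induced by~$\pi$.

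The heart of the argument will be the following claim: if $X_u$ weakly $r'$-reaches $X_v$ in $(I,\pi')$, then $u \in \WReach_{4rr'}[G,\pi,v]$. To prove this, I would take a witnessing path $X_{w_0}=X_u, X_{w_1}, \ldots, X_{w_k}=X_v$ in $I$ with $k \leq r'$ and $u \leq_\pi w_i$ for every $i$. For each edge $X_{w_i}X_{w_{i+1}}$, the clusters share some vertex $z_i$; unfolding the definition of $X_{2r}$ yields paths $P_i^\ell$ from $w_i$ to $z_i$ and $P_i^r$ from $z_i$ to $w_{i+1}$ in $G$, each of length at most $2r$, on which $w_i$, respectively $w_{i+1}$, is $\pi$-minimal. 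Concatenating these over all $i$ gives a walk $W$ from $u$ to $v$ in $G$ of length at most $4rr'$, and verifying that the $\pi$-minimum of $W$ equals $u$ would then complete the claim.

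The claim immediately yields $\wcol_{r'}(I,\pi') \leq \wcol_{4rr'}(G,\pi)$, because the map $X_u \mapsto u$ injectively sends the weak $r'$-reach of $X_v$ in $I$ into the weak $(4rr')$-reach of $v$ in $G$. Since $\calC$ is nowhere dense, for every $\epsilon > 0$ there is an ordering $\pi$ of $G$ with $\wcol_{4rr'}(G,\pi) \leq f(r,r',\epsilon)\,n^\epsilon$, and since $|V(I)| = n$ this gives $\wcol_{r'}(I) \leq f(r,r',\epsilon)\,|V(I)|^\epsilon$. Taking $\mathcal{I}$ to be the class of all intersection graphs produced from such orderings of graphs in $\calC$ then yields, by the weak-coloring-number characterization of almost nowhere dense classes, an almost nowhere dense class containing every~$I$.

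I expect the main obstacle to be the bookkeeping in the second paragraph, namely verifying that $u$ is the $\pi$-minimum of the concatenated walk~$W$: every interior vertex of $W$ lies on some $P_i^\ell$ or $P_i^r$ and is therefore $\pi$-dominated by $w_i$ or $w_{i+1}$, but to conclude that it is $\geq_\pi u$ one must combine the $\WReach_{2r}$-minimality of each $w_i$ on its incident subpaths in $G$ with the constraint $u \leq_\pi w_i$ coming from weak reachability along the path in~$I$. A minor additional care is that $\WReach$ is defined via paths rather than walks, but any cycles in $W$ can be shortcut without increasing the length or introducing smaller $\pi$-vertices.
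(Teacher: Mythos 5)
Your proposal is essentially the paper's own proof: induce an ordering $\pi'$ on $I$ from $\pi$ via the bijection $v\mapsto X_v$ (which exists because $v = \min_\pi X_v$), unfold each edge of a weakly $r'$-reaching path in $I$ into a short walk in $G$ to get $\wcol_{r'}(I,\pi')\le\wcol_{cr r'}(G,\pi)$, and then invoke the weak-coloring characterization of almost nowhere denseness. You merely spell out details the paper leaves implicit --- the injectivity of $v \mapsto X_v$, the shortcut from walk to path, and the constant $4r$ rather than the paper's $2r$ (which actually corrects a slight notational slip there, since $\mathcal{X}_r$ is defined with $X_{2r}$-clusters).
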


\begin{proof}
    Assume $\prec$ witnesses a good order in $G$.
    We build a new order $\prec_I$ for $I$.
    $X_r[v]$ is a shorthand for $X_r[G,\prec, v]$.
    We say $X_r[v] \prec_I X_r[u]$ if $v \prec u$. 
    Then 
    \begin{align*}
    \WReach_s[I, \prec_I, X_r[u]]  &= \{ Y \in \mathcal X \mid
        \text{path $P = Y_0 \dots Y_s$ of length at most $s$ in $I$}, \\
        &\qquad Y_0 = X_r[u], Y_s = Y = \min_{\prec_I} P \} \\
    &\subseteq \{X_r[v] \in \mathcal X \mid  v \in \WReach_{2rs}[G, \prec, u]\}
    \end{align*}
    Note that $v = \min_\prec X_r[v]$. Hence, the last equation follows. 
    As $\WReach_{2rs}[G, \prec, u] \leq n^\epsilon$, so is 
    $\WReach_s[I, \prec_I, X_r[u]]$. Thus, $\cal I$ is  almost nowhere dense. 
\end{proof}

\begin{remark}
  Note that this result cannot be improved to a nowhere dense class of intersection graphs for $\cal X_r$. 
  However, maybe there exists another sparse neighborhood cover whose intersection graph is nowhere dense.
  
  Example: Consider the class of graphs with an independent set of size $n$ with a star of size of $\log n$. For the weak color ordering, order the apex to the right (this is not optimal but the weak coloring number of this ordering is $\log n$).
  The resulting intersection graph contains then a clique of size $n$.
  Hence, the which is somewhere dense. 
\end{remark}

\begin{lemma}
  \label{lem:find-disj-clusters}
  We can solve the Disjoint Cluster Maximization problem in almost linear FPT
  time on nowhere dense class of graphs. 
\end{lemma}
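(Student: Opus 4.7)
The plan is to reduce the Disjoint Cluster Maximization problem to a labeled maximum-weight independent-set selection on the intersection graph $I$ of the cover $\mathcal{X}_r$, and then to combine low-treedepth colorings with LinEMSOL. First, I will construct $I$: its vertices are the clusters of $\mathcal{X}_r$, and two clusters are adjacent iff they share a vertex of $G$. Using the sparse neighborhood cover structure (Proposition~\ref{prop:nhood-cover}), this is computable in time $f(r,\varepsilon) n^{1+\varepsilon}$ for any $\varepsilon > 0$, with $|I| \leq n$. The disjointness of a collection of clusters is exactly equivalent to those clusters being pairwise non-adjacent in $I$. Hence, the problem reduces to: choose $X_1, \dots, X_k \in V(I)$ that are pairwise non-adjacent, with $\lambda_i \in \Lambda(X_i)$, maximizing $\sum_i w(X_i, \lambda_i)$.

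Next, I will exploit that by Lemma~\ref{lem:sparse_intersections} the graph $I$ belongs to an almost nowhere dense class $\mathcal{I}$ depending only on $r$ and $\mathcal{C}$. Applying Lemma~\ref{lem:wcol-approx} to $I$ with a parameter $\varepsilon'$ to be chosen later, I obtain in time $O(|I|^{1+\varepsilon'})$ a vertex ordering $\pi_I$ satisfying $\wcol_{2^{k-2}}(I, \pi_I) \leq h(r,k,\varepsilon') |I|^{\varepsilon'}$, where the bound uses that absolute weak coloring numbers in $\mathcal{I}$ grow only subpolynomially. By Proposition~\ref{prop:treedepth-col} this then yields a $k$-treedepth coloring of $I$ using $N_{\mathrm{col}} := h(r,k,\varepsilon') |I|^{\varepsilon'}$ colors in time $O(N_{\mathrm{col}} \cdot |I|)$.

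Finally, I will enumerate all $\binom{N_{\mathrm{col}}}{k} \leq N_{\mathrm{col}}^k$ choices of $k$ color classes. Any $k$-element independent set of $I$ sits inside the union of some such subfamily, and the induced subgraph $I'$ has treedepth at most $k$. On $I'$, the labeled max-weight selection is expressible in LinEMSOL with $k$ vertex variables $X_1, \dots, X_k$, encoding the constraints $\lambda_i \in \Lambda(X_i)$, distinctness, and pairwise non-adjacency, together with the linear objective $\sum_i w(X_i, \lambda_i)$; by the standard LinEMSOL result on bounded-treedepth graphs this is evaluated in time $g(k) |I|$. Taking the best solution across all subfamilies yields the overall optimum.

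The main obstacle is balancing $\varepsilon'$ against the $N_{\mathrm{col}}^k$ enumeration factor, since the $k$-fold product can inflate a subpolynomial factor into a large polynomial one. This is handled by setting $\varepsilon' := \varepsilon/(2k+1)$, which makes $N_{\mathrm{col}}^k \leq h^k |I|^{\varepsilon/2}$ and hence bounds the total running time by $f(r,k) \cdot n^{1+\varepsilon}$, as required.
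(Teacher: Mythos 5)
Your proposal follows the same structure as the paper's proof: construct the intersection graph $I$ of $\mathcal{X}_r$, observe via Lemma~\ref{lem:sparse_intersections} that $I$ lies in an almost nowhere dense class, compute a $k$-treedepth coloring with subpolynomially many colors, enumerate $k$-color subfamilies, and solve the labeled weighted independent-set selection on each bounded-treedepth piece via LinEMSOL, taking the maximum over all pieces. The reduction from disjointness in $G$ to non-adjacency in $I$, the use of treedepth colorings, and the LinEMSOL step are exactly the paper's approach.

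Where you go slightly further is in tracking details the paper glosses over: you make explicit that the cover structure from Proposition~\ref{prop:nhood-cover} lets you build $I$ in almost linear time, that Proposition~\ref{prop:treedepth-col} requires a vertex ordering which you obtain from Lemma~\ref{lem:wcol-approx} applied to $I$, and that one must pick the internal exponent $\varepsilon'$ as a function of $k$ and the target $\varepsilon$ so that the $\binom{N_{\mathrm{col}}}{k}$ enumeration factor stays within $n^{\varepsilon}$. The paper leaves those steps implicit; your version is essentially a more carefully quantified account of the same argument, and no new idea is introduced. Both proofs are correct.
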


\begin{proof}
    As $G$ is from a nowhere dense graph class, we can apply
    \Cref{lem:sparse_intersections}, yielding a graph $H$ from an almost nowhere
    dense graph class. The labels and weights from $G$ are also added to $H$.
  
    Two clusters $X,Y \in \cal X_r$ are disjoint in $G$ if and only if $X$ and
    $Y$ are not adjacent in $H$. Hence, the original problem on $G$ is
    equivalent to finding an independent set $S$ of size in $H$, where $\sum_i
    w(S_i,\lambda_i)$ is maximized.
  
    Since $H$ is from an almost nowhere dense graph class, by
    \Cref{prop:treedepth-col} there exists a $k$-treedepth coloring of $H$ using
    $n^\epsilon$ many colors. As the optimal independent set $S$ with the
    constraints from above has size $k$, it has to be contained in the subgraph
    of $H$ induced by some selection of at most $k$ colors. Thus, for each
    selection of $k$ colors, we consider the graph~$H'$ induced by those which
    has treedepth at most $k$. As this independent set variation can be
    expressed as MSO-formula and the objective function is linear, we can use
    LinEMSOL on $H'$ to solve this problem optimally. The solution for $H$ is
    then the maximum over the solutions of all $H'$s.
  
    Applying \Cref{lem:sparse_intersections} takes almost linear time. There are
    ${n^\epsilon\choose k} \leq n^{\epsilon'}$ many color combinations and each
    iteration of LinEMSOL takes linear FPT time. 
\end{proof} 
\fi

Let $\Omega$ be the set of weighted positive conjunctive clauses $(\mu,
\omega(y\bx))$, $\bz \subseteq \bx$ and $\bu \in V(G)^{|\bx|}$. With
$\Omega|_{\bz}$ we denote a subset of $\Omega$ with weighted clauses $(\mu,
\omega(y\bx))$ where every variable occurring in $\omega$ is from $\bz$. We
define $\Omega|_{\bz}[Z,\bu]$ as $\sum_{v \in Z} \sum_{(\mu,\omega) \in
\Omega|_{\bz}} \mu [[\omega(v \bu)]]^G$. Note that $\Omega|_{\bz}[Z,\bu]$
depends only on the assignment of $\bz$ and does not need the full assignment
$\bu$ of $\bx$.

To illustrate the following lemma, consider a positive conjunctive clause
$\omega(y\bx\bz)$, sets $P,W \subseteq V(G)$ and $\bu \in P^\bx, \bw \in W^\bz$.
To count the fulfilling vertices $v \in W$ of $\omega$, i.e. $\Omega[W,\bu]$, we
want to reduce this task to counting on cover systems of $N[\bw]$. However, as
not all fulfilling vertices in $W$ are adjacent to $\bw$, we need to be more
careful.

\begin{lemma} \label{lem:fulfilling-distr} Let $G$ be a graph, $\Omega$ a set of
    weighted positive conjunctive clauses $(\mu, \omega(y\bx\bz))$, $P,W
    \subseteq V(G)$ disjoint, $\bu \in P^{\bx}, \bw \in W^{\bz}$ such that
    $N[\bw] \subseteq P \cup W$. For every cover system $\cal Z$ of
    $G[N[\bw]]$ in $G[W]$ it holds that 
    \[ \Omega[W,\bu\bw] = \Omega|_{y\bx}[W,\bu\bw] + \sum_{Z
    \in \cal Z} (\Omega|_{y\bx\bz_Z}[Z, \bu\bw] - \Omega|_{y\bx}[Z, \bu]) \] 
    where $\bz_Z$ are
    the variables $z_i$ from $\bz$ which are assigned to a vertex in $Z$. 
\end{lemma}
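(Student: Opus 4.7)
The plan is to expand the LHS by grouping each clause $\omega\in\Omega$ according to the subset of $\bz$-variables it actually mentions, and then reassign each contribution that involves $\bz$-variables to a unique cluster of $\cal Z$. For $S\subseteq\bz$ let $\Omega_S$ denote the clauses $(\mu,\omega)\in\Omega$ whose $\bz$-variable set is exactly $S$; then
\[
\Omega[W,\bu\bw] = \Omega_\emptyset[W,\bu] + \sum_{S\ne\emptyset}\sum_{v\in W}\sum_{(\mu,\omega)\in\Omega_S}\mu [[\omega(v\bu\bw)]]^G.
\]
Since clauses in $\Omega_\emptyset = \Omega|_{y\bx}$ do not depend on the $\bw$-assignment, the first summand equals $\Omega|_{y\bx}[W,\bu\bw]$, which matches the leading term of the RHS.

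The key geometric observation is that for any positive conjunctive clause $\omega$ with $\bz$-variable set $S\ne\emptyset$ and any $v\in W$ satisfying $\omega(v\bu\bw)$, the vertex $v$ together with every $\bw_i$ such that $z_i\in S$ lies in a single connected component of $G[N[\bw]]$. Every positive literal of $\omega$ linking $y$ to some $z_i$ (such as $E(y,z_i)$ or $y=z_i$) forces $v$ to be adjacent or equal to $\bw_i$, and every positive literal $E(z_i,z_j)$ forces $\bw_i$ and $\bw_j$ into the same component. After a preprocessing step that splits each clause so that every $z_i$ it mentions is connected to $y$ through the literal-incidence structure of $\omega$ (folding any $y$-free conjuncts into constant factors that depend only on $\bu\bw$), this guarantees that $v$ and all relevant $\bw_i$ end up in the same connected component $C$ of $G[N[\bw]]$.

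Because $\cal Z$ is a non-overlapping cover system of $G[N[\bw]]$ (as produced, e.g., by \Cref{lem:ball2}), the component $C$ is contained in a unique cluster $Z^\star\in\cal Z$, which therefore contains $v$ and every $\bw_i$ with $z_i\in S$. In particular $S\subseteq\bz_{Z^\star}$, so $\omega\in\Omega|_{y\bx\bz_{Z^\star}}\setminus\Omega|_{y\bx}$ and $v\in Z^\star$. Reindexing the $S\ne\emptyset$ double sum by the resulting unique assignment $(v,\omega)\mapsto Z^\star$ turns it into $\sum_{Z\in\cal Z}\bigl(\Omega|_{y\bx\bz_Z}[Z,\bu\bw]-\Omega|_{y\bx}[Z,\bu]\bigr)$, matching the second summand of the RHS. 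The delicate part of the argument is precisely this geometric step: without the $y$-connectedness of clauses the satisfying $v$ need not lie in $N[\bw]$ at all, and without the non-overlap of $\cal Z$ one would double-count vertices in cluster intersections, so both properties must be exploited to close the equality.
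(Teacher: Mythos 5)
Your proof is correct in substance and rests on the same two pillars as the paper's: positive literals force any satisfying vertex to lie in $N[\bw]$, and a non-overlapping cover system lets you assign each such vertex (and the $\bw_i$'s it touches) to a unique cluster. The organization is slightly different: you group clauses by their $\bz$-variable set $S$ and push each nonzero contribution to the cluster $Z^\star$ containing it, whereas the paper partitions the vertex set $W$ into $W\setminus\bigcup\mathcal Z$ and the individual clusters, establishes $\Omega[W\setminus\bigcup\mathcal Z,\cdot]=\Omega|_{y\bx}[W\setminus\bigcup\mathcal Z,\cdot]$ and $\Omega[Z,\cdot]=\Omega|_{y\bx\bz_Z}[Z,\cdot]$, and then rewrites $\Omega|_{y\bx}[W\setminus\bigcup\mathcal Z,\cdot]$ as $\Omega|_{y\bx}[W,\cdot]-\sum_Z\Omega|_{y\bx}[Z,\cdot]$. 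Both routes arrive at the same identity and are comparable in length; the paper's vertex-side split avoids the explicit reindexing argument, while your clause-side split makes the geometric observation more visible.

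Two caveats worth noting. First, you explicitly invoke non-overlap, which the lemma statement does not literally require (it says ``for every cover system''); in fact the equality does fail for overlapping cover systems, and the paper's proof also tacitly uses non-overlap when writing $W$ as a disjoint union of $W\setminus\bigcup\mathcal Z$ and the $Z$'s, so you are right to assume it. Second, your ``preprocessing step that splits each clause so that every $z_i$ it mentions is connected to $y$'' does not actually rescue the general case: the set $\Omega|_{y\bx\bz_Z}$ is defined via the variables of the \emph{original} clause $\omega$, so folding $y$-free conjuncts such as $E(z_i,z_j)$ into constants can change which $Z$-bucket a clause falls in, and one can build a positive conjunctive clause with a $y$-free literal (e.g.\ $E(y,z_1)\land E(x_1,z_2)$ with $\bw_1$ and $\bw_2$ in different clusters) for which the claimed identity fails. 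In practice this is harmless, since the lemma is only ever applied (as in Lemma~\ref{lem:algo-correct}) to clauses in which $y$ appears in every literal, which is precisely what the paper's one-line observation ``$u$ is adjacent to some vertex from $\bu\bw$'' relies on as well; but if you want your argument to stand on its own you should add this as an explicit hypothesis rather than claim to reduce to it.
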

\iflongpaper
\begin{proof} 

    For $u \in W$, $G \models \omega(u\bu\bw)$ only if $u$ is adjacent to some
    vertex from $\bu\bw$, as $\omega$ is a positive conjunctive clause. Hence,
    $\Omega[W \setminus N[\bw], \bu\bw] = \Omega|_{y\bx}[W \setminus N[\bw], \bu\bw]$.
    This also holds for $N \setminus \bigcup \cal Z$ instead. By the same
    observation $\Omega[Z, \bu\bw] = \Omega|_{y\bx \bz_Z}$ for $Z \in \cal Z$. We
    get the equality as a result of the observation above and subtracting
    $\Omega|_{y\bx}[Z, \bu\bw]$ to prevent counting vertices in $Z$ twice. 
\end{proof}
\fi
Let us consider how a solution $\bu$ for $\cnt y \phi(y\bx)$ interacts with a
radius-$r$ decomposition of the input graph $G$ where $r$ is chosen
appropriately big, e.g. $2^k$ resulting from \Cref{lem:ball2}. First, we
transform $\phi$ into a set of positive clauses $\Omega$%
\iflongpaper
 using \Cref{lem:remove_uneq}%
\fi
, making the application of \Cref{lem:fulfilling-distr}
possible.

Consider some node $t$ in $T_r$. When applying \Cref{lem:fulfilling-distr} with
$P$ as the vertices of the root path of $t$ and $W$ as $T_t$, we see that the
resulting cover system $\cal Z$ corresponds to a selection of children of $t$ in
$T_r$, as both use the sets $X_r$ from \Cref{def:Xr}. Now imagine that we know
$\Omega_{y\bx\bz_Z}[Z,\bu]$ for every $Z \in \cal Z$. Note that this number only
depends on the assignment of $\bx\bz_z$ and not the vertices assigned outside
$P$ and $Z$. With \Cref{lem:fulfilling-distr} we can combine these numbers into
$\Omega[W,\bu]$ without needing to know the actual assignments of $\bz_Z$ in the
cover system anymore! Note that $\Omega_{y\bx}[Z]$ is easily computable while
only knowing $\bu$ and not $\bw$. 

Thus, we can compute $[[\cnt y \phi(y \bu)]]$ bottom-up using the radius-$r$
decomposition while only considering the vertices assigned in $\bu$ which are
contained in the root path of the considered vertex. 

\subsection{Dynamic Program}
To determine $\max_{\bu} \cnt y \phi(y \bu)$ for a quantifier-free formula
$\phi(y\bx)$ we recursively compute the following information in the
decomposition tree of $G$ (bottom-up, if you will). Consider some node $t$ of
$T$ and a partial assignment $\alpha$ of $\bx$ to the root path $\beta(P_t)$.
The interesting information is: How many vertices underneath $t$, i.e. in
$V(G_t)$, fulfill $\phi$ under the ``best'' choice on completing the assignment
$\alpha$ to vertices in $V(G_t)$. Then the answer to the problem can be read off
the information for the root node.

Assume we already know this kind of information for every child $t'$ of $t$. To
compute this information for $t$, we branch how the variables $x_i$ that are
not assigned under $\alpha$ are distributed among the children of $t$. Then the
table entries of these children are combined in a suitable way. We do this for
every distribution among children and take the maximum of the resulting values.
If a vertex corresponding to $t$ fulfills with the assignment the formula
$\phi$, it gets counted towards the number of ``fulfilling'' vertices.

However, we have to take more into consideration. First, branching on the
distribution of the unassigned variables $x_i$s under $\alpha$ among the
children of $t$ is not fast enough, as there are around $n^k$ possibilities for
that. Instead, we branch on how the unassigned variables are partitioned. For
every such partition, we formalize the optimal choice of children $t_i$ such
that they contain exactly the unassigned variables from the $i$-th part, as an
optimization problem. 

Secondly, the graphs $G_{t'}$ spanned by each child $t'$ of $t$ are in general
not disjoint. Combining the counts of two overlapping graphs yields to double
counting. We circumvent this in the above optimization problem.

Thirdly, we need to keep track of how the vertices in the root path $P_t$ are
adjacent to the variables $x_i$ that are assigned underneath $t$. We cannot
branch on the complete assignment as the number of those is too high.

Before we turn to the dynamic program on the decomposition tree, we consider
something simpler:

Let $G$ be a graph and $\phi(y\bx)$ be quantifier-free FO formula.
Consider the pair $(P,W)$ which is a set of vertices $P = \{v_1, \dots, v_k\}
\subseteq V(G)$ and a set $W \subseteq V(G)$ that is disjoint with $P$. We are
interested in how many vertices $v$ in $G[P \cup W]$ satisfy $\phi(v \bu)$ for
an optimal choice of $\bu \in (P\cup W)^{|\bu|}$. For this, we keep track of
$M_\alpha^{(P,W)}[S]$, which is the number of fulfilling vertices $v \in W$ wrt.
$\phi$, $\hat\alpha$ and $S$, maximizing over $S$-completions $\hat \alpha$ on
$W$.

We can ``forget'' a vertex $v$, i.e., derive the information of $(P,W \cup
\{v\})$ from the information $(P \cup \{v\}, W)$ as follows: Assume the maximum
number of fulfilling vertices in $W$ is $x$ for a given partial assignment
$\alpha$ on $P\cup\{v\}$ and adjacency profile $S$ on $P \cup \{v\}$. Then the
number of fulfilling vertices in $W \cup \{v\}$ is $x+1$ if $v$ satisfies $\phi$
with the assignment $\alpha$ and adjacency profile $S$, or $x$ otherwise.
However, neither $\alpha$ nor $S$ are valid assignments or adjacency profiles
for $P$. Hence, we need to adjust these so that we can formulate this
information for $(P,W\cup\{v\})$. For this, we need to remove $v$ from $\alpha$
and add the neighborhood of $v$ in $P$ to $S$ as $S_i$, for every $i$ with
$\alpha(x_i)=v$. Then, $M_\alpha^{(P \cup \{v\},W)}[S] = M_{\alpha|_P}^{(P, W
\cup \{v\})}[S'] (+1)$ where $\alpha|_P$ is the assignment $\alpha$ without $v$
and $S'$ is the adjacency profile as described above. 

One can also combine the information of two structures $(P, W_1)$ and $(P,W_2)$
to get the information of $(P, W_1 \uplus W_2)$ if $W_1$ and $W_2$ are disjoint.
This is also known as ``merge.'' Consider some assignment $\alpha$ on $P$ and
some adjacency profile $S$ on $P$. Then the number of fulfilling vertices in $U
\uplus W$ wrt $\phi$, $\alpha$ and $S$ is the $\max \{M_\alpha^{P,W_1}[S_1] +
M_\alpha^{P,W_2}[S_2] \mid S_1 \uplus S_2 = S \}$.

Indeed however, the algorithm does not take a quantifier-free formula $\phi$ but
a set of weighted positive conjunctive clauses. Instead of just counting the
fulfilled vertices, it computes the added up weight of them wrt. to the weights
of the clauses. 

\iflongpaper

\subsubsection{Some definitions}

Let $I \subseteq \N$. An \emph{$I$-adjacency profile $S$} of a set $P$ is a
collection of sets $\{S_i \subseteq P \}_{i \in I}$. For $J \subseteq I$, we
denote with $S|_J$ the collection $\{S_i\}_{i \in J}$. Equivalently, $S$ can be
interpreted as a function $S \colon P \to 2^I$.

Let $G$ be a graph and $T$ its $r$-decomposition, $t$ a node in $T$, $\alpha$ a
partial assignment of $\bx$ on $P_t$ and an $I$-adjacency profile $S$ on
$P_t$, where $I \subseteq [|\bx|] \setminus \dom(\alpha)$. Then $\hat \alpha$ is
an \emph{$S$-refinement of $\alpha$} if there exists a partial assignment $\bar
\alpha$ of $(x_i)_{i \in I}$ on $V(G_t)$ and $N(\bar \alpha(x_i)) \cap P_t =
S_i$ for all $i \in I$ and $\hat \alpha = \alpha \uplus \bar \alpha$.

Let $\psi(\bx)$ be a conjunctive clause and $\bz \subseteq \bx$. The
\emph{$\bz$-projection of $\psi$}, $\psi \star \bz$, is the conjunctive clause
which contains the literals of $\psi$ that involve at least one variable of
$\bz$. Note that for a graph $G$ and $\bu \in V(G)^{\bx}$ $G\models \psi(\bu)$
if and only if $G \models (\psi \star x_i) \psi(\bu)$ for all $x_i \in \bx$.

We say $v$ \emph{fulfills} a positive conjunctive clause $\omega(y\bx)$ wrt. a
partial assignment $\alpha$, an $I$-adjacency profile $S$ and a complete
conjunctive clause $\xi(\bx)$ if for every literal $E(y x_i)$ in $\omega$ either
$v$ is adjacent to $\alpha(x_i)$ (if assigned) or $v \in S_i$ (if $S_i$ exists).
The \emph{weight of $v$ in $\Omega$} wrt. $\alpha$ and $S$ is
\[ \sum \{ \mu \mid (\mu, \omega) \in \Omega \text{ and $v$ fulfills $\omega$
wrt. $\alpha$, $S$}\} . \]

Let $P,W \subseteq V(G)$, $v \in P$ such that $N^G[W\cap\hat\alpha(\bx)]
\subseteq W \cup P$. Consider an assignment $\hat\alpha(\bx)$, complete
conjunctive clause $\xi(\bx)$ with $G \models \xi(\hat\alpha(\bx))$. Then
$\Omega[v,\hat\alpha(\bx)] = \sum \mu[[\omega(v\hat\alpha(\bx))]] = \Omega[v,
G,\alpha, S,\xi]$.


With all the tools at hand, we can formulate Algorithm \ref{alg:dynprog} and
show its correctness.  

\begin{lemma} \label{lem:algo-correct} Let $\xi(\bx)$ be a complete conjunctive
    clause, $\Omega_\xi$ be a set of weighted complete conjunctive clauses
    $(\mu, \omega(y\bx))$ where $y$ appears in every literal. Let $G$ be a graph
    and $T$ be a radius-$r$ decomposition of $G$ with $r=2^k$. Then Algorithm
    \ref{alg:dynprog} computes 
    \[ \max_\bu \sum_{(\mu,\omega') \in \Omega_\xi}
    [[\cnt y \omega' (y\bu) \land \xi(\bu)]]^G . \] 
\end{lemma}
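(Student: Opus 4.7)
The plan is to prove correctness by induction on the nodes of the radius-$r$ decomposition tree $T$, processed bottom-up. The invariant maintained at each node $t$ is that the table entry $M_\alpha^{(\beta(P_t),V(G_t))}[S]$ stores the maximum, over all $S$-refinements $\hat\alpha$ of $\alpha$ that assign the remaining variables into $V(G_t)\setminus\beta(P_t)$, of $\sum_{v\in V(G_t)}$ (weight of $v$ in $\Omega_\xi$ with respect to $\hat\alpha$ and $S$). The target quantity $\max_{\bu}\sum_{(\mu,\omega')\in\Omega_\xi}[[\cnt y\,\omega'(y\bu)\wedge\xi(\bu)]]^G$ is then read off the root entry at $\alpha=\emptyset$ and $S=\emptyset$: because $\xi$ is a complete clause over $\bx$ and every $\omega'\in\Omega_\xi$ is complete in $y\bx$, the fulfillment condition baked into the weight automatically enforces $\xi(\bu)$ once the full assignment $\bu$ has been fixed.

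For the base case of an empty subtree the table is identically zero. For the inductive step at a node $t$ with $s=\beta(t)$ and children $t_1,\dots,t_m$, I would proceed in three phases. First, invoke the inductive hypothesis on each child $t_j$ to obtain its table indexed by the extended root path $\beta(P_t)\cup\{s\}$, and then apply a \emph{forget} step that reinterprets $s$ as belonging to $t$'s own root path: any variable previously assigned to $s$ is dropped from $\hat\alpha$, and its actual adjacencies into $\beta(P_t)$ are recorded as the corresponding slot of the outgoing $S$-profile. Second, for every partition of the still-unassigned variables into parts, use Lemma \ref{lem:find-disj-clusters} to select pairwise disjoint children realizing the partition optimally, and merge the chosen entries via $M^{(P,W_1\uplus W_2)}_\alpha[S]=\max\{M^{(P,W_1)}_\alpha[S_1]+M^{(P,W_2)}_\alpha[S_2]\mid S_1\uplus S_2=S\}$. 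Third, add the local contribution of $s$ itself, computed as the weight of $s$ with respect to $\alpha$ and $S$.

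The central obstacle is to justify that restricting to pairwise disjoint children loses no optimality and that the merge correctly handles the fact that the cluster subgraphs attached to different children may overlap. This is where Lemma \ref{lem:ball2} does the crucial work: for any optimal completion $\bu$ of size $k$, there exists a cover system of $G_t[N[\bu\cap V(G_t)]]$ of size at most $k$ whose clusters have radius at most $2^k=r$, and such clusters correspond by construction of $T_r(G,\pi,\ell)$ to children of $t$ in the decomposition tree. Lemma \ref{lem:fulfilling-distr} then supplies the algebraic identity $\Omega[W,\bu\bw]=\Omega|_{y\bx}[W,\bu\bw]+\sum_{Z\in\mathcal{Z}}(\Omega|_{y\bx\bz_Z}[Z,\bu\bw]-\Omega|_{y\bx}[Z,\bu])$ that justifies the merge: cluster-internal contributions come from child tables, the global $\Omega|_{y\bx}$ part is computable from $\alpha$ and $S$ alone at the current node, and the subtracted term cancels the overcounting from vertices that the cover system accidentally sees twice. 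Combining the three ingredients closes the inductive step, and evaluating the resulting table at the root produces the claimed maximum.
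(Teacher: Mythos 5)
Your proposal follows the paper's strategy (induction on the decomposition tree; Lemma~\ref{lem:ball2} for the cover system of radius $2^k$; Lemma~\ref{lem:fulfilling-distr} for the inclusion--exclusion merge identity; Lemma~\ref{lem:find-disj-clusters} to select disjoint clusters), but there is a genuine gap in the handling of $\xi$. You claim the fulfillment condition in the weight ``automatically enforces $\xi(\bu)$,'' yet every clause $\omega'\in\Omega_\xi$ has $y$ in each of its literals, so the per-vertex weight only inspects atoms like $E(y,x_i)$ or $y=x_i$ and is completely blind to the atoms $E(x_i,x_j)$ and $x_i=x_j$ that make up $\xi$. Consequently your invariant maximizes over \emph{all} $S$-refinements, including ones whose tuple $\bu$ violates $\xi$; for those the lemma's target contributes $0$, but your table can store a strictly larger value. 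For instance, take $\phi(y,x_1,x_2)=\neg E(y,x_1)\land E(x_1,x_2)$: after the inclusion--exclusion reduction, the call for the $\xi$ with $E(x_1,x_2)\in\xi$ would be free to pick an isolated $u_1$ and a non-adjacent $u_2$ and report $n$, even though no $\bu$ with isolated $u_1$ satisfies $\xi$. The paper repairs this by carrying $G[G_t\cup P]\models(\xi\star I)(\hat\alpha(\bx))$ inside the invariant and by checking the newly decidable piece $\xi\star J$ of $\xi$ at each forget step (line~\ref{line:contradicts} of Algorithm~\ref{alg:dynprog}), discarding inconsistent entries. Without such a check the induction does not close, and the outer maximization over $\xi$ in the proof of Theorem~\ref{thm:algo} would inherit the spurious values.

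A secondary concern is the ordering inside the inductive step: you forget $s$, projecting each child's $S$-profile from $\beta(P_t)\cup\{s\}$ down to $\beta(P_t)$, \emph{before} you compute the local weight of $s$ in phase~3; but that weight needs to know which subtree-assigned variables are adjacent to $s$, and exactly that bit is discarded by the projection. The paper therefore computes the weight of $v$ while the profile is still indexed over $P_t\cup\{v\}$ and only afterwards re-indexes. Likewise the base case should be the leaf node, not the empty tree (cf.\ lines~\ref{line:start-leaf}--\ref{line:end-leaf}), since a leaf still carries a vertex that must be allowed to absorb unassigned solution variables.
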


\begin{proof}
    Notice that $M$ maps adjacency profiles of $P_t$ to integers. 
    Let $S$ be an $I$-adjacency profile on $P_t$ for some $I$.
    At the end of the recursive call of $algo(t,\alpha)$,
    For every $S$, there exists an $S$-refinement $\hat \alpha$ to $G_t$ 
    such that $M[S] = \Omega[G_t, \hat \alpha( \bx)]$ and $G[G_t \cup P] \models
    (\xi \star I) (\hat\alpha (\bx))$. Let $\hat \alpha$ be such a refinement
    that maximizes $M[S]$.
    
    
    Let $v = \beta_T(t)$.    
    Assume $t$ is a leaf. Then, $v$ can take the multiple roles of any
    unassigned $x_i$ under $\alpha'$ or no role at all. Assume $\alpha'(x_i) =
    v$. Then $S_i$ is the neighborhood of $v$ on $P_t$ and $S = \{S_i \mid
    \alpha'(x_i) = v\}$. In any case, $v$ fulfills $\omega$ wrt. $G$, $S$ and
    $\alpha$ if and only if $v$ fulfills $\omega$, $G$, $\emptyset$ and
    $\alpha'$ for all $\omega \in \Omega$. This is computed in lines
    \ref{line:start-leaf}-\ref{line:end-leaf}.
    
    Otherwise, assume $t$ is an internal node of $T$.
    
    Consider an $I$-adjacency profile $S$ of $P_t$ and et $\hat \alpha$ be an
    $S$-refinement of $\alpha$ to $G_t$ which maximizes the weight of vertices
    in $G_{t}$ wrt. $G$ and $\hat\alpha$, i.e $\Omega[G_t, \hat\alpha(\bx)]$.
    
    We change our viewpoint from $P_t$ to $P_t \cup \{v\}$. For this, let
    $\alpha'$ be the restriction of $\hat \alpha$ to $P_t \cup \{v\}$. In
    another words, $\alpha'$ is an refinement of $\alpha$ to $P_t \cup \{v\}$.
    Let $S'$ be the adjacency profile on $P_t \cup \{v\}$ of $\hat\alpha$, i.e.,
    $S_i' = N[\hat\alpha(x_i)] \cap (P_t \cup \{v\})$ for $\hat \alpha(x_i) \in
    V(G_t - v)$.
    
    To determine $\Omega[G_t-v, \hat \alpha(\bx)]$ we want to apply
    \Cref{lem:fulfilling-distr}: Setting $\bx' = \dom(\alpha') \subseteq
    \beta(P_t \cup \{v\})$, 
    by \Cref{lem:fulfilling-distr} there exists a cover system $\cal Z$ in
    $G_t-v$ of radius $r$ such that  
    \begin{align}\label{eq:omega}
    	\Omega[G_t-v, \hat \alpha(\bx)] = \Omega|_{y\bx'}[G_t-v, \hat\alpha(\bx)] 
    	+ \sum_{Z \in \cal Z} (\Omega|_{y\bx'\bz_Z}[Z, \hat\alpha(\bx)] - \Omega|_{y\bx}[G_t-v, \hat\alpha(\bx)]) .
    \end{align}
    
    Note that by definition $M_{\alpha'}[S] = \Omega[G_t-v, \hat\alpha(\bx)]$
    which equals $\max_\bw \Omega[G_t-v, \alpha'(\bx)\bw]$ where $\bw$ ranges
    over tuples $\bw$ whose sets neighborhoods equals $S$. Both
    $\Omega|_{y\bx'}[G_t-v,\hat\alpha(\bx)]$ and
    $\Omega|_{y\bx'}[Z,\hat\alpha(\bx)]$ can be easily computed in linear time
    (without recursion) as their evaluation depends only on $\alpha'$.
    
    To compute the above sum (\Cref{eq:omega}), we need to determine
    $\Omega|_{y\bx'\bz_Z}[Z, \hat\alpha(\bx)]$ recursively. Consider $Z \in \cal
    Z$ and let $S' = \{S_i \in S \mid \bw_i \in Z \}$. As the covering system
    $\cal Z$ from \Cref{lem:fulfilling-distr} is a subset of $\cal X_r$, by
    construction (\Cref{def:our-tree}) there exists a child $t'$ of $t$ with
    $V(G_{t'}) = Z$ and for that by induction, $M_{t'}[S'] =
    \Omega|{y\bx\bz_{Z}}[Z ,\hat\alpha(\bx)]$.
    
    Finding such a cover system $\cal Z$ for an optimal choice of
    $\hat\alpha(\bx)$ is modeled with an instance of the Disjoint Cluster
    Maximization problem where the weights are set as described in
    \Cref{eq:omega} (lines \ref{line:combine_start}-\ref{line:combine_end}). As
    the form of the cover system is not known beforehand, i.e., it is not know
    which $x_i$ belong into the same cover system, the algorithm branches over
    all partitions of unassigned variables.
    
    To recap, before line \ref{line:forget_start} $M_{\alpha'}[S'] =
    \Omega[G_t-v, \hat\alpha(\bx)]$ where $S'$ is an adjacency profile on $P_t
    \cup \{v\}$. Now note that at this point it is not guaranteed that $\hat
    \alpha(\bx)$ does not contradict $\xi$, i.e., $G \models (\xi \star I)(\hat
    \alpha(\bx))$. By induction, we know that $G \models (\xi \star \bz_Z)(\hat
    \alpha(\bx))$ for all $Z \in \cal Z$. Hence, for the algorithm it remains to
    make sure whether $G \models (\xi \star J)(\hat \alpha(\bx))$ for the
    variables $J = \hat\alpha^{-1}(v) = \alpha'^{-1}(v)$. This can be derived
    from $S'$ and $\alpha'$ and happens in line \ref{line:contradicts}.
    
    After line \ref{line:forget_end}, $M_{\alpha'}[S] = \Omega[G_t,
    \hat\alpha(\bx)]$ where $S$ is an adjacency profile on $\beta(P_t)$ (instead
    of $\beta(P\cup\{x\})$ as before). 

    As now all information about $v$ is taken care of, the parts of the
    assignment which are assigned to $v$ are forgotten and collect the resulting
    values into $M[S]$ (lines \ref{line:collect_start}-\ref{line:collect_end}).

    If $t$ is the root of $T$, we return $M[\varnothing]$ (which is the only
    entry of $M$) which is $\max_\bu \Omega[\emptyset, \bu] = \max_\bu
    \sum_{(\mu,\omega) \in \Omega} \mu [[\cnt y \omega(z\bu)]]^G \land
    [[\psi(\bu)]]^G$.
\end{proof}

\begin{algorithm}
    \footnotesize
    \caption{${\it algo}(t,\alpha)$}\label{alg:dynprog}
    \KwIn{A graph $G$ with a decomposition $T$ of $G$, a node
      $t$ of $T$, a partial assignment $\alpha$ of $\bx$ on $P_t$,  
      complete conjunctive clause $\xi(\bx)$
      and a set $\Omega$ of weighted positive clauses $\omega(\bx)$ 
      }  
      \KwOut{$M$ with $M[S]$ as described above.} \BlankLine
  
    $M,M_{t'} :=$ are empty associative arrays over the family of subsets of
    $\beta(P_t)$ for every child $t'$ of $t$. If an entry is not in the array
    its value is $-\infty$\; $v := \beta_T(t)$ (vertex of $t$)\;
    \tcc{Base case}
    \If{$t$ is a leaf in $T$\label{line:start-leaf}}{
        
      \ForEach{Possible refinement $\alpha'$ of $\alpha$ to $v$}{
        \lIf{$\alpha'$ and $S$ contradict $\xi$ }{skip}
        $S := \{\}$\;
        \ForEach{$i \in \alpha'^{-1}(v)$}{
          $S_{i} := N[v] \cap \beta(P_t)$\; 
          $S := S \cup \{ S_i \}$\;
        }
        $M_{\alpha}[S] := \Omega_{y \dom(\alpha')}[v, \alpha'(\bx)]$
      }
      \Return $M$\;\label{line:end-leaf}
    }
  
    \ForEach{Possible refinement $\alpha'$ of $\alpha$ to $v$\label{line:try-assignments}}{
      clear $M_{t'}$s\;
      \ForEach{Child $t'$ of $t$\label{line:start-x-notin-d}}{
        $M_{t'} := algo(t', \alpha')$\;
      }
      \tcc{combine results from children}
      \ForEach{$I \subseteq [k]\setminus \dom(\alpha')$\label{line:combine_start}}{ \tcc{Not assigned $x_i$s}
        \ForEach{$I$-adjacency profile $S$ on $P_t \cup \{v\}$}{
          \ForEach{Partition $\cal I$ of $I$}{
            Init $w \colon \cal X_r^{G_t} \times \cal I \to \N$ \tcc{$w$ is a weighting function}
            \ForEach{Child $t'$ of $t$ and $J \in \cal I$}{
              $\delta := \Omega|_{y \dom(\alpha')}[G_{t'}, \alpha'(\bx)]$\;
              $w(V(G_t), J) := M_{t'}[S|_J]$\;
            }
            $\Delta := \Omega|_{y \dom(\alpha') J}[G_t-v, \alpha'(\bx)]$\;
            $d^* := \Delta + $ weight of Disjoint Cluster Maximizer of $\cal X_r^{G_t}$ and $w$\; 
            $M_{\alpha'}[S] := \max\{M_{\alpha'}[S], d^*\}$\label{line:combine_end}\;
          }  
        }
      }
      \tcc{forget $v$}
      \ForEach{$S \in M_{\alpha'}$\label{line:forget_start}}{
        \lIf{$\alpha'$ and $S$ contradict $\xi$\label{line:contradicts}}{remove $S$ from $M_{\alpha'}$ and skip}
        $M_{\alpha}[S] :=$ weight of $v$ in $\Omega$ wrt. $\alpha'$ and $S$\;
        $S' := S$\;
        \tcc{add the n'hood of $v$ to adjacency profile with index of $x$ in part. assignment $\alpha'$} 
        \ForEach{$i \in \alpha'^{-1}(v)$}{
          $S_{i} := N[v] \cap \beta(P_t)$\; 
          $S' := S' \cup \{ S_i \}$\;
        }
        $M_{\alpha'}[S'] := M_{\alpha'}[S]$\;
        \lIf{$S \neq S'$}{remove $S$ from $M_{\alpha'}$\label{line:forget_end}}
      }
    }
    
    \tcc{collect}
    \ForEach{adjacency profile $S$ (without $v$)\label{line:collect_start}}{
      $M[S] = \max \{ M_{\alpha'}[S] \mid \text{$\alpha'$ is an refinement of $\alpha$ on $v$}  \}$\label{line:collect_end}\;
    }
  
    \tcc{return}
    \lIf{$v$ is the root of $T$}{\Return $M[\varnothing]$}
    \lElse{\Return $M$}
 \end{algorithm}

\fi

\theoremalgo*

\iflongpaper

\begin{proof}
    Using \Cref{lem:remove_uneq} we can compute a set of weighted positive
    conjunctive clauses $\Omega$ with 
    \[
        [[\cnt y \phi(y\bu^*)]]^G = \sum_{(\mu,\omega) \in \Omega} [[\cnt y \omega (y\bu^*)]]^G
    \] for every $\bu \in V(G)^\bx$ in time $f(k)$.

    For every complete conjunctive clause $\xi(\bx)$, we compute the set
    $\Omega_\xi$. Let $\omega(\bx)$ be a conjunctive clause. We decompose
    $\omega$ into $\omega(\bx) \equiv \omega'(y\bx) \land \psi(\bx)$ where
    $\psi(\bx)$ is the conjunction of literals of $\omega$ which contain only
    $\bx$ as variables and $\omega'(y\bx)$ are remaining literals of $\omega$.
    For every $(\mu, \omega) \in \Omega$, $(\mu, \omega'(y\bx))$ is added to
    $\Omega_\xi$ where $\omega(y\bx) \equiv \Delta(y\bx) \land \psi(\bx)$ as
    above and $\xi(\bx) \models \psi(\bx)$. Note that for every vertex tuple
    $\bu$ there exists exactly one such $\xi$ with $G \models \xi(\bu)$. Also,
    for that $\xi$ \[ \sum_{(\mu,\omega) \in \Omega} [[\cnt y \omega (y\bu)]]^G
    = \sum_{(\mu,\omega') \in \Omega_\xi} [[\cnt y \omega' (y\bu)]]\land
    [[\xi(\bu)]]^G . \]

    Computing a good ordering $\pi$ of $G$ with $\wcol_r(G) \leq n^\epsilon$ and
    a decomposition tree $T_r(G,\pi,\ell)$ tales almost linear time by
    \Cref{lem:wcol-approx} and \Cref{lem:tree-good}.

    Using Algorithm \ref{alg:dynprog} on $G$, $T$, $\Omega_\xi$ and $\xi$ for
    every complete conjunctive clause $\xi$ and taking the best result of those
    calls, gives us by \Cref{lem:algo-correct} the correct result for the stated
    problem.

  The (non-recursive) computation of a child takes $t$ almost linear time in
  $V(G_t)$. Also, for every child $t'$ of $t$, there is a recursive call. We get
  the following recurrence relation $R(j,n)$ for the time needed to evaluate a
  node $t$ at level $j$ and $n = |G_t|$:
  \begin{align*}
    R(0, n) &\leq c \\
    R(j,n) &\leq \sum_{X \in \cal X_r} c R(j-1, |X|) + c n^{1+\delta} \qquad \text{for all $j \geq 1$}
  \end{align*}
  In \cite{GroheKS17}, the authors showed that $R(j,n)$ can be bounded by
  $c^\ell n^{1+\epsilon}$. As $c$ and $\ell$ only depend on $\phi, \cal C$ and
  $\epsilon$, we get the desired result.
\end{proof}

\fi

\section{Characterizing Almost Nowhere Dense Graph Classes}
\label{sec:almost-nd}

In this section, we provide various characterizations
of almost nowhere dense classes, i.a.\ via bounded depth minors
and generalized coloring numbers.

\begin{definition}[Almost nowhere dense]
A graph class $\cal C$ is \emph{almost nowhere dense} if for every $r \in
\N$, $\varepsilon > 0$ there exists $n_0$ such that no graph $G\in\cal
C$ with $|G| \ge n_0$ contains $K_{\lceil |G|^\varepsilon \rceil}$
as a depth-$r$ minor.
\end{definition}

\begin{theorem}\label{thm:characterization_almost_nowhere_dense}
    Let $\cal C$ be a graph class.
    The following statements are equivalent.
    \begin{enumerate}
        \item $\cal C$ is almost nowhere dense.
       \item For every $r \in \N$, $\varepsilon > 0$ there exists $n_0$
            such that no graph $G \in \cal C$ with $|G| \ge n_0$
            contains $K_{\lceil |G|^\varepsilon \rceil}$ as a depth-$r$ minor.
        \item For every $r \in \N$, $\varepsilon > 0$ there exists $n_0$
            such that no graph $G \in \cal C$ with $|G| \ge n_0$
            contains $K_{\lceil |G|^\varepsilon \rceil}$ as a depth-$r$ topological minor.
        \item \label{subdiv_anwd_char}For every $r \in \N$, $\varepsilon > 0$ there exists $n_0$
            such that no graph $G \in \cal C$ with $|G| \ge n_0$
            contains an $r'$-subdivision of $K_{\lceil |G|^\varepsilon
            \rceil}$ as a subgraph for any $r' \le r$.
        \raggedright
        \item For every $r \in \N$, $\varepsilon > 0$ there exists $n_0$
            such that
            $\wcol_r(G) \le |G|^\varepsilon$
            for every graph $G \in \cal C$ with $|G| \ge n_0$.
        \item \label{col_anwd_char}For every $r \in \N$, $\varepsilon > 0$ there exists $n_0$
            such that
            $\col_r(G) \le |G|^\varepsilon$
            for every graph $G \in \cal C$ with $|G| \ge n_0$.
    \end{enumerate}
\end{theorem}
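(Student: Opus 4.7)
The plan is to establish the equivalences by routing through the well-studied polynomial relationships between these sparsity parameters. Conditions (1) and (2) are literally the same by the definition of almost nowhere dense given just before the theorem. For the remaining conditions, I will use that the relevant quantities---the maximum $t$ with $K_t$ as depth-$r$ minor, as topological depth-$r$ minor, or as an exact $r'$-subdivision for $r' \le r$, together with $\wcol_r(G)$ and $\col_r(G)$---are each polynomially bounded by the others (with exponents and constants depending on $r$), and polynomial transformations preserve subpolynomial growth in $|G|$.

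For (2) $\Leftrightarrow$ (3), the forward direction is immediate since topological minors are minors, and the reverse uses the well-known lemma that $K_t$ as a depth-$r$ minor of $G$ implies $K_{t'}$ as a depth-$2r$ topological minor with $t' = \Omega(\sqrt{t})$. For (5) $\Leftrightarrow$ (6), I would use $\col_r(G) \le \wcol_r(G)$ together with a polynomial upper bound of the form $\wcol_r(G) \le f(r) \cdot \col_r(G)^{g(r)}$. For (3) $\Leftrightarrow$ (5), I would invoke the equivalences of \Nesetril-Ossona de Mendez and Zhu: $\wcol_r$ and $\tilde\nabla_r$ are polynomially equivalent for each fixed $r$, and $\tilde\nabla_r(G)$ is polynomially equivalent to the max clique size appearing as a topological depth-$r$ minor of $G$ (lower bound via the density of $K_t$ itself; upper bound via Kostochka-Thomason).

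The main obstacle is (4) $\Leftrightarrow$ (3). The direction (3) $\Rightarrow$ (4) is trivial since an exact $r'$-subdivision (with $r' \le r$) is a $\le r$-subdivision and thus a topological depth-$r$ minor. For (4) $\Rightarrow$ (3), a direct pigeon-hole / Ramsey argument on the subdivision-length edge-coloring of a $\le r$-subdivision of $K_t$ only produces a monochromatic exact $r'$-subdivision of $K_{\Theta(\log t)}$, which is insufficient to transfer subpolynomial bounds since $\log|G|$ is itself subpolynomial. I would instead route the implication via the coloring numbers: first show (4) $\Rightarrow$ (6) by arguing that a graph with large $\col_r(G)$ contains a dense substructure in which an exact $r'$-subdivision of a polynomially large clique can be extracted (combining Koml\'os--Szemer\'edi-style clique-subdivision lemmas with an indirect-path matching argument that uniformizes the path lengths), then conclude (6) $\Rightarrow$ (3) from the previous steps.
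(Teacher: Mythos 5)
Your high-level plan---noting that all the relevant quantities are polynomially related, and that polynomial relations preserve subpolynomial growth---is the right one and matches the spirit of the paper's proof. The identifications $1\Leftrightarrow 2$ (definitional), $5\Leftrightarrow 6$ (via $\col_r \le \wcol_r \le \col_r^r$), and the trivial direction $\neg 4 \Rightarrow \neg 3$ are all as in the paper. Your route from $3$ to $5$ via $\tilde\nabla_r$ and Kostochka--Thomason differs from the paper, which proves $\neg 3 \Rightarrow \neg 5$ by an elementary direct argument (look at the principal vertices $P$ of the subdivided clique, take the $\pi$-maximal $v\in P$, and observe that the $\pi$-minimal vertices on the $\le r$-paths to the other principal vertices are distinct members of $\WReach_{r+1}[G,\pi,v]$, so $\wcol_{r+1}(G)\ge t$); and $\neg 6\Rightarrow\neg 3$ through degeneracy and a high-minimum-degree subdivision lemma. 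Both routes are reasonable; the paper's is more self-contained.

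The real problem is your treatment of $(3)\Rightarrow(4)$. You claim that a pigeonhole argument on the subdivision-length edge-coloring of a $\le r$-subdivision of $K_t$ can only produce an exact-$r'$-subdivided clique of size $\Theta(\log t)$, and you invoke a vaguely sketched ``Koml\'os--Szemer\'edi plus path-uniformization'' workaround. But the $\Theta(\log t)$ bound only arises if you insist on a \emph{monochromatic clique}, i.e., you apply multicolor Ramsey. That is not the right move, and the paper does something much simpler and stronger. Taking the \emph{majority} subdivision length among the $\binom{t}{2}$ edges of $K_t$ gives a graph $H$ on $t$ vertices with at least $t(t-1)/(2(r+1))$ edges, all of whose edges are subdivided exactly $r'$ times in $G$. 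This $H$ has average degree $\Omega(t/r)$, hence (after passing to a core) a subgraph of minimum degree $\Omega(t/r)$; Dvo\v{r}\'ak's lemma then extracts a $1$-subdivision of a clique of size $\Omega(t^{0.1}/\mathrm{poly}(r))$ as a subgraph of $H$. Composing the $1$-subdivision in $H$ with the uniform $r'$-subdivision back in $G$ yields an \emph{exact} $(2r'+1)$-subdivision of a polynomially large clique in $G$. This is precisely the paper's Lemma~\ref{lem:exact_subdiv}, and it handles $\neg 3 \Rightarrow \neg 4$ directly; there is no need for the detour through coloring numbers or a path-uniformization matching argument. Without this observation your sketch of $(4)\Rightarrow(3)$ (equivalently, $(4)\Rightarrow(6)\Rightarrow(3)$) is not actually filled in: the ``uniformize the path lengths'' step is exactly the missing content, and the way you phrased it does not make clear how the polynomial size is preserved.
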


The characterizations from \Cref{thm:characterization_almost_nowhere_dense} are very similar to those for nowhere dense classes.
The only difference in the characterizations 1.\ to 4.\ would be the size of the forbidden cliques: for nowhere dense classes, the size would be $f(r)$ instead of $\lceil |G|^\varepsilon \rceil$.
Similarly, if we would substitute ``for every $G \in \cal C$'' with ``for every subgraph $G \subseteq H \in \cal C$'' in characterizations 5 and 6 would characterize nowhere dense classes. 
Note that every almost nowhere dense class which is monotone, i.e. closed under taking subgraphs, is also nowhere dense. 

Conversely, if a class $\cal C$ is almost nowhere dense, then its subgraph-closure $\cal C_\subseteq$ is not almost nowhere dense in general. 
Consider for this the class of graphs which for every $n \in \N$ contains independent set of size $n$ with a clique of size $\log n$, i.e. the graph $I_n \cup K_{\log n}$. This class is almost nowhere dense but its subgraph-closure contains cliques $K_n$ of every size $n$ as member, and so, all graphs.

\iflongpaper


We need the following theorem by Grohe, Kreutzer and
Siebertz~\cite{GroheKS2013}, which in turn builds upon the original results of
Kierstead and Yang~\cite{coloringdefinition} and Zhu~\cite{Zhu2009}.


\begin{proposition}[{\cite[Theorem 3.3]{GroheKS2013}}]
    There exists a function $f \colon \N \to \N$
    such that for all $d,r \in \N$ and all classes $\cal C$ of graphs, if the class
    of all topological depth-$r$ minors of $\cal C$ is $d$-degenerate then $\col_r(G') \le f(r)\cdot d$ for every subgraph $G' \subseteq G$ of a graph $G \in \cal C$.
\end{proposition}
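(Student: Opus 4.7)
The plan is to prove the proposition via the \emph{$r$-admissibility number}, which provides a bridge between topological-minor degeneracy and generalized coloring numbers. Recall that $\mathrm{adm}_r[G,\pi,v]$ denotes the maximum number of pairwise internally vertex-disjoint paths of length at most $r$ from $v$ to distinct vertices $u$ with $u<_\pi v$. A classical lemma of Kierstead and Yang, refined by Dvo\v{r}\'ak, states that $\col_r(G,\pi)$ is bounded by a function depending only on $r$ and $\mathrm{adm}_r(G,\pi)$. Accordingly, the task reduces to exhibiting, for every subgraph $G' \subseteq G$ of a graph $G \in \cal C$, an ordering $\pi$ of $G'$ whose admissibility is controlled linearly in $d$, so that the polynomial blow-up introduced by the admissibility-to-coloring-number conversion is absorbed into the factor $f(r)$ of the stated bound.

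The ordering $\pi$ is constructed by reverse greedy elimination. At each stage let $H$ be the current subgraph; choose a vertex $v \in V(H)$ that minimizes the quantity $k_v$ defined as the maximum number of pairwise internally vertex-disjoint paths of length at most $r$ in $H$ from $v$ to distinct other vertices of $H$. Place $v$ in the largest available position, delete $v$ from $H$, and iterate until $H$ is empty. By construction, $\mathrm{adm}_r[G',\pi,v]$ equals the value $k_v$ measured in the subgraph present when $v$ was selected, since later-removed vertices become exactly the $u<_\pi v$ targets and earlier-removed vertices serve as potential internal vertices $>_\pi v$.

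The heart of the argument is to show that, in any nonempty subgraph $H \subseteq G$, some vertex $v$ has $k_v \leq g(r)\cdot d$ for a computable $g$. Suppose for contradiction every $v \in V(H)$ admits more than $g(r)\cdot d$ internally-disjoint paths of length at most $r$ to distinct partners in $H$. From this abundance of paths we extract a topological depth-$r$ minor $M$ of $H$ whose principal vertices are chosen among $V(H)$ and whose branch paths are taken from these short-path families. If $M$ can be chosen with minimum degree exceeding $d$, this contradicts the $d$-degeneracy of the topological depth-$r$ minors of $\cal C$. Combining this with the greedy construction of $\pi$ yields $\mathrm{adm}_r[G',\pi,v] \leq g(r)\cdot d$ at every $v$, and translating via the Kierstead-Yang/Dvo\v{r}\'ak inequality gives $\col_r(G') \leq f(r)\cdot d$ uniformly over $G' \subseteq G$ and $G \in \cal C$.

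The main obstacle is the extraction of the topological minor $M$ in the third step: the paths guaranteed at each vertex are only internally disjoint \emph{per source vertex}, whereas $M$ requires them to be internally disjoint \emph{globally across all selected branch paths}. The standard way around this is an iterative conflict-avoidance argument: one processes candidate edges of $M$ in some order, and at each step the already-used internal vertices block only $O(r)$ of the remaining path options per vertex; since every vertex of $H$ possesses a much larger reserve of internally-disjoint candidate paths (namely more than $g(r)\cdot d$), a routing/counting argument ensures a fresh globally-disjoint path is always available, producing $M$ with average degree exceeding $2d$, hence a subgraph with minimum degree exceeding $d$, as required.
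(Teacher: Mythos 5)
You are proving a statement the paper itself does not prove: it is quoted as a black box from Grohe--Kreutzer--Siebertz \cite{GroheKS2013}, so your attempt has to stand on its own. Measured that way, the decisive gap is your last step. The known conversions from $r$-admissibility to the strong colouring number are polynomial with an exponent depending on $r$ (bounds of the shape $\col_r(G)\le \mathrm{adm}_r(G)\cdot(\mathrm{adm}_r(G)-1)^{r-1}$, obtained by a short-path separator recursion), so from $\mathrm{adm}_r(G',\pi)\le g(r)\,d$ you only get $\col_r(G')\le f(r)\,d^{\,r}$. The blow-up cannot be ``absorbed into the factor $f(r)$'' because it hits $d$ as well, and the linear dependence on $d$ is exactly what the proposition asserts; no linear bound of $\col_r$ in terms of $\mathrm{adm}_r$ is available to quote. (For how the surrounding paper uses the proposition a bound $f(r)\,d^{\,r}$ would actually suffice, but that is a weaker statement than the one you set out to prove.) There is also a definitional slip in the greedy step: with the standard notion of admissibility, the admissible paths from $v$ end at smaller vertices but may pass internally through \emph{larger} (already eliminated) vertices, whereas your $k_v$ is measured entirely inside the remaining subgraph $H$, which consists of the smaller vertices; so $\mathrm{adm}_r[G',\pi,v]$ need not equal $k_v$, and the greedy has to be set up consistently with whichever variant you use.

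The extraction of the dense shallow topological minor---which you rightly call the heart of the argument---is also not actually carried out. You aim for minimum degree $>d$ at every branch vertex and claim the already-used internal vertices ``block only $O(r)$ of the remaining path options per vertex,'' but that is a per-step count: over the whole construction $\Theta(d\,n)$ branch paths are embedded, each consuming up to $r-1$ internal vertices, so the cumulative blocking at a fixed vertex can be $\Theta(r\,d\,n)$ and wipe out its reserve of $g(r)\,d$ disjoint paths. The arguments in the literature avoid this with an amortized density accounting: one only builds the partial topological minor until its edge density exceeds $d$, and as long as the density is at most $d$ the total number of consumed vertices is $O(r\,d\,|V(M)|)$, small enough that a fresh, globally disjoint path to a fresh target always survives; one must also reconcile branch vertices with subdivision vertices (in your setup the fans' internal vertices and the intended principal vertices both live in $H$) and avoid endpoints that were already used as internal vertices. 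So what you have is a plausible roadmap in the spirit of Dvo\v{r}\'ak's and Grohe--Kreutzer--Siebertz's arguments, but the counting that makes it work is missing, and even if completed along these lines it would end at $f(r)\,d^{\,r}$ rather than the claimed $f(r)\cdot d$.
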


By setting $\cal C$ to be the class containing only a single graph
and reversing the statement, we get the following statement that better suits our needs.

\begin{corollary}\label{cor:deg}
    There exists a function $f \colon  \N \to \N$
    such that for all $d,r \in \N$ and all graphs $G$,
    if $\col_r(G) \ge f(r) \cdot d$ 
    then there exists a topological depth-$r$ minor of $G$ that is not $d$-degenerate.
\end{corollary}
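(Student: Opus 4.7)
The plan is to derive this corollary as an immediate contrapositive of the cited proposition (Theorem~3.3 of~\cite{GroheKS2013}), specialized to the singleton class $\cal C = \{G\}$. Let $f$ denote the function supplied by that proposition.

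First, I would instantiate the proposition with the single-graph class $\cal C = \{G\}$. Its hypothesis then reads ``every topological depth-$r$ minor of $G$ is $d$-degenerate'', and its conclusion becomes $\col_r(G') \le f(r) \cdot d$ for every subgraph $G' \subseteq G$. Choosing $G' = G$ in particular yields $\col_r(G) \le f(r) \cdot d$.

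Taking the contrapositive gives precisely what the corollary asks for, up to an off-by-one in the inequality: if $\col_r(G) > f(r) \cdot d$, then some topological depth-$r$ minor of $G$ fails to be $d$-degenerate. To align with the non-strict form $\col_r(G) \ge f(r) \cdot d$ stated in the conclusion, I would simply replace $f$ by $f'(r) := f(r) + 1$; since the corollary only asserts the existence of such a function, this is a cosmetic adjustment rather than a real step.

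I do not expect any substantive obstacle here: all of the content is already carried by the cited proposition, and the only moves are to restrict its universally-quantified ``class and all subgraphs'' formulation to a single graph, take the contrapositive, and absorb the strict-vs-non-strict discrepancy into the witnessing function. The only thing to watch is that the proposition quantifies over both a class $\cal C$ and subgraphs of members of $\cal C$; instantiating with $\cal C = \{G\}$ and then choosing the subgraph to be $G$ itself is what makes the reduction to the single-graph statement clean.
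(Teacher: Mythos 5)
Your proof is exactly the paper's: instantiate the proposition with the singleton class $\mathcal{C} = \{G\}$, take the subgraph $G' = G$, and pass to the contrapositive. The paper doesn't even mention the strict-versus-non-strict adjustment you noted, so your observation about absorbing it into $f$ is a harmless refinement of the same argument.
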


We also need the following observations about degeneracy.

\begin{fact}\label{fact:core}
    A $d$-core of a graph $G$ is a maximally connected sub\-graph of $G$ in which all vertices have degree at least $d$.
    The degeneracy of a graph $G$ is the largest number $d$ for which $G$ has a $d$-core. 
    If a graph is not $d$-degenerate then it has a $(d+1)$-core and therefore a subgraph in which all vertices have degree at least $d+1$.
    A graph with degeneracy $d$ has at most $nd$ edges.
\end{fact}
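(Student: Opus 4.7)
The plan is to verify the three assertions of the fact in order, relying only on the definition that a graph $G$ is $d$-degenerate if every subgraph of $G$ contains a vertex of degree at most $d$, equivalently that the degeneracy of $G$ equals $\max_{H \subseteq G} \delta(H)$, where $\delta$ denotes minimum degree.

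First I would prove that the degeneracy equals the largest $d$ for which a $d$-core exists. If $C$ is a $d$-core of $G$, then $C$ is a subgraph with $\delta(C) \ge d$, so the degeneracy is at least $d$. Conversely, if the degeneracy is $d$, then there is a subgraph $H \subseteq G$ with $\delta(H) \ge d$; any connected component of $H$ then has minimum degree at least $d$ and is contained in some maximal connected subgraph with this property, which is a $d$-core of $G$. Hence the degeneracy coincides with the largest $d$ admitting a $d$-core.

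Next I would derive the $(d+1)$-core statement as an immediate contrapositive. If $G$ has no $(d+1)$-core, then by the previous step the degeneracy is at most $d$, so every subgraph contains a vertex of degree at most $d$, meaning $G$ is $d$-degenerate. Contrapositively, if $G$ is not $d$-degenerate it has a $(d+1)$-core, which by definition is itself a subgraph with all vertices of degree at least $d+1$.

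Finally, for the edge bound I would use the standard degeneracy ordering. Iteratively remove a vertex of minimum degree from the current graph; since every intermediate subgraph has minimum degree at most $d$, each removal deletes at most $d$ edges. Listing the removed vertices in reverse as $v_1, \ldots, v_n$ yields an ordering in which each $v_i$ has at most $d$ neighbors among $v_1, \ldots, v_{i-1}$. Charging every edge to its later endpoint in this ordering gives at most $nd$ edges in total. None of these steps presents a real obstacle; the only thing to be slightly careful about is the distinction between ``degeneracy equal to $d$'' and ``$d$-degenerate,'' which is handled uniformly by working with the characterization $\max_{H \subseteq G} \delta(H)$ throughout.
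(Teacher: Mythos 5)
Your proposal is correct: all three claims follow exactly as you argue, via the characterization of degeneracy as $\max_{H \subseteq G} \delta(H)$, the contrapositive for the $(d+1)$-core, and a degeneracy ordering for the $nd$ edge bound. The paper states this as a background Fact without proof, so there is nothing to compare against; your argument is the standard one and correctly fills in what the paper leaves implicit.
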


Combining \Cref{cor:deg} and \Cref{fact:core} yields:

\begin{lemma}\label{lem:high_col_implies_subdiv}
    Let $\rho >0$ and $r \in \N$. There exists $n_0 = n_0(\rho,r)$ and $\mu = \mu(\rho) > 0$ such that all
    graphs $G$ on $n \ge n_0$ vertices with $\col_r(G) \ge n^{1/\rho}$ contain a
    ${\le} (r+1) (9^{2\rho} + 1)$-subdivision of $K_{\lceil n^\mu \rceil}$ as a subgraph.
\end{lemma}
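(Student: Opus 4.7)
The plan is threefold: reduce to a situation with high minimum degree using the tools already collected, extract a short subdivided clique there, and finally lift it back to $G$ through the topological depth-$r$ minor relation.

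Set $d = \lceil n^{1/\rho}/f(r)\rceil$, where $f$ is the function of \Cref{cor:deg}. The hypothesis $\col_r(G) \ge n^{1/\rho} \ge f(r)\cdot d$ then produces a topological depth-$r$ minor $H$ of $G$ which is not $d$-degenerate, and \Cref{fact:core} lets us pass to a subgraph $H' \subseteq H$ with minimum degree at least $d+1$. Writing $m = |H'| \le n$, we have $\delta(H') \ge n^{1/\rho}/f(r)$, so once $n \ge f(r)^{2\rho}$ we can absorb the denominator and obtain $\delta(H') \ge m^{1/(2\rho)}$.

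The core step is a ``short subdivided clique in dense graphs'' lemma: any graph on $m$ vertices with minimum degree at least $m^{1/(2\rho)}$ contains a $\le 9^{2\rho}$-subdivision of $K_{\lceil m^{\mu'(\rho)}\rceil}$ as a subgraph, for some constant $\mu'(\rho) > 0$. The standard route is an iterated neighborhood-expansion argument: one fixes a large pool of candidate branch vertices and uses the density to route pairwise internally disjoint short paths between them, in the spirit of Bollob\'as--Thomason or Koml\'os--Szemer\'edi for topological minors but with explicit bookkeeping of path lengths. The constant $9^{2\rho}$ is consistent with iterating $\Theta(\rho)$ rounds of expansion, each incurring a constant-factor blow-up in the subdivision length.

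Finally, because $H'$ is a topological depth-$r$ minor of $G$, every edge of $H'$ corresponds to an internally vertex-disjoint path of length at most $r+1$ in $G$. Concatenating these paths along each length-$(9^{2\rho}+1)$ path in the subdivision of the clique found inside $H'$ produces a $\le (r+1)(9^{2\rho}+1)$-subdivision of the same clique inside $G$. Setting $\mu = \mu'(\rho)/(2\rho)$ and enlarging $n_0$ so that $m^{\mu'(\rho)} \ge n^{\mu}$ holds for the relevant range of $m$ finishes the proof. The main obstacle is the short-subdivision lemma itself: generic subdivision theorems in dense graphs produce connecting paths of unbounded length, so obtaining a length bound that is merely exponential in $\rho$ together with a clique size polynomial in $m$ requires a careful BFS-layering analysis to uniformize path lengths while keeping the branch set large.
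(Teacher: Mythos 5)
Your proposal matches the paper's proof: pass via \Cref{cor:deg} and \Cref{fact:core} to a topological depth-$r$ minor with minimum degree roughly $n^{1/(2\rho)}$, invoke the short-subdivided-clique result for dense graphs, and lift the subdivision back through the paths of length at most $r+1$ from the minor model. The only thing to note is that the ``core step'' you propose to re-derive by neighborhood expansion is precisely \Cref{prop:212} applied with parameter $2\rho$, which the paper uses as a black box; you do not need to redo the Koml\'os--Szemer\'edi-style argument, and your extra bookkeeping relating $m=|H|$ to $n$ is a small but welcome tightening of the paper's presentation.
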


\begin{proof}
    Let $G \in \cal C$ be a graph of order $n$ with $\col_r(G) \ge n^{1/\rho}$
    and $f(r)$ be the function of \Cref{cor:deg}.
    By \Cref{cor:deg}, $G$ has a topological depth-$r$ minor that is not $n^{1/\rho} / f(r)$-degenerate.
    According to \Cref{fact:core}, $G$ also has a topological depth-$r$ minor $H$ in which all vertices have degree at least $n^{1/\rho} / f(r)$.
    Without loss of generality, we can assume $n \ge n_0(\rho,r)$ to be large enough that
    $n^{1/\rho} / f(r) \ge n^{1/2\rho}$.
    Then \Cref{prop:212} guarantees that there exists $\mu(\rho)$ such that $H$ has a $\le 9^{2\rho}$-subdivision of $K_{\lceil n^{\mu(\rho)}\rceil}$ as a subgraph.
    By transitivity, this means $G$ has a ${\le} (r+1) (9^{2\rho} + 1)$-subdivision of $K_{\lceil n^{\mu(\rho)}\rceil}$ as a subgraph.
 
\end{proof}

We use the following statements about subdivided cliques in graphs with polynomial minimum degree.

\begin{proposition}[{\cite[Lemma 2.12]{GroheKS2013}}]\label{prop:212}
    Let $\rho > 1$. There exists $n_0 = n_0(\rho)$ and $\mu = \mu(\rho) > 0$ such that all
    graphs $G$ on $n \ge n_0$ vertices with minimum degree at least $n^{1/\rho}$ contain a $9^\rho$-subdivision of $K_{\lceil n^\mu \rceil}$ as a subgraph.
\end{proposition}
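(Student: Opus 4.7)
The plan is to construct the subdivided clique in three stages: first reduce to a subgraph with controlled degree, then greedily pick the principal vertices together with small private neighborhoods, and finally link each pair of principals by a short BFS path while carefully managing a ``forbidden'' set of internal path vertices.

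Stage one is a standard degree-regularization argument. Starting from $G$ with minimum degree $\delta := n^{1/\rho}$, I would iteratively discard vertices whose degree is much larger than some threshold $M = M(\rho)\cdot\delta$; by a potential-function argument this leaves a subgraph $H$ on $m \geq n/2$ vertices with minimum degree at least $\delta/2$ and bounded average degree, so that every set $A\subseteq V(H)$ has $|N(A)|/|A|$ expansion at least $\Omega(\delta)$ as long as $|A|$ is not too large. The upshot is that inside $H$ a BFS ball $B_k(v)$ grows by roughly a factor of $\delta/2$ at each step, until it saturates near $m$.

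Stage two fixes a parameter $t := \lceil n^\mu\rceil$ with $\mu = \mu(\rho) > 0$ to be chosen at the end. I would greedily pick principal vertices $v_1,\ldots,v_t$; alongside $v_i$ I reserve a private neighborhood $N_i \subseteq N_H(v_i)$ of size $\lceil \delta / (4t) \rceil$. The private neighborhoods will be used exclusively to attach paths ending at $v_i$, which is the standard trick to keep distinct path systems internally disjoint while still anchoring them at a fixed principal. Since we have $t$ principals and each reserves roughly $\delta/(4t)$ vertices, the total reservation uses only $\delta/4$ vertices, which is negligible.

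Stage three processes the $\binom{t}{2}$ pairs one by one. For the pair $(v_i,v_j)$, I would run a BFS from $N_i$ inside the subgraph of $H$ obtained by deleting every internal vertex of a previously constructed path and every private neighborhood $N_\ell$ with $\ell \notin \{i,j\}$. As long as the forbidden set has size at most $m/2$, the expansion from Stage one implies $|B_k| \geq (\delta/4)^k$ until saturation, so after $r := 9^\rho$ steps the BFS reaches more than $m/2$ vertices and in particular intersects $N_j$; a shortest such BFS path has length at most $r$, and we add its (at most $r-1$) internal vertices to the forbidden set. Since we process $\binom{t}{2}$ pairs and each contributes at most $9^\rho$ internal vertices, the forbidden set stays of size at most $t^2\cdot 9^\rho$.

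The main obstacle — and the place where $\mu(\rho)$ is pinned down — is balancing two competing inequalities. On the one hand we need $t^2 \cdot 9^\rho \ll n$, i.e.\ $n^{2\mu} \ll n / 9^\rho$; on the other, the BFS-expansion bound $(\delta/4)^{9^\rho} = n^{9^\rho/\rho}/\text{poly}$ must dominate the forbidden mass after a constant loss per layer (each BFS step can be slowed by at most a constant factor as long as the forbidden fraction is below $1/2$). A concrete choice such as $\mu(\rho) := 1/(C\rho\cdot 9^\rho)$ for a sufficiently large absolute constant $C$ satisfies both constraints for all $n \geq n_0(\rho)$. The exponential-in-$\rho$ subdivision length $9^\rho$ emerges precisely because a single BFS step in $H$ can suffer a constant factor loss due to accumulated forbidden vertices, and we iterate this $O(\rho)$ times to pay for the degree deficit $n^{1/\rho}$ while still covering more than half of $V(H)$.
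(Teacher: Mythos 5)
There is a genuine gap, and it sits at the foundation of your argument: the expansion claim in Stage one is false. Minimum degree $\delta$ does not give vertex expansion $\Omega(\delta)$, and BFS balls in a graph of minimum degree $\delta=n^{1/\rho}$ need not grow geometrically. Take a chain of cliques, each of order $\delta+1$, with consecutive cliques joined by an edge: the minimum degree is $\delta$, every radius-$k$ ball has only $O(k\delta)$ vertices, and the diameter is $\Theta(n/\delta)=\Theta(n^{1-1/\rho})$, which dwarfs $9^\rho$. Deleting vertices of unusually large degree in Stage one cannot create expansion here, because there are none to delete. Consequently the heart of your Stage three --- that after $9^\rho$ BFS steps the ball covers more than $m/2$ vertices, so any two prescribed principals can be linked by a path of length at most $9^\rho$ --- fails badly: two principal vertices chosen greedily may simply be at distance $n^{1-1/\rho}\gg 9^\rho$, and no bookkeeping of forbidden vertices can rescue the pair-by-pair linking strategy unless the principals are first confined to a suitably dense local piece of the graph. (Even granting the expansion, covering $m/2$ vertices does not force the BFS to hit the specific target set $N_j$ of size roughly $\delta/(4t)\ll m/2$; one would at least have to grow balls from both ends.) This is precisely why the actual proof of this statement, which the paper does not reproduce but imports from \cite{GroheKS2013}, is a recursion on $\rho$ exploiting local density: either some ball of bounded radius is small, in which case the subgraph it induces has polynomially larger minimum degree relative to its order and one recurses inside it (this is what multiplies the path lengths by a constant factor per level and produces the $9^\rho$), or one is in a base case handled by a statement in the spirit of \Cref{prop:one_sub}.

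Two further points. First, with the paper's conventions a ``$9^\rho$-subdivision'' has \emph{exactly} $9^\rho$ inner vertices on every path, whereas shortest BFS paths have varying lengths at most $9^\rho$, so even if your expansion argument worked you would only obtain a ${\le}9^\rho$-subdivision; that weaker form is what the paper actually uses downstream (and it upgrades to exact subdivisions separately via \Cref{lem:exact_subdiv}), but it is not what the proposition asserts. Second, your own accounting signals the problem: if balls really grew by a factor $\delta/4=n^{1/\rho}/4$ per step, then $O(\rho)$ steps would already exhaust the graph, so the choice of $9^\rho$ steps is unmotivated within your argument --- the exponential-in-$\rho$ depth in the true statement comes from the recursive mechanism just described, not from iterated expansion.
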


\begin{proposition}[{\cite[Lemma 3.14]{DvorakThesis}}]\label{prop:one_sub}
There exists $n_0$ such that every graph $G$ with $n \ge n_0$ vertices
and minimum degree at least $4n^{0.6}$ contains a $1$-subdivision of $K_{\lceil n^{0.1}\rceil}$ as a subgraph.
\end{proposition}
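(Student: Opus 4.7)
The plan is to prove the statement in three stages: a counting/convexity step, an extraction step, and a Hall-type matching step. Write $d := 4n^{0.6}$, $t := \lceil n^{0.1}\rceil$, and $m := \binom{t}{2}$.

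\textbf{Counting.} By convexity applied to the cherry count,
\[
\sum_{w \in V(G)} \binom{\deg(w)}{2} \;=\; \sum_{\{u,v\}\subseteq V(G)} |N(u)\cap N(v)| \;\ge\; n\binom{d}{2},
\]
so the average number of common neighbors per unordered pair is $\gtrsim d^2/n \sim 16\,n^{0.2}$, comfortably exceeding the threshold $m+t = O(n^{0.2})$.

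\textbf{Extraction.} I want a set $A \subseteq V(G)$ with $|A|=t$ such that $|N(u)\cap N(v)|\ge m+t$ for every $\{u,v\}\subseteq A$ (the extra $+t$ absorbs collisions with the chosen principal vertices themselves). I plan to use dependent random choice: sample $\ell$ vertices $v_1,\dots,v_\ell$ uniformly at random and consider their common neighborhood $U:=\bigcap_i N(v_i)$. The standard estimates give $\mathbb{E}[|U|]\ge d^\ell / n^{\ell-1}$ and $\mathbb{E}[\#\text{bad }2\text{-subsets of }U]\le \binom{n}{2}((m+t)/n)^\ell$, after which one removes one vertex per bad subset. A plain $\ell=2$ application is borderline (the expectation $\sim 16\,n^{0.2}$ and the bad-pair bound $\sim n^{0.4}$ compete), so I expect to need a refinement: first pass to a subgraph of large minimum degree in the auxiliary \emph{goodness graph} (whose edges are pairs with $\ge m+t$ common neighbors) by iteratively deleting low-degree vertices, and only then apply DRC inside that denser substructure. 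Balancing the parameters so the exponent $0.1$ on the clique size matches the exponent $0.6$ on the degree threshold is the delicate part.

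\textbf{Matching.} Once $A$ has been produced, form the bipartite graph $B$ with parts $\binom{A}{2}$ and $V(G)\setminus A$, joining a pair $\{u,v\}$ to each $w\in N(u)\cap N(v)\setminus A$. Each pair has at least $m = \binom{t}{2}$ neighbors in $B$, and there are exactly $\binom{t}{2}$ pairs, so Hall's condition is trivially satisfied and a system of distinct representatives $\{w_{\{u,v\}}\}$ exists. Then $A$ together with $\{w_{\{u,v\}}\}$ is a $1$-subdivision of $K_t$ as a subgraph of $G$.

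The main obstacle is the extraction stage. The averaging argument only certifies that the goodness graph has $\Theta(n^{1.2})$ edges, which is far too sparse for Turán- or Kővári--Sós--Turán-type clique bounds to apply directly; one must exploit the extra structural fact that every vertex of $G$ already has large degree (not merely that the goodness graph is dense on average) in order to recover a $K_t$ of the required size $n^{0.1}$.
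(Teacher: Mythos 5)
The matching step and the initial cherry count are fine, but the heart of the lemma is your extraction step, and that step is not proved -- you yourself flag it as the ``delicate part'' and the ``main obstacle,'' which already means this is a plan rather than a proof. (For reference, the paper does not prove this statement either; it imports it verbatim from Dvo\v{r}\'{a}k's thesis, so the burden of the extraction argument is exactly what the cited Lemma 3.14 carries.) Concretely, the dependent random choice route as sketched cannot be pushed through with these exponents. With $\ell=2$ the deletion term is not merely ``competing'': you need $m\approx\binom{t}{2}+t=\Theta(n^{0.2})$ for the Hall step, so the expected number of bad pairs is $\Theta(m^2)=\Theta(n^{0.4})$ while $\mathbb{E}[|U|]\le \max_v\deg(v)\cdot$(something of order $d^2/n)=\Theta(n^{0.2})$, so deleting one vertex per bad pair wipes out $U$ by a polynomial margin; with $\ell\ge 3$ one has $d^\ell/n^{\ell-1}\le 64\,n^{-0.2}<1$, so nothing survives at all. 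Worse, even if the deletion step could be repaired, the constraint $m\gtrsim t^2/2$ forces $t\lesssim (d^2/n)^{1/4}=O(n^{0.05})$ out of this argument, which falls short of the claimed exponent $0.1$ by a polynomial factor; reaching $n^{0.1}$ needs a genuinely different idea, not a rebalancing of DRC parameters.

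The proposed fallback does not close this gap either. Passing to a high-minimum-degree subgraph of the auxiliary ``goodness graph'' and then applying DRC inside it cannot yield the required $K_t$: as you note, density or minimum degree of the goodness graph alone never forces a clique (it could be bipartite), and DRC applied to that auxiliary graph controls \emph{codegrees in the goodness graph}, not its adjacencies, so it does not produce a set of vertices that are pairwise good in $G$. The missing ingredient is precisely an argument that exploits the global minimum-degree hypothesis of $G$ to produce $\lceil n^{0.1}\rceil$ vertices whose pairs all have $\Omega(n^{0.2})$ common neighbours (or some weaker but still Hall-sufficient structure), and nothing in the proposal supplies it. As it stands, the extraction stage is a conjecture, and hence the proof is incomplete.
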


Combining these two statements yields the following useful observation.
\begin{lemma}\label{lem:exact_subdiv}
    There exists $n_0$ such that every graph $G$ containing an ${\le}r$-subdivision of $K_n$ as a subgraph
    with $n \ge r\cdot n_0$ also contains an $r'$-subdivision of $K_{\lceil n/r^{0.05} \rceil}$ as a subgraph for some $r' \le 2r+1$.
\end{lemma}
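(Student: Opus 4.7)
The plan is to reduce the problem to finding a 1-subdivision of a large clique inside a ``monochromatic'' subgraph of $K_n$ extracted by pigeonhole. Write $S \subseteq G$ for the given ${\le}r$-subdivision of $K_n$, with principal vertices $v_1,\ldots,v_n$, and for each pair $\{v_i,v_j\}$ let $\ell_{ij}\in\{0,\ldots,r\}$ denote the number of internal vertices of its subdivision path $P_{ij}$. By pigeonhole on the $r+1$ possible values of $\ell_{ij}$, some $s\in\{0,\ldots,r\}$ is attained by at least $\binom{n}{2}/(r+1)$ edges; let $H$ be the subgraph of $K_n$ on the principal vertex set whose edge set is precisely these pairs.

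The key lifting observation, which drives the bound $r'\le 2r+1$, is that any 1-subdivision of $K_m$ that embeds into $H$ lifts to an exact $(2s+1)$-subdivision of $K_m$ inside $G$. Indeed, if this 1-subdivision has principal vertices $w_1,\ldots,w_m$ and distinct subdivision vertices $y_{ij}$ (for $i<j$) inside $H$, then each $H$-edge $\{w_i,y_{ij}\}$ corresponds to a length-$(s+1)$ path of $S$ with $s$ internal vertices, and concatenating $P_{w_iy_{ij}}$ and $P_{y_{ij}w_j}$ at $y_{ij}$ yields a length-$(2s+2)$ path in $G$ with exactly $2s+1$ internal vertices. Distinctness of the $y_{ij}$ together with the internal vertex-disjointness of the paths of $S$ (and the fact that the $y_{ij}$ and $w_i$ are principal vertices of $K_n$, hence never internal to any path of $S$) makes the $\binom{m}{2}$ concatenated paths pairwise internally disjoint. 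Since $s\le r$, the resulting subdivision has $r'=2s+1\le 2r+1$, as required.

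It thus remains to extract from $H$ a 1-subdivision of $K_m$ with $m \ge \lceil n/r^{0.05}\rceil$. By the degeneracy bound of \Cref{fact:core}, the estimate $|E(H)|\ge\binom{n}{2}/(r+1)$ yields a subgraph $H' \subseteq H$ of minimum degree at least $(n-1)/(2(r+1))$. For $n$ large enough in terms of $r$ --- exactly where the hypothesis $n\ge r\cdot n_0$ enters --- this minimum degree exceeds $4|V(H')|^{0.6}$, and \Cref{prop:one_sub} produces a 1-subdivision of a clique inside $H'$.

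The main obstacle is making this clique as large as the target $\lceil n/r^{0.05}\rceil$: the naive application of \Cref{prop:one_sub} to a max-min-degree subgraph of $H$ only yields a 1-subdivision of $K_{|V(H')|^{0.1}}$, typically far below the target. To close this gap I would split into two regimes depending on $s$. When $s$ is small, the slack $r'\le 2r+1$ permits a higher-order $k$-subdivision inside $H$ --- namely any $k$ with $(k+1)(s+1)\le 2r+2$ --- whose lift is still within the $r'$-budget; this allows invoking the much stronger \Cref{prop:212} to obtain a larger clique at the price of longer paths in $H$. When $s$ is close to $r$, essentially only a 1- or 2-subdivision fits under the budget, and one must exploit the full edge density $|E(H)|\ge n^2/(2(r+1))$ rather than merely its degeneracy consequence, for instance via a sharper embedding lemma. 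Tuning these two regimes so that the clique size lands at $\lceil n/r^{0.05}\rceil$ in both is the central technical calculation.
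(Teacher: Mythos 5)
Your approach coincides with the paper's: pigeonhole to extract a monochromatic graph $H$ on the principal vertices with $\binom{n}{2}/(r+1)$ edges, pass to a high-minimum-degree subgraph via degeneracy (\Cref{fact:core}), apply \Cref{prop:one_sub}, and lift the resulting $1$-subdivision back to a $(2s+1)$-subdivision in $G$. Your lifting argument is spelled out more explicitly than in the paper, and is correct.

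Your diagnosis of the final step is also correct, and it is worth emphasizing that this is not a gap in your proof alone --- the paper's own proof commits the very same error. The naive argument produces a clique of order $\lceil ((n-1)/(2(r+1)))^{0.1}\rceil = O(n^{0.1})$, while the stated target is $\lceil n/r^{0.05}\rceil = \Theta(n)$ for fixed $r$. The paper's closing sentence (``we can assume $n\ge n_0$ large enough that $\lceil ((n-1)/2(r+1))^{0.1}\rceil\ge\lceil n/r^{0.05}\rceil$'') is false for all sufficiently large $n$, since the left side grows like $n^{0.1}$ and the right side like $n$. Indeed the lemma is false as written: take $r=1$ and $G$ to be a ${\le}1$-subdivision of $K_n$ in which exactly half the edges are subdivided; then $|V(G)| = n + n(n-1)/4$ is too small to contain an exact $1$-, $2$-, or $3$-subdivision of $K_n$, and $G$ contains no clique near size $n$ (the unsubdivided edges form a graph with too few edges). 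So for $r=1$ the conclusion $K_{\lceil n/r^{0.05}\rceil}=K_n$ cannot hold. The statement must have a typo; the intended bound is presumably something like $K_{\lceil (n/(2(r+1)))^{0.1}\rceil}$ (or $K_{\lceil (n/r)^{0.05}\rceil}$), which the shared argument does deliver, and the downstream use in the proof of \Cref{thm:characterization_almost_nowhere_dense} ($\neg 3\Rightarrow\neg 4$) only needs a polynomial dependence on $n$, so nothing else breaks.

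Where your proposal goes astray is in the suggested two-regime repair. In the ``small $s$'' regime, invoking \Cref{prop:212} does not obviously yield a larger clique: it gives $K_{\lceil m^{\mu(\rho)}\rceil}$ for an unspecified constant $\mu(\rho)>0$, which is in general much smaller than $1$, so there is no reason for $m^{\mu(\rho)}$ to exceed $m^{0.1}$. In the ``large $s$'' regime the ``sharper embedding lemma'' is never identified, and no such lemma can exist that reaches order $n/r^{0.05}$, by the counterexample above. So the two-regime plan is a dead end; the correct conclusion is that the target exponent in the lemma is wrong, not that a cleverer embedding is needed. Aside from flagging (correctly) an inconsistency and then chasing it, your argument is the paper's argument.
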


\begin{proof}
    Let $n'_0$ be the constant from \Cref{prop:one_sub} and $n_0 \ge 4 n'_0+4$.
    Let $G$ be a ${\le}r$-subdivision of $K_n$ with $n \ge r \cdot n_0$.
    Then there exists $r' \le r$, such that at least $n(n-1)/2(r+1)$ edges of $K_n$ are subdivided exactly $r'$-times in $G$.
    This yields a graph $H$ with $n$ vertices and $n(n-1)/2(r+1)$ edges
    such that $G$ contains an $r'$-subdivision of $H$ as a subgraph.
    A graph with degeneracy $d$ has at most $nd$ edges.
    Thus, by \Cref{fact:core}, $H$ has a $(n-1)/2(r+1)$-core, i.e., a subgraph with minimum degree at least $(n-1)/2(r+1)$.
    Since $n \ge r\cdot n_0 \ge 4r n'_0+4r$, we have that $(n-1)/2(r+1) \ge n'_0$.
    Then by \Cref{prop:one_sub}, $H$ contains a $1$-subdivision of a complete graph of order $\lceil ((n-1)/2(r+1))^{0.1} \rceil$ as a subgraph.
    Since $G$ contains an $r'$-subdivision of $H$ as a subgraph, this means that
    $G$ contains a $2r'+1$-subdivision of order $\lceil ((n-1)/2(r+1))^{0.1} \rceil$ as a subgraph.
    Without loss of generality, we can assume $n \ge n_0$ to be large enough that
    $\lceil ((n-1)/2(r+1))^{0.1} \rceil \ge \lceil n/r^{0.05} \rceil$.
\end{proof}

At last, we use all these observations to obtain a characterization of almost nowhere dense classes.
\begin{proof}[Proof of \Cref{thm:characterization_almost_nowhere_dense}] 
    The equivalence 1.\ $\Leftrightarrow$ 2.\ is by definition.
    For convenience, we show equivalence of the inverse of the remaining statements.
    Let us prove $\neg$2.\ $\Rightarrow$ $\neg$3.
    Let $\omega_r(G)$ (or $\tilde \omega_r(G)$) be the largest value of $t$ such that $G$ has $K_t$ as depth-$r$ minor (or depth-$r$ topological minor).
    According to \cite[Corollary 2.20]{notes},
    \begin{equation}\label{eq:abc}
    \tilde\omega_r(G) \le \omega_r(G) \le 1+(\tilde\omega_{10r}(G)+1)^{10r}.
    \end{equation}
    If $\neg$2.\ holds then there exists $r$, $\varepsilon > 0$ and an infinite sequence of graphs $G_1,G_2,\dots$ such that
    $\omega_r(G_i) \ge |G_i|^\varepsilon$.
    Then there also exists $\varepsilon' > 0$ and $c$ such that
    $\omega_r(G_i) \ge 1 + (|G_i|^{\varepsilon'}+1)^{10r}$
    for all $i \ge c$.
    By (\ref{eq:abc}), 
    $\tilde\omega_{10r}(G_i) \ge |G_i|^{\varepsilon'}$
    for $i \ge c$, which implies~$\neg$3.

    The implication $\neg$3.\ $\Rightarrow$ $\neg$4. follows from \Cref{lem:exact_subdiv}.

    The implication $\neg$4.\ $\Rightarrow$ $\neg$2. holds, since every graph that contains an $r$-subdivision of a graph $H$ as a subgraph
    also contains $H$ as depth-$r$ minor.

    Furthermore, $\neg$5.\ $\Leftrightarrow$ $\neg$6., since 
    $\col_r(G) \leq \wcol_r(G) \leq \col_r(G)^r$ for every graph $G$.

    Next, $\neg$6.\ $\Rightarrow$ $\neg$3. follows from \Cref{lem:high_col_implies_subdiv}.

    At last, we prove $\neg$3.\ $\Rightarrow$ $\neg$4.:
    Assume a graph $G$ contains $K_t$ as a depth-$r$ topological minor, where the principal vertices are $P \subseteq V(G)$, $|P|=k$.
    Let $\pi$ be an ordering of $G$ and $v \in P$ be maximal with respect to $\pi$.
    For every $w \in P$ let $m_w$ be the smallest vertex with respect to $\pi$ on the path from $v$ to $w$ in the depth-$r$ topological minor model.
    Then $\{m_w \colon w \in P\} \subseteq \WReach_{r+1}[G,\pi,v]$. Since all paths from $v$ to $P$ share no vertex except $v$, the vertices $m_w$ are all distinct. This means $\wcol_{r+1}(G) \ge t$.
\end{proof}

\fi

\section{Approximation on Almost Nowhere Dense}\label{sec:approx}

In this section we consider the same problem as before, 
i.e., finding vertices for $\bx$ that satisfy $\cnt y \phi(\bx y) > N$
but on almost nowhere dense classes of graphs. Here, we give an approximation
algorithm with an additive error. For this, we use completely different
techniques compared to \Cref{sec:exact}. We first show how to reduce the
corresponding model-checking problem to approximate sums over unary functions.
\iflongpaper
The procedure from \Cref{lem:res} is the only source of error.
\fi
Then we present the approximate optimization algorithm in \Cref{thm:opti}.

The main result of this section is the following approximate optimization
algorithm with additive error.

\begin{restatable}{theorem}{theoremopti}
    \label{thm:opti}
    There exists a computable function $f$ such that for every graph $G$
    and every quantifier-free first-order formula $\phi(y\bx)$
    we can compute
    a vertex tuple $\bu^*$
    with \[
        |\max_{\bu} [[\cnt y \phi(y \bu)]]^G - [[\cnt y \phi(y \bu^*)]]^G|\leq 4^{|\phi|} \wcol_2(G)^{O(|\phi|)}
    \]
    in time $\wcol_{f(|\phi|)}(G)^{f(|\phi|)}n$.
\end{restatable}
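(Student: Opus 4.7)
The plan is to adapt the exact-counting meta-algorithm of \cite{DreierR2021} for bounded expansion classes to the approximate setting where only $\wcol$-bounds are available. I would begin by invoking \Cref{lem:remove_uneq} to rewrite
\[
    [[\cnt y \phi(y\bu)]]^G = \sum_{(\mu,\omega)\in\Omega}\mu\,[[\cnt y \omega(y\bu)]]^G,
\]
reducing the task to approximating each counting term over a positive conjunctive clause $\omega$ with additive error at most $\wcol_2(G)^{O(|\phi|)}$; there are at most $4^{|\phi|}$ such clauses with weights bounded by $4^{|\phi|}$, so this propagates to the claimed global error. Next, I would apply \Cref{lem:wcol-approx} to obtain in time $\wcol_{f(|\phi|)}(G)^{f(|\phi|)}n$ both a vertex ordering $\pi$ with good weak coloring numbers and, as a by-product, a transitive fraternal augmentation of small in-degree $m = \wcol_{f(|\phi|)}(G)^{f(|\phi|)}$. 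Orienting every edge towards its $\pi$-smaller endpoint and decomposing the in-neighborhoods produces unary functions $f_1,\dots,f_m$ so that every edge atom $E(y,x_i)$ becomes $\bigvee_j (f_j(y)=x_i \lor y = f_j(x_i))$; this turns $G$ into a functional structure of small multiplicity, which is the setting in which the rewriting machinery of \cite{DreierR2021} is formulated.

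The core step is the iterative simplification of each $\cnt y \omega(y\bu)$. Because $\omega$ is positive, every satisfying $y$ is joined to some variable of $\bx$ by a short chain of function applications. Branching over the possible chains, I would rewrite $\cnt y \omega(y\bu)$ as a sum of counting terms in which $y$ is forced to equal $f_{j_1}(\dots f_{j_s}(x_i)\dots)$ for concrete indices $(j_1,\dots,j_s)$ and some $x_i$; substituting eliminates $y$ and leaves a formula depending only on $\bx$. The rewriting of \cite{DreierR2021} then produces finitely many residual terms whose optimization over $\bu$ reduces to evaluating a quantifier-free aggregation formula, which can be handled by dynamic programming on a low tree-depth coloring obtained from \Cref{prop:treedepth-col}. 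The new ingredient compared to the bounded expansion case is that $m$ is not constant, so the branching produces $m^{O(|\phi|)}$ residual terms; I would avoid listing them explicitly and instead compute their aggregate contribution by iterating over the in-neighbors of each vertex in the functional representation, which accounts for the $\wcol_{f(|\phi|)}(G)^{f(|\phi|)}$ factor in the running time while keeping the dependence on $n$ linear.

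The approximation error enters when handling \emph{difficult literals}: atoms $E(y,x_i)$ in $\omega$ that, under the chosen orientation chain for $y$, would require $y$ to simultaneously be an in-neighbor of another $x_j$ off the chain. Discarding such literals only affects $y$-values in a $\WReach_2$-type neighborhood of $\bu$, contributing at most $\wcol_2(G)^{O(|\phi|)}$ per clause. The main obstacle will be bounding this error uniformly across the $m^{O(|\phi|)}$ branches, since a naive argument would multiply the per-branch error by $m^{O(|\phi|)}$ and ruin the bound; the key observation I will rely on is that the discarded $y$-values across all branches lie within a single $\wcol_2$-bounded set attached to $\bu$, so their contributions collapse into one $\wcol_2(G)^{O(|\phi|)}$ term independent of $m$. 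Combining the resulting local counting data for each candidate $\bu$ and maximizing via the tree-depth-coloring dynamic program produces the desired $\bu^*$ within the claimed additive error and time bound.
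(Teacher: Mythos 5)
There is a genuine gap in the core step. Your plan eliminates $y$ by forcing it to equal a functional term $f_{j_1}(\dots f_{j_s}(x_i)\dots)$ over $\bx$ and then substituting, but this only captures $y$-values that lie in a $\WReach$-type neighborhood of $\bu$ --- at most $\wcol(G)^{O(|\phi|)}\cdot|\bx|$ of them. After orienting along $\pi$, an atom $E(y,x_i)$ resolves into the two disjuncts $f_j(y)=x_i$ and $y=f_j(x_i)$; the latter pins $y$ down and is what your chains cover, but the former leaves $y$ ranging over an arbitrarily large set of right-neighbors of $x_i$. That is where the bulk of $\cnt y\,\phi$ lives, and your scheme would systematically undercount it by up to $\Theta(n)$, far beyond the claimed additive error. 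The paper does exactly the opposite of what you propose: clauses containing a $y=f(x_i)$ literal are the ones \emph{discarded} (the set $\Upsilon$ in \Cref{lem:res}), because each such clause contributes $O(1)$ per $\bu$ and there are only $\wcol_2(G)^{O(|\phi|)}$ of them; this is the entire and only source of the additive error. The retained clauses are normalized by \Cref{lem:remove_eq} to have exactly one mixed literal $f(y)=x_i$, and \Cref{lem:l2} then precomputes, for each vertex $v$, a weight $c_{\omega,i}(v)$ counting the $y$'s with $f(y)=v$ and $\tau(y)$ --- so the unbounded count is read off in $O(1)$ per $u_i$ rather than by branching.

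Consequently, your error analysis (``the discarded $y$-values across all branches lie within a single $\wcol_2$-bounded set attached to $\bu$'') is addressing the wrong side of the asymmetry and would not be needed; the error must come from dropping clauses where $y$ is determined, not from dropping literals in clauses where it is free. The surrounding scaffolding you describe --- \Cref{lem:remove_uneq}, the functional representation via \Cref{lem:wcol-approx}, low tree-depth colorings from \Cref{prop:treedepth-col}, and an LinEMSOL-style optimization --- does match the paper, but one further ingredient you omit is essential in the nowhere/almost-nowhere-dense regime: \Cref{lem:restrict_signature} compresses the $\wcol_2(G)$-size functional signature and clause length down to a function of $|\bx|$ alone before Courcelle/LinEMSOL is applied, which is what keeps the running time fpt. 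Your remark about ``iterating over the in-neighbors'' to avoid listing branches gestures in this direction but does not resolve the formula-length blowup that \Cref{lem:restrict_signature} is designed to fix.
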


For the \emph{approximate model-checking} problem with an additive error $\delta$, similar to \cite{DreierR2021}, we want an algorithm such that 
\begin{enumerate}
    \item the algorithm returns ``yes'' only if $G$ satisfies the formula,
    \item returns ``no'' only if $G$ does not satisfy the formula,
    \item returns $\bot$ only if the optimum is within $\delta$ to $N$.
\end{enumerate}
The option $\bot$ can be seen as ``I do not know'' as the computed result and the desired result are so close that the difference falls into the additive error $\delta$.

Given the approximate optimization algorithm from \Cref{thm:opti}, we can easily
build an approximate model-checking algorithm as described above for the formula
$\exists \bx \cnt y \phi(y \bx) > N$ by computing a vertex tuple $\bu^*$ from
the theorem. If $N - [[\cnt y \phi(y \bu^*)]]^G  \leq \delta$, answer~$\bot$.
Otherwise, answer ``yes'' or ``no'' according whether $[[\cnt y \phi(y
\bu^*)]]^G > N$ or not. Note that $\delta$ cannot be chosen freely as it depends on the graph (respectively, its weak coloring numbers).

The runtime of the algorithm from \Cref{thm:opti} is fpt if the weak
$r$-coloring numbers are bounded by $n^\varepsilon$ for $r \leq f(|\phi|)$. This
is the case for almost nowhere dense classes. This is in contrast to the results
of \cite{DreierR2021} where the running time of their algorithms is bounded by
$f(\wcol_{f(|\phi|)})||G||$ which is fpt on classes of bounded expansion but is
not fpt on nowhere dense and almost nowhere dense classes.

This gives us the following corollary.
\corollaryapproxanwd*

\iflongpaper

\subsection{Functional structures}
We will heavily rely on functional representations of directed graphs, where
edges are replaced with functions mapping the endpoint of an edge to its
startpoint. For this, we need a functional signature consisting of
$\Delta^-(\vG)$ many unary functions. They were used by Durand and
Grandjean~\cite{DurandG2007}, as well as Kazana and Segoufin~\cite{KazanaS2013},
and also by Dvořák, Kráľ, and Thomas~\cite{DvorakKT2013}. A big advantage of
functional structures is that short paths can be expressed in a quantifier-free
way.

We focus on representations of graphs where the arcs are all directed along a
fixed vertex ordering. One can imagine that this vertex ordering witnesses that
the weak coloring number of the graph is small, which means that the number of
symbols in the signature is also small.

\begin{definition} \label{def:fct-struct}
For a graph $G$ with vertex ordering $\pi$ we define 
the \emph{functional representation} $\fG$ of $G$ w.r.t.\ $\pi$ as
a functional structure with universe $V(G)$ and functional signature 
$(f_1,\ldots,f_t)$ for $t=\wcol_1(G, \pi)$ where $[[f_i(u)]]^{\fG} = v$ 
if $v$ is the $i$th weakly 1-reachable vertex from $u$ with regard to $\pi$ 
and $[[f_i(u)]]^{\fG} = u$ if $i > |\WReach_1[G, \pi, u]|$.
\end{definition}
We denote $G$ as the \emph{underlying graph} of $\fG$ and $\fG$ has
multiplicity~$1$.

\begin{figure}
\centerline{
            \includegraphics{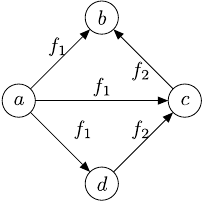}
            \vbox{%
             \hbox{\quad{\vbox{\halign{$#$\quad&$#$\cr
              f_1(a)=a & f_2(a)=a\cr
              f_1(b)=a & f_2(b)=c\cr
              f_1(c)=a & f_2(c)=d\cr
              f_1(d)=a & f_2(d)=d\cr
             }}}}%
             \vskip2\baselineskip
             \hbox{{\includegraphics{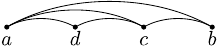}}}
            }
}
\caption{A graph and one of its functional representations.}
\label{fig:aug}
\end{figure}

In the following we will define an augmentation that is a special case
of a transitive fraternal augmentation as defined by Nešetřil and
Ossona de~Mendez~\cite{sparsity}.  A transitive fraternal augmentation
adds transitive and fraternal arcs to a directed graph.  If $uv$ and
$vw$ are arcs, then $uw$ is the corresponding transitive arc and if
$uv$ and $uw$ are arcs then $vw$ and $wv$ are fraternal arcs.  While
a transitive arc is unique there are two possible fraternal arcs and
in a transitive fraternal augmentation only one of them has to be
added.  In our case the direction of the fraternal arcs will be
determined by an order~$\pi$, which allows us to bound the indegree of
the resulting directed graph by a weak coloring number.

\begin{definition} \label{def:augmentation}
For a graph $G$ and a vertex ordering $\pi$ we define the
\emph{augmentation} $\hG$ of $G$ as the expansion of $\fG$ with
the functional symbols $h_i$ for $i \in [\wcol_2(G, \pi)]$ where
$[[h_i(u)]]^{\hG} = v$  if $v$ is the $i$th weakly 2-reachable vertex
from~$u$ w.r.t.\ $\pi$ and $[[h_i(u)]]^{\hG} = u$ if $i >
|\WReach_2[G, \pi, u]|$.
\end{definition}
Note that the underlying graph of $\hG$ is not $G$.  However,
$\wcol_r(\hG,\pi) \leq \wcol_{2r}(\fG, \pi)$ for every $r \in [|G|]$.
Moreover, the multiplicity of $\fG$ is $1$ and that of $\hG$ is~$2$.
Let us look again at the graph~$\fG$ in Figure~\ref{fig:aug}.  If
we construct $\fG^2$ then $h_1(b)=a$, $h_2(b)=b$, and $h_3(b)=c$
because $a$, $b$, and $c$ are weakly $2$-reachable from $d$.
The augmentation has three new functional symbols because
$\wcol_2(G,\pi)=2$.  We have $h_3(b)=f_2(b)=c$, which shows that the
multiplicity is~2.

\begin{lemma} \label{lem:construct-fct}
Given a graph $G$ with an ordering $\pi$, one can compute $\fG$ and $\hG$
in time $O(||G||)$ and $O(\wcol_2(G,\pi)^2||G||)$, respectively.
\end{lemma}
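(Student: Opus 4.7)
The plan is to compute $\fG$ by a single pass over the edge set and then bootstrap $\hG$ from $\fG$ by a local enumeration using the adjacency list of $G$. For $\fG$, I would first iterate over every edge $\{u,v\}$ of $G$, orient it from the larger endpoint with respect to $\pi$ to the smaller one, and append the smaller endpoint to a per-vertex list $L(u)$ of weak $1$-in-neighbors. After this pass, $L(u)$ enumerates exactly $\WReach_1[G,\pi,u]\setminus\{u\}$. Fixing any canonical order on $L(u)$ and setting $f_i(u)$ to the $i$-th entry of $L(u)$ when $i \le |L(u)|$, and $f_i(u) := u$ otherwise, yields the functional representation of \Cref{def:fct-struct}. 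The number of function symbols $\wcol_1(G,\pi)$ needed for the signature can be read off as the maximum list size plus one, so no prior knowledge of $\wcol_1(G,\pi)$ is required. The total work and output size are linear in $|V(G)|+|E(G)| = O(||G||)$.

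\textbf{Extending to $\hG$.} For each vertex $u$ I would compute $\WReach_2[G,\pi,u]$ directly from $\fG$ and the adjacency list of $G$. By definition a vertex $v \le_\pi u$ lies in $\WReach_2[G,\pi,u]$ iff $v = u$, $v$ is a weak $1$-in-neighbor of $u$ (already available from $\fG$), or there exists a neighbor $w$ of $u$ in $G$ with $v \in \WReach_1[G,\pi,w]$. Accordingly, for each $u$ I would iterate over every neighbor $w$ of $u$ in $G$ and, for each such $w$, walk through its $f_i$-values and collect those $v$ with $v \le_\pi u$. Once $\WReach_2[G,\pi,u]$ has been canonicalized, I would set $h_j(u)$ to its $j$-th element (and $h_j(u) := u$ for larger $j$), analogously to the construction of $\fG$.

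\textbf{Deduplication and timing.} The main obstacle will be that the same vertex $v$ may be generated several times during this enumeration --- once for every suitable intermediate vertex $w$ --- so I would maintain a single marker array indexed by $V(G)$, clearing only the positions actually set while processing each $u$ so that the clearing cost stays amortized against the enumeration work. The total number of triples $(u,w,v)$ examined is at most
\[
\sum_u \sum_{w\,:\,uw\in E(G)} |\WReach_1[G,\pi,w]| \;\le\; \wcol_1(G,\pi)\cdot 2|E(G)|,
\]
and together with an additional $\wcol_2(G,\pi)$ factor accounting for insertion and lookup overhead in the output table and for writing the $h_j$-values of each vertex, this yields the claimed $O(\wcol_2(G,\pi)^2\cdot ||G||)$ bound. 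Beyond the deduplication, the remaining work --- scanning edges, walking the $f_i$-values, and emitting function tables --- is straightforward bookkeeping.
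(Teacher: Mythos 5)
Your proposal is correct and follows essentially the same two-pass strategy as the paper: a single linear scan to build $\fG$ from the orientation of $E(G)$ along $\pi$, and then, for each $u$, an enumeration over neighbors $w$ of $u$ and the weakly $1$-reachable vertices stored at $w$ to assemble $\WReach_2[G,\pi,u]$, costing $O\bigl(\sum_u d(u)\wcol_1(G,\pi)\bigr)=O(\wcol_1(G,\pi)\,|E|)$ plus per-vertex post-processing. The only notable deviation is cosmetic: the paper deduplicates by sorting each candidate list in $O(\wcol_2\log\wcol_2)$ time, whereas you use a reusable marker array with amortized clearing --- both fit comfortably within $O(\wcol_2(G,\pi)^2\,\|G\|)$.
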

\begin{proof}
    Computing $\fG$ is easy. For this, the edges of $G$ have to be directed
    from right to left w.r.t.\ to~$\pi$.  The left neighbors of each vertex
    $v$ have to be assigned to $f_i(v)$ in order.  We only store $f_i(v)$
    for the $|\WReach_1(G, \pi, v)|$ many left neighbors of~$v$.
    
    We derive $\hG$ from $\fG$ and store it as follows:  
    As $\fG$ and $\hG$ agree except for the new
    functions $h_i$, we need to store only the latter.  For every $u\in
    V(G)$ we store all $h_i(u)$ with $h_i(u)\neq u$, i.e., only the first
    $|\WReach_2(u)|\leq d(u)\wcol_1(G)$ ones, by going through all neighbors
    $x$ of~$u$ and then to all neighbors of $x$ on the left of~$u$.
    All found vertices are then sorted in $O(\wcol_2(G,\pi)\log\wcol_2(G,\pi))$
    time.
    Altogether $\hG$ can be computed in time
    \begin{equation*}
    \smash{\sum_{u\in V(G)}} O\bigl(d(u)\wcol_1(G,\pi)+
    \wcol_2(G,\pi)\log(\wcol_2(G,\pi))\bigr)
    =O\bigl(||G||\wcol_2(G, \pi)^2\bigr).
    \end{equation*}
    Note that we can compute $h_i(u)$ in constant time if we have this
    representation of~$\hG$.
\end{proof}

\subsection{Multiplicity}

\begin{definition}
A structure $G$ with signature $\sigma$ has \emph{multiplicity} $m$ if
for every distinct pair $u, v \in V(G)$, the number of function symbols
$f \in \sigma$ with $u = f^G(v)$ or $v = f^G(u)$ and relation symbols $R
\in \sigma$ such that $R^G(u,v)$ is at most $m$.

A quantifier-free conjunctive clause $\omega(\bx)\in \FO[1, \sigma]$
has multiplicity $m$ if for distinct $i, j \in [|\bx|]$ there are at
most $m$ positive literals of the form $f(x_i) = x_j$ or vice versa.
\end{definition}

\begin{definition}
We define $\FO[1, \sigma, 2]$ to be all quantifier-free conjunctive
clauses with at most one function application, signature $\sigma$ and
multiplicity at most 2 where for $i, j$ the literal $f(x_i) =
x_j$ may appear only if $i \neq j$ and there is no literal $x_i = x_j$.

A clause $\omega(\bx)\in\FO[1,\sigma,2]$ is \emph{complete} if for
every $i\neq j$ and every $f\in\sigma$ either
$x_i=x_j$ or $x_i\neq x_j$ is contained in~$\omega$.
Furthermore, if $x_i\neq x_j\in\omega$ then for every function symbol
$f\sigma$ either
$f(x_i)=x_j$ or
$f(x_i)\neq x_j$ must be contained in~$\omega$.  If, on the other
hand, $x_i=x_j\in\omega$, then no other literal containing $x_i$ and
$x_j$ is allowed in~$\omega$.
\end{definition}
The complicated interaction between literals of the forms $x_i=x_j$
and $f(x_i)=x_j$ stems from the absence of self-loops.




\subsection{Decomposing Formulas into Simpler Ones}

Let $\omega(y \bx)$ be a conjunctive clause in a functional signature $\sigma$.
Over the course of this section, we will repeatedly decompose such clauses into
three conjunctive clauses $\tau(y), \psi(\bx)$, $\Delta^=(y \bx)$ such that
$\omega(y\bx) \equiv \tau(y) \land \psi(\bx) \land \Delta^=(y \bx)$. We will
always require that
\begin{itemize}
    \item $\tau(y)$ has functional depth at most two and contains only
    the variable~$y$,
    \item $\psi(\bx)$ contains literals of the form $x_p = f_j(x_q)$,
    and
    \item $\Delta^{=}(y\bx)$ contains only literals of the form $f(y)=x_i$, $f(x_i)=y$ and $y=x_i$ for $f \in \sigma$.
\end{itemize}
Formulas with the names $\tau(y), \psi(\bx)$, $\Delta^=(y \bx)$
will always be conjunctive clauses with the properties above and
will refer to a decomposition of a clause $\omega$, even if this is not explicitly mentioned.
We call $\Delta^=(y \bx)$ also the \emph{positive mixed literals} of~$\omega$.


The first step of the algorithm is to decompose a relational
quantifier-free formula~$\phi$ into a set of weighted conjunctive
clauses with a restricted form.
Also, we switch from a relational representation of the graph and the
formula to a functional representation.
The form of the clauses will be simple in the sense that 
there is only one literal that contains both $y$ and a variable from
$\bx$, which will allow us to use \Cref{lem:l2}.
We will use the notion of multiplicity throughout this series of
lemmas to be able to apply \Cref{lem:restrict_signature}.

The approximative error occurs in the procedure of \Cref{lem:res}.
Here, clauses with literal $y=f(x)$ are ignored as these cannot be handled with
our techniques. However, their impact on the evaluation is relatively small as
the vertices described by this literal have to be in $\WReach_r[G,\pi,v]$ for
some small number of vertices $v \in V(G)$. Also, the number of clauses with
this literal is quite small. 

\begin{lemma}
    \label{lem:res}
    Consider a graph $G$ with order $\pi$ and a quantifier-free first order
    formula $\phi(\bx y)$, both with signature $\sigma$. In time $(\wcol_2(G,
    \pi)+1)^{O(|\phi|)}$, one can construct a set $\Omega$ with the following
    properties:
    
    \begin{enumerate}
    \item The set $\Omega$ contains pairs $(\mu, \omega(\bx y))$ where $\mu \in
    \Z$ and $\omega$ is of the form $\tau(y) \land \psi(\bx) \land f(y) = x_i$ 
    \item $\psi$ has only positive literals with at most one function
    application,
    \item $|\Omega| \leq (\wcol_2(G)+1)^{O(|\phi|)}$,
    \item $|\omega| \leq 2 |\phi| +1$ for each $(\mu,\omega) \in \Omega$,
    \item for all $\bu \in V(G)^{|\bx|}$, and with $\delta \coloneqq 4^{|\phi|}
    \wcol_2(G)^{O(|\phi|)} $\\
    $\displaystyle \sum_{ (\mu, \omega) \in \Omega} \mu [[\cnt y \omega(\bu
    y)]]^{\hG} - \delta \leq [[\cnt y \phi (\bu y)]]^{G} \leq \sum_{ (\mu,
    \omega) \in \Omega} \mu [[\cnt y \omega(\bu y)]]^{\hG} + \delta .
    $
    \label{res6}
    \end{enumerate} 
\end{lemma}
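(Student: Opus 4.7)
My plan is to first collapse $\phi$ exactly into a set of weighted positive conjunctive clauses via \Cref{lem:remove_uneq}, then switch from the relational signature $\sigma$ to the functional signature of $\hG$, and finally reshape each clause so that the only literal mixing $y$ with $\bx$ has the prescribed form $f(y)=x_i$. The approximation error will enter only in a last clean-up step that discards clauses containing ``$y$-determining'' literals; everything else is an exact rewriting.

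Concretely, \Cref{lem:remove_uneq} first gives a set $\Omega_0$ of pairs $(\mu,\omega)$ with $|\Omega_0|\le 4^{|\phi|}$, each $\omega$ a positive conjunctive clause of length $\le|\phi|$ in signature $\sigma$, and $|\mu|\le 4^{|\phi|}$, whose weighted sum equals $[[\cnt y\,\phi(\bu y)]]^G$ exactly. Then, using that in $\hG$ each edge $E(a,b)$ of $G$ is witnessed either by $f_j(a)=b$ or $f_j(b)=a$ for some $j\le\wcol_1(G,\pi)$, I would expand every positive edge atom of a clause as a disjunction over such literals and then distribute. Each clause of $\Omega_0$ thereby fans out into at most $(2\wcol_1(G,\pi))^{|\phi|}$ positive conjunctive clauses in the functional signature of $\hG$, each using only positive literals with at most one function application. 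The resulting set $\Omega_1$ still represents the count exactly, has size at most $(\wcol_2(G,\pi)+1)^{O(|\phi|)}$, per-clause length at most $|\phi|$, and the same weight bound.

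The mixed literals of a clause $\omega\in\Omega_1$ then fall into three kinds: (i) $y=x_i$, (ii) $f(x_i)=y$, and (iii) $f(y)=x_i$. Kinds (i) and (ii) force $y$ to be a single specific vertex once $\bu$ is fixed, so any clause containing such a literal satisfies $[[\cnt y\,\omega(\bu y)]]^{\hG}\le 1$. I would drop all such clauses from $\Omega_1$. The error thereby introduced is at most $|\Omega_1|\cdot\max|\mu|\le 4^{|\phi|}\wcol_2(G,\pi)^{O(|\phi|)}$, which matches the declared $\delta$.

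The main technical obstacle is reshaping the remaining clauses, whose mixed part is a conjunction $f_{j_1}(y)=x_{i_1}\land\dots\land f_{j_k}(y)=x_{i_k}$, into the prescribed form with a single mixed literal. The idea is to exploit the augmentation: if both $f_{j_1}(y)=x_{i_1}$ and $f_{j_2}(y)=x_{i_2}$ hold then $x_{i_1}$ and $x_{i_2}$ lie on a common two-edge walk through $y$, so (depending on the $\pi$-order) one is weakly $2$-reachable from the other and there exists $m\le\wcol_2(G,\pi)$ with $h_m(x_{i_2})=x_{i_1}$. Branching over the $\wcol_2(G,\pi)$ possible values of $m$, I would split the clause into subclauses that carry the two-variable atom $h_m(x_{i_2})=x_{i_1}$ inside $\psi(\bx)$ and the matching functional-depth-$2$ atom $h_m(f_{j_2}(y))=f_{j_1}(y)$ inside $\tau(y)$, retaining $f_{j_1}(y)=x_{i_1}$ as the single remaining mixed literal; the care needed because $h_m$ is not injective is absorbed into the same $4^{|\phi|}\wcol_2(G,\pi)^{O(|\phi|)}$ approximation budget by a further clean-up of the same ``$y$-determined'' type. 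Iterating this reduction over every pair of mixed literals collapses the mixed part to a single literal $f(y)=x_i$. Each iteration multiplies the clause count by at most $\wcol_2(G,\pi)$ and adds $O(1)$ literals, so after $O(|\phi|)$ iterations the final set $\Omega$ still has size $(\wcol_2(G,\pi)+1)^{O(|\phi|)}$, each $\omega$ has length at most $2|\phi|+1$, and the whole procedure runs in time $(\wcol_2(G,\pi)+1)^{O(|\phi|)}$. Summing the contributions and adding the dropped-clause error gives the two-sided inequality claimed in item~5.
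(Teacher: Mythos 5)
Your plan matches the paper's: apply Lemma~\ref{lem:remove_uneq}, switch to the functional representation and expand edge atoms (the paper's equation~(\ref{eq:disjunction})), discard the clauses whose mixed literal determines $y$ (the paper's set $\Upsilon$), and then collapse the remaining multiple mixed literals into one via the augmentation functions $h_m$. The last step is essentially the paper's Lemma~\ref{lem:remove_eq}, and this is where your write-up has a genuine error of direction: you retain $f_{j_1}(y)=x_{i_1}$ (the literal whose target $x_{i_1}$ is the \emph{earlier} vertex, i.e.\ the one weakly $2$-reachable from $x_{i_2}$) while discarding $f_{j_2}(y)=x_{i_2}$. With the added atoms $h_m(x_{i_2})=x_{i_1}$ and $h_m(f_{j_2}(y))=f_{j_1}(y)$, the forward implication holds, but the backward implication fails: from $f_{j_1}(y)=x_{i_1}$ and $h_m(f_{j_2}(y))=f_{j_1}(y)=x_{i_1}=h_m(x_{i_2})$ you cannot conclude $f_{j_2}(y)=x_{i_2}$ because $h_m$ is not injective. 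The extra $y$'s your clause admits (those with $f_{j_2}(y)$ in $h_m^{-1}(x_{i_1})\setminus\{x_{i_2}\}$) are not $y$-determining and their number is not bounded by $\wcol$-type quantities, so this over-count cannot be absorbed into the declared additive error $\delta$ as you claim.

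The fix, which is exactly what Lemma~\ref{lem:remove_eq} does, is to keep the \emph{other} mixed literal $f_{j_2}(y)=x_{i_2}$ and drop $f_{j_1}(y)=x_{i_1}$. Then the backward direction is a pure syntactic chase: $f_{j_1}(y)=h_m(f_{j_2}(y))=h_m(x_{i_2})=x_{i_1}$, so the replacement is an \emph{exact} equivalence (once one also adds the guard $x_{i_1}\neq x_{i_2}$ to make the branch over $m$ mutually exclusive). In the paper no further approximation is incurred at this stage; the only source of error is the removal of the $\Upsilon$-clauses. A smaller cosmetic point: you also discard clauses with a positive $y=x_i$ literal as ``$y$-determined,'' which is correct and in fact cleaner than the paper's phrasing of $\Upsilon$ (which nominally lists only $y=f(x_i)$); the paper's Lemma~\ref{lem:remove_eq} instead handles the $x_p=x_q$ degenerate case exactly by adding a dedicated clause carrying $x_p=y$. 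Either treatment works, but your write-up must switch the retained literal or the approximation bound in item~5 is not established.
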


Before we can prove this lemma we consider first
\Cref{lem:remove_uneq,lem:remove_eq}. \Cref{lem:remove_uneq} uses
inclusion-exclusion to get rid of negative literals.
Then \Cref{lem:remove_eq} switches to the functional setting.
A clause that results from \Cref{lem:remove_uneq} is then transformed
into a set of clauses with only one mixed positive literal each.
The considered graph changes from the relational graph~$G$,
to its functional representation~$\fG$ and
then its augmentation~$\hG$.
The multiplicity is bounded during this procedure.

\begin{lemma}\label{lem:remove_eq}
    Let $\fG$ be the functional representation of a graph $G$ 
    and $\omega$ be a conjunctive clause of the form 
    $\omega(y \bx) = \tau(y) \land \psi(\bx) \land \Delta(y\bx)$
    where $\Delta(y\bx)$ is a conjunction of positive literals $f_i(y) = x_p$. 

    Then a set of clauses $\Omega$ with the following properties can be computed
    in time $(\wcol_2(G,\pi)+1)^{O(|\omega|)}$: 
    
    \begin{enumerate}
    \item Each clause in $\Omega$ is of the form $\tau'(y) \land \psi'(\bx)
    \land f(y) = x_i$,
    \label{req1}
    \item $\psi' \in FO[1,\sigma,2,\bx]$ contains only positive literals with at
    most one function application,
    \item $|\omega'| \leq |\omega|$ for each $\omega' \in \Omega$,
    \item $|\Omega| \le (\wcol_2(G, \pi)+1)^{|\omega|}$,
    \item \label{G2} for every $\bu \in V(G)^{|\bx|}$, \\
    $\displaystyle [[\cnt y \omega (y \bu)]]^{\fG} = \sum_{ \omega' \in \Omega}
    [[\cnt y \omega'(y \bu)]]^{\hG}.$
    \end{enumerate} 
\end{lemma}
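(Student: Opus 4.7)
The central idea is to exploit the fraternal augmentation encoded by $\hG$: whenever a vertex $y$ simultaneously satisfies two distinct mixed literals $f_{i_1}(y)=x_{p_1}$ and $f_{i_j}(y)=x_{p_j}$ of $\Delta$, both $x_{p_1}$ and $x_{p_j}$ are weakly $1$-reachable from $y$ with respect to $\pi$. Hence the length-$2$ walk $x_{p_1}\!-\!y\!-\!x_{p_j}$ witnesses that whichever of $x_{p_1},x_{p_j}$ is $\pi$-smaller is weakly $2$-reachable from the other. In $\hG$ this relationship is exactly recorded by some function $h_k$. Consequently, we may eliminate all but one mixed literal by replacing the others with constraints of the form $h_k(x_{p})=x_{q}$, which involve only $\bx$ and live in the augmented signature.

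The plan is as follows. First, designate one mixed literal, say $f_{i_1}(y)=x_{p_1}$, as the \emph{anchor}; this literal is retained unchanged in every output clause. Second, for every non-anchor mixed literal $f_{i_j}(y)=x_{p_j}$, enumerate the possible \emph{replacements} from the set
\[
\bigl\{\,h_k(x_{p_1})=x_{p_j}\;:\;k\in[\wcol_2(G,\pi)]\,\bigr\}\;\cup\;\bigl\{\,h_k(x_{p_j})=x_{p_1}\;:\;k\in[\wcol_2(G,\pi)]\,\bigr\}.
\]
The degenerate case $x_{p_1}=x_{p_j}$ is silently absorbed into this family by virtue of the convention $h_k(u)=u$ for indices $k$ exceeding $|\WReach_2[G,\pi,u]|$, which conveniently means no literal of the form $x_p=x_q$ ever needs to be introduced (respecting $\FO[1,\sigma,2]$). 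Third, the set $\Omega$ consists of all clauses $\tau(y)\land\psi'(\bx)\land f_{i_1}(y)=x_{p_1}$ obtained by choosing one replacement per non-anchor literal and appending the chosen literals to $\psi(\bx)$ to form $\psi'(\bx)$.

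All structural properties are then immediate. Property~1 and property~2 hold by construction, since every new literal is a positive atom with a single function application of the form $h_k(x_p)=x_q$ with $p\neq q$, and multiplicity~$2$ per pair is preserved because each $h_k$ stores the $k$-th weakly-$2$-neighbor fraternally. Property~3 holds because each non-anchor literal is replaced by exactly one literal, leaving the total length bounded by $|\omega|$. Property~4 follows from bounding the number of choice combinations by $(2\wcol_2(G,\pi))^{|\Delta|-1}\le(\wcol_2(G,\pi)+1)^{|\omega|}$, and the construction time matches this enumeration.

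For the counting identity (property~5), fix $\bu\in V(G)^{|\bx|}$ and consider any $y$ with $\fG\models\omega(y\bu)$. The block $\tau(y)\land\psi(\bu)\land(f_{i_1}(y)=u_{p_1})$ survives verbatim; for every non-anchor index $j$, the actual vertices $u_{p_1}$ and $u_{p_j}$ satisfy exactly one of the enumerated replacement literals in $\hG$ by the determinism of the augmentation functions and the reachability argument. Hence each such $y$ contributes to precisely one clause in $\Omega$, and summing over $\Omega$ recovers the count on $\fG$. The main obstacle is the careful bookkeeping around the degenerate cases (when $u_{p_1}=u_{p_j}$, or when $f_{i_j}(y)=y$ because $i_j$ exceeds $|\WReach_1(y)|$); these are handled uniformly by the self-loop convention on $h_k$, which is also what allows the output to land in $\FO[1,\sigma,2]$ without ever introducing a forbidden equality literal $x_p=x_q$ and without overcounting.
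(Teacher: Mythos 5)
There is a genuine gap that breaks the correctness of the construction. When you replace a non-anchor mixed literal $f_{i_j}(y)=x_{p_j}$ by a replacement literal of the form $h_k(x_{p_1})=x_{p_j}$ (or $h_k(x_{p_j})=x_{p_1}$), you have turned a constraint that restricts the \emph{variable $y$} (given $\bx$) into a constraint that restricts \emph{only $\bx$}. The resulting clause $\tau(y)\land\psi'(\bx)\land f_{i_1}(y)=x_{p_1}$ no longer mentions $f_{i_j}(y)$ at all, so for a fixed assignment $\bu$ satisfying $\psi'(\bu)$ it counts \emph{every} $y$ with $\tau(y)$ and $f_{i_1}(y)=u_{p_1}$, including those for which $f_{i_j}(y)\neq u_{p_j}$. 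In general many such $y$ exist, so property~5 fails: the sum over $\Omega$ can be strictly larger than $[[\cnt y\,\omega(y\bu)]]^{\fG}$. Your verification of property~5 only checks that every solution of $\omega$ lands in some output clause (the forward inclusion); the converse, which is the direction that actually goes wrong here, is never addressed.

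The paper's proof handles exactly this point by inserting, alongside the pure-$\bx$ literal $x_p=h_k(x_q)$, a second literal $f_i(y)=h_k(f_j(y))$ that mentions only $y$ and is moved into $\tau(y)$. Together with the retained anchor $x_q=f_j(y)$, the two new literals \emph{imply} the removed literal $x_p=f_i(y)$, so no spurious $y$'s are admitted; your construction omits this $\tau$-side counterpart, which is the crux of the whole trick. A secondary, smaller issue is your treatment of the degenerate case $x_{p_1}=x_{p_j}$: when $u_{p_1}=u_{p_j}$, the atom $h_k(u_{p_1})=u_{p_j}$ is satisfied for the index $k^*$ at which $u_{p_1}$ enumerates itself and for \emph{all} indices $k>|\WReach_2[G,\pi,u_{p_1}]|$, so the replacement choices are not mutually exclusive and a single $y$ is counted multiple times. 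The paper avoids this by adding $x_p\neq x_q$ to every non-degenerate branch and treating the case $x_p=x_q$ (which, given $i<j$ and the defaulting convention, forces $x_p=y$) as a single separate clause.
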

Note that in \ref{G2}. the interpretation inside the sum is over $\hG$ and
\emph{not} over $\fG$.

\begin{proof}
    Remember that the mixed positive literals of a clause are those literals
    contained in the part $\Delta^{=}(y\bx)$ of its decomposition. We start with
    $\Omega = \{\omega\}$ and describe a procedure that picks a clause $\omega'
    \in \Omega$ with $l>1$ mixed positive literals, removes $\omega'$ and
    replaces it with $\wcol_2(G, \pi)+1$ clauses with at most $l-1$ mixed
    positive literals. Once this procedure cannot be applied any longer, each
    clause has exactly one mixed positive literal and the set $\Omega$
    satisfies~\ref{req1}, i.e., that there is only one mixed positive literal.
    Since initially, $\omega$ has at most $|\omega|$ many mixed positive
    literals, $\Omega$ will have size at most $(\wcol_2(G, \pi)+1)^{|\omega|}$
    upon termination of the procedure.

    Let us pick a clause $\omega' \in \Omega$ and describe the procedure
    mentioned above in detail. There are two literals $x_p = f_i(y)$ and $x_q =
    f_j(y)$ in $\Delta^=(y\bx)$ with $i \leq j$. This implies that $x_p$ is
    weakly 2-reachable from $x_q$.

    Let $\Delta'^=(y\bx)$ be the clause obtained from $\Delta^=(y\bx)$
    by removing the literals $x_p = f_i(y)$ and $x_q = f_j(y)$.
    We remove $\omega'$ from $\Omega$ and add the clause
    \[
    x_p = x_q \land x_p = y \land \tau(y) \land \psi(\bx) \land \Delta'^=(y\bx),
    \]
    as well as for each $k \in [\wcol_2(G, \pi)]$ the clause
    \[ 
     x_p \neq x_q \land x_q = f_j(y) \land x_p = h_k(x_q) \land
     f_i(y)
    = h_k(f_j(y))  \land \tau(y) \land \psi(\bx) \land \Delta'^=(y\bx).
    \] 
    This means, one clause is removed and $1+\wcol_2(G, \pi)$ new clauses are
    added to $\Omega$.

    Remember that $h_i(x_q)$ is the $i$th weakly 2-reachable vertex from $x_q$
    or $x_q$ itself. Thus, if
    there are two distinct $k$ and $k'$
    such that $x_p = h_k(x_q)$ and $x_p = h_{k'}(x_q)$ then $x_p = x_q$. This
    implies that the newly added $1+\wcol_2(G, \pi)$ many clauses are mutually
    exclusive. The equivalence follows by observation: As $x_p$ is 2-reachable
    from $x_q$ there exists a $k \in [\wcol_2(G, \pi)]$ such that $x_p =
    h_k(x_q)$. 
    
    With $x_p = h_k(x_q)$ and some syntactic replacements it follow
    that the literal $x_p = f_i(y)$ 
    is equivalent to $f_i(y) = h_k(f_j(y))$.
    Hence, 
    \[ x_p = f_i(y) \land x_q = f_j(y) \land x_p = h_k(x_q) \] 
    is equivalent to
    \[ f_i(y) = h_k(f_j(y)) \land x_q = f_j(y) \land x_p = h_k(x_q). \qedhere \] 
\end{proof}

The idea of this proof is that the number of mixed literals can be decreased.
Two vertices which are weakly 1-reachable from $y$, are connected by a
functional edge in the augmented graph. With a syntactic trick, this can be
expressed with fewer mixed literals. Note that we use functional representations
to express these distance-2 relationships without needing to resort to
quantifiers.

Finally, we are able to prove \Cref{lem:res} by combining
\Cref{lem:remove_uneq,lem:remove_eq}. The other part of this proof is the
transition from relational to functional representations.

\begin{proof}[Proof of \Cref{lem:res}] 
    First, we apply \Cref{lem:remove_uneq} to $G$ and $\phi$, resulting in the set $\Omega_1$.
    Next, we turn to the functional representation $\fG$ of $G$. The signature
    of $\fG$ is then $\{\,f_i \mid i \in \wcol_1(G,\pi)\,\}$. Let
    $(\mu,\omega_1) \in \Omega_1$. Note that $\omega$ contains only positive
    literals. We construct $\omega_2$ by replacing every adjacency atom $E(a,b)$
    of $\omega_1$ for $a,b \in y\bx$ with
    \begin{equation}\label{eq:disjunction}
    a = b \lor \!\!\!\!\!\!\bigvee_{i \in [\wcol_1(G)]}\!\!\!\!
    (f_i(a) = b \land f_i(a) \neq a) \lor (f_i(b) = a \land f_i(b) \neq b).
    \end{equation}
    Note that the disjunction in (\ref{eq:disjunction}) is mutually exclusive in
    $\fG$. Thus, each adjacency atom gets replaced with a mutually exclusive
    disjunction over at most $2\wcol_1(G)+1$ conjunctive clauses. Therefore,
    transforming $\omega_2$ into disjunctive normal form yields at most
    $(2\wcol_1(G)+1)^{|\phi|}$ many mutually exclusive clauses. We place each of
    those clauses, with a weight $\mu$ into a new set $\Omega_2$.
    This procedure is repeated for all $(\mu,\omega_1) \in \Omega_1$.

    \begin{equation}\label{eq:lambdasum2}
        [[\cnt y \phi(y\bu)]]^G = \smash{\!\!\sum_{(\mu,\omega)\in\Omega_1}}\!\!
        \mu [[\cnt y \omega(y\bu)]]^{G} 
        = \!\!\sum_{(\mu,\omega)\in\Omega_2}\!\!
	\mu [[\cnt y \omega(y\bu)]]^{\fG}  .
    \end{equation}
    Each clause in $\Omega_2$ to is of the form $\tau(y)\land\psi(\bx)\land
    \Delta^=(y\bx)$.
    
    Consider the set $\Upsilon \subseteq \Omega_2$ of weighted clauses which
    contain a (positive or negative) literal of the form $y = f(x)$. We claim
    that $|\sum_{(\mu,\omega)\in\Upsilon} \mu [[\cnt y \omega(y\bu)]]^{\fG}|
    \leq 4^{|\phi|} \wcol_2(G)^{O(|\phi|)}$.

    The size of $\Upsilon$ is bounded by the size of $\Omega_2$ which is
    $\wcol_2(G)^{O(|\phi|)}$. Also, for each $\omega \in \Upsilon$ it holds that
    $[[\cnt y \omega(y \bu)]] \leq \wcol_1(\hG)$ as there is only one choice of
    $y$ for every fixed tuple $ \bu$ (due to the positive literal $y = f(u_i)$).
    As the weights of a clause in $\Omega_1$ is bounded by $4^{|\phi|}$ and the
    disjunction in \ref{eq:disjunction} is mutually exclusive, the weights of
    $\Omega_2$ are also bounded by $4^{|\phi|}$. The claim follows directly.

    Hence, removing $\Upsilon$ from $\Omega_2$ changes its evaluation by at most $4^{|\phi|} \wcol_2(G)^{O(|\phi|)}s$ additively, i.e.,
    \[
    [[\cnt y \phi(y\bu)]]^G
        = \sum_{(\mu,\omega)\in\Omega_2} \mu [[\cnt y
	    \omega(y\bu)]]^{\fG} = 
        \sum_{(\mu,\omega)\in\Omega_2 \setminus \Upsilon} \mu [[\cnt y \omega(y\bu)]]^{\fG} \pm 4^{|\phi|} \wcol_2(G)^{O(|\phi|)} .
    \] 
    We can apply \Cref{lem:remove_eq} to $\Omega_2 \setminus \Upsilon$.
    The resulting set gives us the desired set $\Omega$:
    Condition~\ref{res6} follows from (\ref{eq:lambdasum2}) and \Cref{lem:remove_eq}.
    The size of $\Omega$ is bounded by $|\Omega_2| \cdot (\wcol_2(G, \pi)+1)^{O(|\phi|)}$ which again is bounded by $(\wcol_2(G, \pi)+1)^{O(|\phi|)}$.
    Computing $\Omega$ is dominated by its size.
\end{proof}

\subsection{From Formulas to Weights}

The next lemma breaks down the evaluation of $\cnt y \phi(y\bu)$ into evaluating
quantifier-free first-order clauses on \emph{single} variables with ``weights.''
Note that there is no counting quantifier or dependence on $y$ in these clauses.
The lemma is essentially an adaption from \cite{DreierR2021} (Theorem 3 and
Lemma 6).

\begin{lemma} \label{lem:l2} Consider as input $\hG$ and a set $\Omega'$ of
    conjunctive clauses of the form $\tau(y) \land \psi(\bx) \land f(y) = x_i$,
    where $\psi(\bx)$ is in $FO[1,\sigma,2,\bx]$. In time $f(|\phi|)
    \wcol_{2}(G)^{O(|\phi|)} ||G||$ we can compute a set of conjunctive
    clauses~$\Omega$ with free variables $\bx $, as well as functions
    $c_{\omega, i}(v) \colon V(G) \to \Z$ for $\omega \in \Omega$ and $i \in \{
    1, \dots, |\bx| \}$ such that for every $\bu \in V(G)^{|\bx|}$ there exists
    exactly one formula $\omega \in \Omega$ with $\hG \models\omega(\bu)$ and
    for such a formula $\omega$
    \[
    \sum_{( \mu, \omega') \in \Omega'}[[\cnt y \omega'(y \bu)]]^{\hG}
    = \smash{\sum_{i=1}^{|\bx|}} c_{\omega, i}(u_i).
    \]
    The size of $\Omega$ is $\wcol_2(G, \pi)^{O(|\phi|)}$. The length of each
    $\omega \in \Omega$ is $O(\wcol_2(G, \pi))$, $\omega$ has multiplicity~$2$,
    and each literal in $\omega$ has at most one function application (e.g.,
    $f(x_i) = x_j$).
\end{lemma}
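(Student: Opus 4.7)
The plan is to exploit the tightly restricted shape of the input clauses: each $\omega' \in \Omega'$ factors as $\tau_{\omega'}(y) \wedge \psi_{\omega'}(\bx) \wedge f_{\omega'}(y) = x_{i_{\omega'}}$, so $y$ couples to $\bx$ only through the single unary constraint $f(y) = x_i$. Given $\bu$, this constraint forces $y$ into the preimage $f^{-1}(u_i)$, turning the counting quantifier into a function of $u_{i_{\omega'}}$ alone. Concretely, for every $\bu$,
\[
[[\cnt y\, \omega'(y\bu)]]^{\hG} = [[\psi_{\omega'}(\bu)]]^{\hG} \cdot N_{\tau_{\omega'},f_{\omega'}}(u_{i_{\omega'}}),
\]
where $N_{\tau,f}(v) := |\{\, y \in V(G) : \hG \models \tau(y) \wedge f(y) = v \,\}|$. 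Since $\tau$ has functional depth at most two and bounded length, each table $N_{\tau,f}$ is filled by a single linear sweep through $V(G)$ using the constant-time oracle for $\hG$ guaranteed by \Cref{lem:construct-fct}, and only $\wcol_2(G,\pi)^{O(|\phi|)}$ such tables are needed across all $\omega' \in \Omega'$.

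Next I define $\Omega$ as the set of realizable complete types of tuples in $V(G)^{|\bx|}$ with respect to the literals that actually occur in the bodies $\psi_{\omega'}$. Each such type is a conjunctive clause in $\FO[1,\sigma,2]$ on $\bx$ that, for every pair $(x_i,x_j)$, fixes whether $x_i = x_j$ and, if not, fixes the truth value of every literal of the form $f(x_i) = x_j$ that appears in some $\psi_{\omega'}$. The multiplicity-$2$ bound of $\hG$ ensures that at most $O(1)$ positive function literals can be satisfied per pair, so the realizable types number at most $\wcol_2(G,\pi)^{O(|\phi|)}$ and can be encoded using $O(\wcol_2(G,\pi))$ relevant literals after dropping redundant inequalities. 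By construction, every $\bu \in V(G)^{|\bx|}$ satisfies exactly one $\omega \in \Omega$.

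Given this partition, the identity follows by aggregation. For fixed $\omega \in \Omega$ and $i \in \{1,\dots,|\bx|\}$ let $\Omega'_{\omega,i}$ collect the pairs $(\mu,\omega') \in \Omega'$ with $i_{\omega'} = i$ and $\omega \models \psi_{\omega'}$, and set
\[
c_{\omega,i}(v) := \sum_{(\mu,\omega') \in \Omega'_{\omega,i}} \mu \cdot N_{\tau_{\omega'},f_{\omega'}}(v).
\]
Whenever $\hG \models \omega(\bu)$, the Boolean value $[[\psi_{\omega'}(\bu)]]^{\hG}$ is determined entirely by $\omega$, so substituting the factorization above into $\sum_{(\mu,\omega') \in \Omega'} \mu [[\cnt y\, \omega'(y\bu)]]^{\hG}$ and grouping by $i_{\omega'}$ yields exactly $\sum_{i=1}^{|\bx|} c_{\omega,i}(u_i)$, as required.

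The main obstacle will be respecting the stated length and runtime bounds rather than naively enumerating abstract types. A brute-force enumeration over the $|\bx|^2$ pairs and $\Theta(\wcol_2(G,\pi))$ function symbols would blow up to $\wcol_2(G,\pi)^{|\bx|^2}$ clauses. The remedy is to enumerate only types that are both realizable in $\hG$ and consistent with the multiplicity-$2$ bound; these can be produced by scanning outward along at most $\wcol_2(G,\pi)$ function edges per vertex. Since each $c_{\omega,i}(v)$ then aggregates at most $|\Omega'|$ precomputed table entries, the overall construction stays within $f(|\phi|)\,\wcol_2(G,\pi)^{O(|\phi|)}\,\|G\|$.
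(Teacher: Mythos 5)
Your proof is correct and follows essentially the same route as the paper: both arguments rest on the factorization $[[\cnt y\,\omega'(y\bu)]]^{\hG} = [[\psi_{\omega'}(\bu)]]^{\hG} \cdot |\{y : \hG \models \tau_{\omega'}(y) \land f_{\omega'}(y) = u_{i_{\omega'}}\}|$, both introduce $\Omega$ as a set of mutually exclusive complete types of tuples $\bu$ over the $\bx$-only literals so that $[[\psi_{\omega'}(\bu)]]$ is type-determined, both group the input clauses by the index $i_{\omega'}$ and the type $\omega$ (your $\Omega'_{\omega,i}$ is the paper's $\Gamma_{\omega,i}$), and both fill the per-vertex tables by a single linear sweep over $V(G)$. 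The only cosmetic deviation is that the paper takes $\Omega$ to be \emph{all} complete conjunctive clauses in $\FO[1,\rho,2]$ over $\bx$ while you restrict to the realizable types over the literals actually occurring in the bodies $\psi_{\omega'}$; since the set of candidate literals is in either case $O(|\bx|^2\,\wcol_2(G,\pi))$, this changes neither the correctness nor the stated size and length bounds.
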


\begin{proof}
    Let $\rho$ be the signature of $\hG$ (namely, $(f_i)_{i \in [\wcol_1(G,
    \pi)]} \cup (h_i)_{i \in [\wcol_2(G, \pi)]}$) and $\Omega \subseteq \FO[1,
    \rho, 2]$ be the set of all complete conjunctive clauses with at most one
    function application per literal, signature $\rho$, multiplicity~2, free
    variables $\bx z$. This set has three important properties: First, for every
    $\bu \in V(G)^{|\bx|}$ there exists exactly one $\omega \in \Omega$ with
    $\hG \models \omega(\bu)$.
    
    Second, for every $\omega \in \Omega$ and conjunctive clause $\psi(\bx) \in
    \FO[1, \rho, 2]$ either $\omega \models \psi$ or $\omega \models \neg \psi$.
    The size of $\Omega$ is bounded by $|\rho|^{2|\bx|^2}$ as each complete
    conjunctive clause in $\FO[1, \rho, 2]$ can be identified with its positive
    literals.
    
    Let now $\bu \in V(G)^{|\bx|}$, $\bigl(\mu, \tau(y) \land \psi(\bx) \land
    g(y) = x_i\bigr) \in \Omega'$ and $\omega \in \Omega$ such that $\hG \models
    \omega(\bu)$. If $\omega \models \neg \psi$ then
    \[[[\#y\, \tau(y) \land \psi(\bu)\allowbreak \land g(y) = x_i]]^{\hG} = 0.\]
    Otherwise, if $\omega \models \psi$ then
    \[\smash{[[\#y\, \tau(y) \land \psi(\bu) \land g(y) = x_i]]^{\hG}} = [[\#y\,
    \tau(y) \allowbreak \land g(y) = x_i\ ]]^{\hG} . \] 
    Using this observation,
    we define for every $\omega \in \Omega$ and $i \in \{1,\dots,|\bx|\}$ a set
    $\Gamma_{\omega,i}$ by iterating over all formulas $\omega \in \Omega$ and
    $(\mu,\tau(y) \land \psi(\bx) \land g(y) = x_i \in \Omega'$ and adding
    $(\mu, \tau(y) \land g(y) = x_i$ to $\Gamma_{\omega,i}$ if $\omega \models
    \psi$.
    Now for every $\bu \in V(G)^{|\bx|}$ there exists exactly one formula
    $\omega \in \Omega$ with $\hG \models \omega(\bu)$, and for such a
    formula~$\omega$
    \begin{equation}
    \sum_{(\mu,\omega')\in\Omega'}\mu [[\#y\,\omega'(y\bu)]]^{\hG} 
    =\sum_{i = 1}^{|\bx|}
    \sum_{(\mu, \tau(y) \land g(y) = x_i \in\Gamma_{\omega,i}}
    \!\!\!\!\mu [[\#y\,\tau(y) \land g(y) = u_i]]^{\hG}.
    \label{eq:y}
    \end{equation}
    
    Fix one set $\Gamma_{\omega,i}$. For every formula $(\mu,\tau(y) \land g(y)
    = u_i) \in \Gamma_{\omega,i}$ we  construct a function $c$ with $c(v) =
    [[\#y\,\tau(y) \land g(y) = u_i]]^{\hG}$ in time $O(|\phi| \cdot ||G||)$ by
    the following algorithm. Note that $\tau$ is quantifier-free and its size is
    bounded by $O(|\phi|)$.

    \begin{algorithm}[H]
    \For{$u\in V(\hG)$ with $\hG\models\tau(u)$}{
    $c(g(u)) \gets c(g(u))+1$}
    \end{algorithm}
    
    Let $c_{\omega,i}$ be the sum over all such functions $c$
    for formulas in $\Gamma_{\omega,i}$.
    Then
    \begin{equation}\label{eq:z}
    c_{\omega,i}(u_i) =
    \!\!\!\sum_{(\mu, \tau(y) g(y) = u_i \in\Gamma_{\omega,i}}\!\!\!\!
    \mu [[\#y\,\tau(y) \land g(y) = u_i]]^{\hG}.
    \end{equation}
    Combining Equations (\ref{eq:y}) and (\ref{eq:z})
    yields our statement.
    
    Computing $\Omega$ takes linear time. Computing all $\Gamma_{\omega,i}$
    takes $O(|\Omega| \cdot |\bx| \cdot |\Omega'|)$ time. Computing each
    $c_{\omega,i}$ takes $O(|\Omega'| \cdot |\phi| \cdot ||\hG||)$ time. This
    gives us in total the desired running time.
\end{proof}

For now, the size of the signature of the relational structure and the size
of the clauses depend on $\wcol_2(G,\pi)$ which is too large for our
application.
The following lemma decreases both sizes to a number 
that only depends on $|\bx|$, the number of free variables of the clause.
The underlying structure remains untouched. 

This lemma will be essential to make the running time fpt on nowhere
and almost nowhere dense graph classes, but is not needed for graph
classes of bounded expansion.  Because of the bounded multiplicity the
number of positive literals in $\omega$ is bounded by a function of~$k$
and most literals are negative.  As $\omega$ is complete, we will be
able to treat these negative literals equivalently, when necessary.
We consider the following lemma together with the notion of multiplicity the 
main difference between the approach in this section and the one of 
\cite{DreierR2021}.
\begin{lemma}\label{lem:restrict_signature}
    Let $G$ be a relational structure with signature $\sigma$ with binary,
    symmetric and irreflexive relations and multiplicity~$2$ and $\omega$ a
    complete conjunctive clause in $\FO[1, \sigma, 2]$ over free variables $\bx
    = x_1 \dots x_k$. Then we can compute a relational structure $G'$ and a
    relational clause $\omega'$ both with signature $\rho$ such that $td(G')
    \leq td(G)$, $|\rho| \leq 2k^2$ and $|\omega'| \leq 2^{2{k^4}} + k^2$ in
    time $||G|| + |\omega'|$ and for all $\bu \in V(G)^k$
    \[
    [[\omega(\bu)]]^{G} = [[\omega'(\bu)]]^{G'} .
    \]
\end{lemma}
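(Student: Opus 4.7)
The plan is to exploit the following observation: since $\omega$ has multiplicity $2$ and only $k$ free variables, it specifies at most two positive literals for each of the $\binom{k}{2}$ unordered variable pairs, and $G$'s multiplicity-$2$ property ensures that every pair of distinct vertices carries at most two $\sigma$-relations. Consequently, only the (at most $\binom{k}{2}$) distinct non-empty \emph{pair types} $T_{ij} \subseteq \sigma$ that actually appear in $\omega$ are relevant for evaluating $\omega(\bu)$; everything else in $\sigma$ can be discarded by a relabeling.

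Concretely, I would first scan $\omega$ to collect the family $\mathcal{T}$ of distinct non-empty pair types it mentions and then build $\rho = \{R_T : T \in \mathcal{T}\} \cup \{R_\star\}$, so that $|\rho| \leq \binom{k}{2} + 1 \leq 2k^2$. On the same universe as $G$, I would interpret $R_T^{G'}(v,w) \iff \{R \in \sigma : R^G(v,w)\} = T$ and $R_\star^{G'}(v,w) \iff \exists R \in \sigma\colon R^G(v,w)$. Then I would form $\omega'$ by processing each of the $\binom{k}{2}$ variable pairs: keep the equality/inequality literal from $\omega$, and when $x_i \neq x_j$ append either $R_{T_{ij}}(x_i,x_j)$ (if $T_{ij} \neq \emptyset$) or $\neg R_\star(x_i,x_j)$ (if $T_{ij} = \emptyset$). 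This yields $|\omega'| = O(k^2)$, comfortably within the $2^{2k^4}+k^2$ budget; $\omega'$ is written down in time linear in $|\omega|$, while $G'$ is built in a single pass over the $O(\|G\|)$ relations of $G$ with hash-based bucketing by type.

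Treedepth preservation is immediate: every edge in the Gaifman graph of $G'$ forces some underlying $\sigma$-edge in $G$ (by construction of both $R_T$ and $R_\star$), so the Gaifman graph of $G'$ is a subgraph of that of $G$ and hence $\td(G') \leq \td(G)$. I expect the main obstacle to be the correctness argument, which proceeds pair by pair: for non-empty $T_{ij}$, the literal $R_{T_{ij}}(x_i,x_j)$ must pin down the \emph{exact} multi-relation type of $(u_i,u_j)$ in $G$, and only then are all negative literals $\neg R(x_i,x_j)$ of $\omega$ with $R \notin T_{ij}$ automatically satisfied---this step relies crucially on the multiplicity-$2$ property of $G$ to rule out extra unwanted relations on the pair. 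The empty-type case is analogous and explains the necessity of the auxiliary symbol $R_\star$, whose sole purpose is to let the single literal $\neg R_\star(x_i,x_j)$ forbid every $\sigma$-relation on the pair at once without enlarging the Gaifman graph beyond that of $G$.
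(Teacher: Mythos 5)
Your construction is correct, but it takes a genuinely different route from the paper's. The paper splits $\sigma$ into $\sigma^+$ (relation symbols occurring positively in $\omega$) and $\sigma^- = \sigma \setminus \sigma^+$, keeps each symbol of $\sigma^+$ verbatim in $\rho$, and merges all of $\sigma^-$ into one catch-all relation $E^- = \bigcup_{E'\in\sigma^-}E'$; the new clause is $\omega_{\sigma^+} \wedge \bigwedge_{i,j\in[k]}\neg E^-(x_i,x_j)$. You instead introduce one fresh symbol $R_T$ per non-empty pair type $T$ occurring in $\omega$ (plus a catch-all $R_\star$), and interpret $R_T^{G'}(v,w)$ as ``the $\sigma$-type of $(v,w)$ is exactly $T$.'' Both constructions preserve the Gaifman graph of $G$ (hence treedepth) and bring the signature down to $O(k^2)$, but yours yields a much tighter $|\omega'| = O(k^2)$ compared to the paper's $2^{2k^4}+k^2$ budget, and is arguably cleaner. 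One small inaccuracy in your write-up: you assert that correctness ``relies crucially on the multiplicity-$2$ property of $G$ to rule out extra unwanted relations on the pair,'' but it does not. Your definition $R_T^{G'}(v,w)\iff\{R\in\sigma: R^G(v,w)\}=T$ already enforces exactness: any extra relation on the pair would change its type to some $T'\neq T$ and falsify $R_T(v,w)$, irrespective of the multiplicity of $G$. (Multiplicity $2$ of $\omega$---not of $G$---is what the paper uses to bound $|\sigma^+|$; your type-indexed signature is bounded by $\binom{k}{2}+1$ simply because there are only $\binom{k}{2}$ pairs, with no appeal to multiplicity at all.)
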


\begin{proof}
    Let $L^+$ be the positive literals of $\omega$ with two distinct variables
    (e.g., $E(x_i, x_j)$ and not $E(x_i, x_i)$), $\sigma^+$ be the set of
    relational symbols occurring in $L^+$ and $\sigma^- := \sigma \setminus
    \sigma^+$. As the multiplicity of $G$ is at most 2, $|\sigma^+| \leq 2k^2$.
    
    We define a relational structure $G'$ on vertices $V(G)$ where for each $E'
    \in \sigma^+$ the edge relation $E'$ is preserved. Additionally, we
    introduce the edge relation $E^- \coloneqq \bigcup_{E' \in \sigma^-}E'$.
    Note that the underlying graph of $G$ and $G'$ is the same. Hence, their
    tree-depths are identical.
    
    Moreover, the literals of the clause $\omega$ can be partitioned into sets
    $\omega_{\sigma^+}$ and $\omega_{\sigma^-}$ such that $\omega(\bx) \equiv
    \omega_{\sigma^+}(\bx) \land \omega_{\sigma^-}(\bx)$ where
    $\omega_{\sigma^+}$ is the conjunction of literals of $\omega$ with
    (positive and negative) edge relations from $\sigma^+$ and equality and
    $\omega_{\sigma^-}$ the conjunction of literals using negative edge
    relations from~$\sigma^-$.
    
    %
    Also note that $|\omega_{\sigma^+}| \leq 2^{2{k^4}}$ as there are at most
    $k^2$ pairs $x_i$ and $x_j$ and at most $2k^2$ choices for $f \in \sigma_+$.
    Also $\omega_{\sigma^-}(\bx) \equiv \bigwedge_{f \in \sigma^-}
    \bigwedge_{i,j \in [k]} \neg E^-(x_i, x_j)$.
    
    We define a new complete, conjunctive clause $\omega'$ with signature $\rho
    \coloneqq \sigma^+ \cup \{E^-\}$ 
    \[
    \omega'(\bx) \coloneqq \omega_{\sigma^+}(\bu)
    \land \bigwedge_{i, j \in [k]} \neg E^-(x_i, x_j) .
    \]
    
    It is easy to see that for each $\bu \in V(G)^k$ that
    \begin{equation} \label{eq:opti}
    G \models \omega(\bu) \iff G' \models \omega'(\bu) .
    \end{equation}
    The formula $\omega'$ has length at most $2^{2{k^4}}+ k^2$ and that $\omega'
    \in \FO[1, \rho, 2]$
\end{proof}

We are now able to prove our main result of this section. The proof idea works
as follows: Using \Cref{lem:res,lem:l2} we can break down the counting formula
into a sum of vertex weights that depend only on single vertices. Using low
treedepth colorings and an optimization variant of Courcelle's theorem, we can
optimize it in fpt time.

However, this approach is not yet possible as both the signature and the length
of the clauses are not bounded by a function of $|\phi|$, but they depend on the
weak coloring number of $G$. Hence, it is not suited as an input for Courcelle's
theorem. We solve this problem by applying \Cref{lem:restrict_signature} which
yields a shorter, equivalent formula of size $f(k)$.


\begin{proof}[Proof of \Cref{thm:opti}] 
    We use \Cref{lem:wcol-approx} to compute a vertex ordering $\pi$ with
    $\wcol_r(G,\pi)\leq \wcol_{g(r)}(G)^{g(r)}$ for every $r\in\N$ in linear
    time for a computable function $g$. Note that if have a running time or some
    structure of size bounded by $\wcol_{h(r)}(G,\pi)^{h(r)}$ for some
    computable function $h$, then it is also bounded by $\wcol_{f(r)}(G)^{f(r)}$
    for some computable function $f$. This bound is good enough for most of our
    cases.
    
    We use \Cref{lem:construct-fct} to construct $\fG$ and $\hG$ \Cref{lem:res}
    to construct a set $\Omega'$ such that for every $\bu \in V(G)^{ |\bx|}$
    \begin{equation} \label{eq:a}
        \sum_{ (\mu, \omega) \in \Omega'} \mu [[\cnt y \omega(\bu
        y)]]^{\hG} - \delta
        \leq
        [[\cnt y \phi (\bu y)]]^{G} \leq \sum_{ (\mu, \omega) \in \Omega'} \mu [[\cnt y \omega(\bu y)]]^{\hG} + \delta 
    \end{equation}
    where $\delta \coloneqq 4^{|\phi|} \wcol_2(G)^{O(|\phi|)} $. 
    
    Applying \Cref{lem:l2} to $\Omega'$ gives us 
    a set $\Omega$,
    and functions $c_{\omega,i}(v)$ with $c_{\omega,i}(v) = O(|\hG|)$
    such that for every $\bu \in V(G)^{|\bx|}$ 
    \[ 
    \sum_{ (\mu, \omega) \in \Omega'} \mu [[\cnt y \omega(\bu y)]]^{\hG}
    =\sum_{i=1}^{|\bx|}c_{\omega,i}(u_i),
    \] 
    where $\omega \in \Omega$ is the formula with
    $\hG \models \omega(\bu)$.
    Assume for now that we can compute for a given $\omega \in \Omega$
    a tuple $\bu^* \in V(G)^{|\bx|}$ such that
    \begin{equation}\label{eq:newopti}
    \sum_{i=1}^{|\bx|}c_{\omega,i}(u^*_i) = 
    \max_{\bu}\Bigl\{\,\sum_{i=1}^{|\bx|}c_{\omega,i}(u_i) \Bigm|
    \hG \models \omega(\bu)\,\Bigr\}.
    \end{equation}
    Then we could cycle through all $\omega \in \Omega$,
    compute a solution $\bu^*$ satisfying (\ref{eq:newopti}),
    and return the optimal  $\bu^*$ among all of them.
    This gives us a solution to our original optimization problem 
    up to an additive error of $\delta$ resulting from \Cref{eq:a}.
    Thus, from now on, we will concentrate on one formula $\omega \in \Omega$
    and solve the optimization problem~(\ref{eq:newopti}).
    
    It will now be easier for us to work with relational instead of functional
    structures.
    We transform $\hG$ into a relational undirected structure $G'$
    with the same universe via standard methods:
    The unary relations are preserved.
    Additionally, for every function symbol $f$ we add the relation symbol
    $E_f$
    with
    $E^{G'}_f = \{\,(v,f_{\hG}(v)), (f_{\hG}(v), v) \mid v \in V(\vG'),
    f_{\hG}(v) \neq v \,\}$, a symmetric and irreflexive binary relation.
    
    The resulting structure is isomorphic to an undirected graph with both
    vertex- and edge-labels, without self-loops\footnote{We disregard
    self-loops, as they do not contain any additional information.}, and has the
    same weak coloring numbers as $\hG$. We further construct a relational
    conjunctive clause $\omega'(\bx)$ such that $\hG \models \omega(\bu)$ iff
    $G' \models \omega'(\bu)$ for every $\bu \in V(G)^{|\bx|}$. This can be done
    by replacing each literal $f(x_i) = x_j$ by $E_f(x_i, x_j)$.
    Note that this preserves the multiplicity of the clauses, i.e., $\omega'$
    has multiplicity $2$.
    
    As $\wcol_r(G') \leq \wcol_{2r}(G)$ for
    every~$r$~(see~\Cref{def:augmentation}) we can compute a low tree-depth
    coloring with few colors by \Cref{prop:treedepth-col}, i.e., an
    $r$-treedepth coloring with at most $\chi \leq \wcol_{2^{r-2}}(G') \leq
    \wcol_{2^{r-1}}(G)$ colors. Remember that we fixed an $\omega'$ such that
    $\hG \models \omega'(\bu)$. As the tuple $\bu$ is contained in some subgraph
    induced by at most $|\bx|$ colors, we can from now on assume that we are
    working with graphs of bounded tree-depth: Define $\mathcal H$ as the set of
    graphs induced by at most $|\bx|$ colors in $G$. The size of $\mathcal H$ is
    bounded by ${\chi \choose |\bx|} \leq \wcol_{2^{|\bx|-1}}(G, \pi)^{|\bx|}$.
    
    For every $\bu \in V(G)^{|\bx|}$ with $\hG \models \omega'(\bu)$ there
    exists $H \in \cal H$ such that $\bu \in V(H)^{|\bx|}$ and $H \models
    \omega'(\bu)$. In order to optimize (\ref{eq:newopti}), it is therefore
    sufficient to consider every graph $H \in \cal H$ and compute $\bu^* \in
    V(H)^{|\bx|}$ such that
    \begin{equation}\label{eq:optitd}
        \sum_{i=1}^{|\bx|}c_{\omega,i}(u^*_i) = 
        \!\!\!\max_{\bu \in V(H)^{|\bx|}}\Bigl\{\,\sum_{i=1}^{|\bx|}c_{\omega,i}(u_i) \Bigm|
        H \models \omega'(\bu)\,\Bigr\},
    \end{equation}
    and then return the best value found for~$\bu^*$.
    The input $H$ to the optimization problem (\ref{eq:optitd})
    comes from a graph class with bounded tree-depth. 
    Using Courcelle's theorem~\cite{Courcelle1990}
    one can solve a wide range of problems on these graphs in fpt time. Since we
    want to solve an optimization problem we require an extension of the
    original theorem. Courcelle, Makowsky, and Rotics define
    \textit{LinEMSOL}~\cite{CourcelleMR2000} as an extension of monadic second
    order logic allowing one to search for sets of vertices with weights that
    are optimal with respect to a linear evaluation function.
    
    Before we can apply the result of \textit{LinEMSOL} to $H$ and $\omega'$,
    note that the size of the signature of $H$ and $\omega'$ and the size of
    $\omega$ do not depend only on $|\bx|$, but on the weak coloring number of
    $G$. Hence, it is unsuitable as an input in this form. Instead, we apply
    \Cref{lem:restrict_signature} to $H$ and $\omega'$ resulting in a graph
    $H^*$ and a complete conjunctive clause $\omega^*$, both with a signature
    $\rho$ where the size of $\rho$ and $\omega^*$ is bounded by a function
    depending only on $|\bx|$ and $H \models \omega'(\bu)$ iff $H^* \models
    \omega^*(\bu)$ for every $\bu \in V(G)^{|\bx|}$. Our linear evaluation
    function is $\sum_{i=1}^{|\bx|}c_{\omega^*,i}(u_i)$ and our formula
    $\omega^*(\bx)$ clearly lies in monadic second order logic.
    
    We find a solution $\bu^*$ to our optimization problem~(\ref{eq:optitd}) in
    linear fpt time with LinEMSOL. Cycling through each $\omega \in \Omega$ and
    $H \in \mathcal H$ increases the running time by a factor of $|\Omega| \cdot
    |\mathcal{H}| \leq \wcol_{2}(G)^{O(|\phi|)} \cdot
    \wcol_{2^{{|\bx|}-1}}(G)^{|\bx|}$.
    Transforming $G$ into $G'$ and $H$ into $H^*$ takes linear time.
    %
\end{proof}

\fi
\section{Hardness Results}

In this section, we try to see how far the above result can or cannot be
extended to either a bigger class of problems or to more general graph classes.
Exemplary, we examine the distance-$r$ versions of the dominating set,
independent set and clique problem. Note that in contrast to the section before,
we do not consider the partial problem versions. We see that each of these
problems behave differently in this context. The distance-$r$ dominating set
problem is already hard for distance $1$ on some almost nowhere dense graph
classes, whereas distance-$r$ independent set and distance-$r$ clique are both
fpt on almost nowhere dense graph classes.

As for graph classes, we consider classes that are closed under removing edges
because monotone graph classes are very well understood and the notions of
nowhere dense and almost nowhere dense coincide on those classes. Interestingly,
for graph classes closed under removing edges the distance-$r$ clique problem is
fpt for all distances $r$ if and only if the class is almost nowhere dense
(under some complexity theoretic assumptions). However, there exist graph
classes which are closed under removing edges but not almost nowhere dense that
allow for fpt algorithms for the distance-$r$ independent set problem. The
difference of behavior between distance-$r$ clique and distance-$r$ independent
set is also intriguing as the FO-formulation of these problems has exactly one
quantifier alternation for both. 
\iflongpaper
Before we continue, we give a formal definition
of distance-$r$ clique.

\begin{definition}
    A set $K$ of vertices is a \emph{distance-$r$ clique} in a graph $G$ if
    there exist pairwise vertex disjoint paths of length at most $r$ between
    each pair of vertices in $K$.
\end{definition}
Note that $K$ is a distance-$r$ clique if and only if $K$ are the principle
vertices of an $(r-1)$-subdivision of a clique appearing as subgraph in $G$. 
A distance-$1$ clique is exactly a \enquote{usual} clique. Note that stars are
\emph{not} distance-2 cliques. 

We also consider the generalization of an independent set.
\begin{definition}
    A set $I$ of vertices is a \emph{distance-$r$ independent set} in a graph
    $G$ if every distinct pair of vertices from $I$ has distance strictly larger
    than $r$ in $G$.
\end{definition}
Note that a usual independent set is exactly a distance-$1$ independent set.

\fi

\subsection{Exact Evaluation Beyond Nowhere Dense Classes}

The following lemma proves that dominating set is W[1]-hard on almost nowhere
dense classes.

The class of bipartite graphs with sides $L$ and $R$ where $L$ has
polylogarithmic size is almost nowhere dense: A witness for this is a vertex
ordering that starts with $L$ and starts . Only the vertices from $L$ are weakly
$r$-reachable from any vertex. Hence, $\wcol_r(G) \leq |L| +1$ for each $r$.

\begin{theorem}\label{thm:hardness-domset} In bipartite graphs whose left side
    has $2k(k-1)\lceil\log(n)\rceil$ vertices and whose right side has $n$
    vertices it is W[1]-hard to decide whether there are $k \choose 2$
    right-side vertices dominating all left-side vertices.
\end{theorem}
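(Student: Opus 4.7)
The plan is an fpt-reduction from \emph{Multicoloured $k$-Clique} (W[1]-hard with parameter $k$): given a graph $H$ with partition $V_1,\dots,V_k$ and $|V_i|=N$ (after padding), decide whether $H$ contains a clique with one vertex per class. Identify each $V_i$ with $\{0,1\}^\ell$, where $\ell=\lceil\log N\rceil$. The target bipartite graph $G$ has right side $R=\{r_{\{i,j\},u,v}:i<j,\,uv\in E(H),\,u\in V_i,\,v\in V_j\}$, so with $n:=|R|\le|E(H)|$ one has $\ell=O(\log n)$. Its left side $L$ contains one vertex $L^c_{\{i,j\},b,\beta}$ for each pair $\{i,j\}$ with $i<j$, each bit $b\in[\ell]$, each $\beta\in\{0,1\}$ and each side $c\in\{i,j\}$, giving exactly $4\binom{k}{2}\ell=2k(k-1)\lceil\log n\rceil$ left-side vertices; if $\ell<\lceil\log n\rceil$ I pad $L$ with extra vertices adjacent to all of $R$, which are trivially dominated. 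The new parameter is $\binom{k}{2}$, a function of $k$, so this is an fpt reduction.

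The edges realise a bit-verification gadget: I connect $r_{\{i,j\},u,v}$ to $L^i_{\{i,j\},b,\beta}$ iff $u_b=\beta$ (the ``in-pair'' edges), and to $L^i_{\{i,j'\},b,\beta}$ for every $j'\ne j$ iff $u_b\ne\beta$ (the ``cross-pair'' edges), symmetrically for $V_j$. In the forward direction, a multicoloured clique $v_1,\dots,v_k$ produces the size-$\binom{k}{2}$ set $D=\{r_{\{i,j\},v_i,v_j}:i<j\}$; a gadget vertex $L^i_{\{i,j\},b,\beta}$ is covered by the in-pair vertex when $(v_i)_b=\beta$, and otherwise by any one of the cross-pair vertices from the pairs $\{i,j'\}$ with $j'\ne j$, using that all these pairs pick the same $V_i$-endpoint $v_i$.

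The reverse direction is the main technical step. Given a dominating set $D\subseteq R$ of size $\binom{k}{2}$ I plan to show that (a) $D$ contains exactly one vertex from each colour pair, and (b) these choices induce a unique $V_i$-endpoint per colour~$i$. For (a), a pigeonhole argument on the $\ell$ bit gadgets shows that omitting any pair $\{i,j\}$ forces the cross-pairs sharing colour $i$ (and separately, colour $j$) to display both bit values at every bit position, which requires sufficiently many and sufficiently diverse picks to blow past the budget~$\binom{k}{2}$. For (b), once each pair has one pick, any disagreement between pairs $\{i,j\}$ and $\{i,j'\}$ at some bit~$b$ leaves the corresponding $L^i_{\{i,j\},b,\beta}$ uncovered both by its in-pair vertex (wrong bit) and by every cross-pair vertex (also wrong bits). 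Consistency then assigns a unique $v_i\in V_i$ to every colour; the mere existence of each $r_{\{i,j\},v_i,v_j}$ in $R$ forces $v_iv_j\in E(H)$, so $\{v_1,\dots,v_k\}$ is a multicoloured $k$-clique. The boundary $k\le 2$ is trivial and handled separately. The main obstacle is the joint enforcement of (a) and~(b) with only $4\lceil\log n\rceil$ left-side vertices per pair, which is why the counting in~(a) must carefully exploit the interaction between in-pair and cross-pair edges across all bit positions simultaneously.
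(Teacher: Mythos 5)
Your overall plan (reduce from multicoloured $k$-clique using a bipartite bit-verification gadget on the left side, parameter $\binom{k}{2}$) matches the paper's, and the forward direction of your construction is fine. But both halves of your reverse direction have genuine gaps, and I believe the construction itself does not work for $k\ge 5$.

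For claim~(b), suppose each pair has exactly one pick, with $u^{ij}\in V_i$ the $V_i$-endpoint of the pick from pair $\{i,j\}$. The vertex $L^i_{\{i,j\},b,1-u^{ij}_b}$ is the one not covered in-pair. Cross-pair coverage by $r_{\{i,j''\},u'',v''}$ needs $u''_b\ne 1-u^{ij}_b$, i.e.\ $u''_b=u^{ij}_b$. So the requirement is only that \emph{some} $j''\ne j$ has $u^{ij''}_b=u^{ij}_b$ --- one disagreement with a particular $j'$ does not leave the vertex uncovered, contrary to what you assert. Concretely, take $k=5$, colour $i$, and let the four cross-pairs carry $V_i$-endpoints $u^1=u^2$ and $u^3=u^4$ with $u^1\ne u^3$. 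At every bit $b$ the map $j'\mapsto u^{ij'}_b$ takes each of its values with multiplicity at least~$2$, so every $L^i_{\{i,j'\},b,\beta}$ is dominated, yet the $V_i$-endpoints are inconsistent and no clique can be read off. The ``split into two camps of size $\ge2$'' obstruction exists for all $k\ge 5$. Claim~(a) has a similar problem: omitting $\{i,j\}$ only forces the picked vertices in pairs $\{i,j'\}$, $j'\ne j$, to span both bit values at every position, a constraint that costs no extra picks if the $V_i$-endpoints of the remaining $\{i,j'\}$-picks are diverse; nothing prevents compensating by doubling up on some other pair. Your gadget has too much coverage slack: each right vertex has degree $2(k-1)\lceil\log n\rceil$, so $\binom{k}{2}$ picks can touch $(k-1)^2k\lceil\log n\rceil$ left vertices versus a demand of only $2k(k-1)\lceil\log n\rceil$, so the double-counting argument you gesture at cannot be tight.

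The paper closes exactly these two holes with a more rigid gadget: the left side is split into $k(k-1)$ \emph{ordered} cells $C_{ij}$ of size $2\lceil\log n\rceil$, encodings are \emph{balanced} (first bit~$1$, exactly half the bits~$1$), and each right vertex touches exactly four cells, connecting to $C_{i,j}$ by $\enc(u)$, to $C_{i,\suc_i(j)}$ by $\enccomp(u)$, and symmetrically for~$v$. Balancedness makes each touch cover exactly half a cell, so every cell needs $\ge 2$ touches; with $4|S|\le 2k(k-1)$ this is tight, forcing \emph{exactly} two touches per cell. The first-bit trick identifies the two as $\enc(w)$ and $\enccomp(w)$ for a common~$w$, and following the $\suc$ chain transports this $w$ around all cells of colour~$i$, which is the consistency argument your cross-pair edges cannot supply. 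If you want to salvage your approach, you would need to replace the ``to every $j'\ne j$'' cross-pair edges with something that caps the coverage degree and forces an exact-touch count, which is essentially what $\suc$ and balanced encodings accomplish.
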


\begin{proof}
    We reduce from colorful clique. Assume we have a $k$-partite graph $G$ of
    size $n$ consisting of parts $V_0,\dots,V_{k-1}$ (each of a different color)
    and want to find a colorful clique of size $k$. Without loss of generality,
    we can assume $n$ to be large enough that ${2\lceil\log(n)\rceil \choose
    \lceil\log(n)\rceil-1} \ge n$. This means, we can find for each $v \in V(G)$
    a unique binary encoding $\enc(v)$ of length $2\lceil \log(n)\rceil$ such
    that the first bit is set to one and in total exactly half the bits are set
    to one. Let $\enccomp(v)$ be the binary complement of $\enc(v)$. We
    construct a bipartite graph $H$, whose left side is partitioned into cells
    $C_{ij}$ for $0 \le i\neq j < k$, each of size $2\lceil \log(n) \rceil$, and
    whose right side will be specified soon. The vertices of each cell are
    ordered. When we say for a given vertex $v$ from the right side and cell $C$
    that \emph{$v$ is connected to $C$ according to a specified encoding}, we
    mean that for $1 \le l \le 2\lceil \log(n) \rceil$, $v$ is connected to the
    $l$th vertex of $C$ if and only if the $l$th bit in the encoding is set to
    one. For $0 \le i < k$ we define
    $$
    \suc_i(j) = 
    \begin{cases}
    j+1 \text{~mod~} k & i\neq j+1 \text{~mod~} k \\
    j+2 \text{~mod~} k & \, \text{otherwise}.
    \end{cases}
    $$
    For all $0 \le i < j < k$ and all $u \in V_i$ and $v \in V_{j}$ such that
    $uv \in E(G)$, add a vertex $x_{u,v}$ to the right side and
    \begin{itemize}
        \item connect $x_{u,v}$ to $C_{i,j}$ according to $\enc(u)$,
        \item connect $x_{u,v}$ to $C_{i,\suc_i(j)}$ according to $\enccomp(u)$,
        \item connect $x_{u,v}$ to $C_{j,i}$ according to $\enc(v)$,
        \item connect $x_{u,v}$ to $C_{j,\suc_j(i)}$ according to $\enccomp(v)$.
    \end{itemize}
\iflongpaper
    \textbf{Correctness:}
    We claim the correctness of our construction: $G$ contains a colorful clique
    of size $k$ if and only if $H$ contains a set of at most $k \choose 2$
    right-side vertices dominating all left-side vertices.
    
    The forward direction is easy. If $G$ contains a colorful clique
    $v_0,\dots,v_{k-1}$ then it is easy to see that the set $\{ x_{v_i,v_j} \mid
    0 \le i < j < k\}$ dominates all left-side vertices.
    
    For the backward direction, assume there exists a set $S$ of at most $k
    \choose 2$ right-side vertices that dominates all left-side vertices. We say
    a vertex \emph{touches} a cell if it is adjacent to at least one vertex from
    the cell. There are $k(k-1)$ cells, each right-side vertex touches most four
    cells, and each cell needs to be touched by at least two vertices from $S$.
    Thus, with $|S| \le k(k-1)/2$ and by a simple counting argument, $S$ can
    only dominate all left-side vertices is if all cells are touched by exactly
    two vertices from $S$.
    
    Let us fix a cell $C$. There exist exactly two vertices $x,y \in S$ touching
    $C$. Both $x$ and $y$ are adjacent to exactly half the vertices of $C$,
    meaning that every vertex in $C$ has exactly one neighbor from $x$ and $y$.
    For every vertex $v \in V(G)$, the encoding $\enc(v)$ has the first bit set
    to one. Thus, there exists a vertex $v \in V(G)$ such that one vertex from
    $x$ and $y$ is connected to $C$ according to $\enc(v)$ and the other vertex
    from $x$ and $y$ is connected to $C$ according to $\enccomp(v)$.
    
    Assume a cell $C_{i,j}$ is connected to a vertex $x \in S$ according to
    $\enc(v)$ for some vertex $v \in V(G)$. By the way the adjacency of $x$ is
    defined, $C_{i,\suc_i(j)}$ is connected to $x$ according to $\enccomp(v)$.
    By the previous paragraph, there exists a vertex from $S$ such that
    $C_{i,\suc_i(j)}$ is connected to this vertex according to $\enc(v)$. By
    induction, for all $0 \le i < k$, there exists a vertex $v_i$ such that for
    all $0 \le j \neq i < k$, each cell $C_{i,j}$ is connected to some vertex
    from $S$ according to $\enc(v_i)$.
    
    For all $0 \le i < j < k$, the vertex that touches $C_{i,j}$ according to
    $\enc(v_i)$ also touches $C_{j,i}$ according to $\enc(v_j)$. This guarantees
    that there is an edge between $v_i$ and $v_j$ in $G$. Therefore, the
    vertices $v_0,\dots,v_{k-1}$ form a clique of size $k$ in $G$.
\fi
\end{proof}

We can reduce the aforementioned dominating set variation to the classical
dominating set problem by connecting the right side to a fresh vertex.
\begin{corollary} \label{cor:pds-hard-awnd}
    There exists an almost nowhere dense graph class $\cal C$ where the dominating set
    problem is $W[1]$-hard and cannot be solved in time $n^{o(k)}$ assuming ETH. This implies also the hardness of the fragments PDS-like, \FOCONE, and \FOCless of \FOC on $\cal C$.
\end{corollary}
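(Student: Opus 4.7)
The plan is to derive Corollary \ref{cor:pds-hard-awnd} from Theorem \ref{thm:hardness-domset} by the simple reduction sketched in the remark preceding the corollary. Given a bipartite instance $H$ produced by the theorem, with polylog-sized left side $L$, right side $R$ of size $n$, and target $\binom{k}{2}$, I would construct $H'$ by adding a single apex vertex $a$ joined to every vertex of $R$. Then $H$ has $\binom{k}{2}$ right-side vertices dominating $L$ if and only if $H'$ has a dominating set of size $\binom{k}{2}+1$: in one direction take the right-side witness together with $a$; in the other, argue that any minimum dominating set of $H'$ can be rearranged without growing it so that it contains $a$ together with $\binom{k}{2}$ right-side vertices (replace any left-side vertex by an adjacent right-side vertex, and include $a$ to cover the remaining right-side vertices), and these right-side vertices must then dominate all of $L$.

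Let $\mathcal{C}$ be the class of all such $H'$. To see that $\mathcal{C}$ is almost nowhere dense, I would fix the vertex ordering $\pi$ that lists $a$ first, then all of $L$, then all of $R$. Every right-side vertex has only left-side vertices and $a$ to its left, and every left-side vertex has only $a$ to its left, so $\WReach_r[H',\pi,v] \subseteq \{a\} \cup L$ for every vertex $v$ and every radius $r$. Consequently $\wcol_r(H') \le |L| + 1 = O(k^2 \log |H'|)$, which is $|H'|^{o(1)}$. By the weak-coloring characterization in Theorem \ref{thm:characterization_almost_nowhere_dense}, this certifies $\mathcal{C}$ as almost nowhere dense.

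Because the reduction is polynomial time and inflates the dominating-set parameter only polynomially in the clique parameter $k$, the $W[1]$-hardness and the ETH-based lower bound of Theorem \ref{thm:hardness-domset} transfer directly to dominating set on $\mathcal{C}$. For the final sentence of the corollary, observe that classical dominating set of size $k'$ is expressible by the PDS-like formula
\[
\exists x_1 \dots x_{k'} \cnt y \bigl(\bigvee_{i=1}^{k'} y = x_i \lor E(y,x_i)\bigr) > |V(G)| - 1,
\]
which simultaneously lies in \FOCless{} and in \FOCONE{} (the counting term has only $y$ as free variable); hence hardness propagates to all three fragments. The only subtlety is keeping $\mathcal{C}$ almost nowhere dense after introducing the apex, and placing $a$ first in the ordering handles this cleanly; everything else is a routine transfer of parameters.
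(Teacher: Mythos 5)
Your overall plan follows the paper's one-line sketch (apex vertex joined to $R$, order $a \prec L \prec R$ to certify almost nowhere denseness, then transfer hardness), but there are two substantive problems.

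The clearest one is the $\FOCONE$ claim at the end. In the counting term $\cnt y \bigl(\bigvee_{i=1}^{k'} y=x_i \lor E(y,x_i)\bigr)$, the variable $y$ is \emph{bound} by $\#$; the free variables of this counting term are $x_1,\dots,x_{k'}$, and there are $k'$ of them, not one. The paper makes exactly this observation about formula~(\ref{formula:pds}) when explaining why it lies in \FOCless but \emph{not} in \FOCONE. So the formula you wrote does not witness \FOCONE-hardness. The conclusion still holds, but for a different reason: dominating set is plain FO-expressible as $\exists x_1\ldots x_{k'}\,\forall y\,\bigvee_i (y=x_i \lor E(y,x_i))$, and $\text{FO}\subseteq\FOCONE$, so $W[1]$-hardness of FO model checking on $\mathcal C$ already gives \FOCONE-hardness. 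Your justification needs to be replaced by this.

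The second problem is the backward direction of the reduction. You assert that any dominating set of $H'$ of size $\binom{k}{2}+1$ can be rearranged ``without growing'' to contain $a$ together with right-side vertices, by replacing left-side vertices with right-side neighbours ``and including $a$.'' But if $a$ is not already in the set, adding it increases the size by one, and you give no argument that some vertex can then be dropped; a dominating set contained in $L\cup R$ is not obviously excluded. The standard fix is to make $a$ forced, e.g.\ attach a pendant vertex $b$ to $a$ (then every dominating set contains $a$ or $b$, and $b$ can always be swapped for $a$), after which the replace-left-by-right argument goes through; this leaves the ordering $a \prec b \prec L \prec R$ and the $O(k^2\log n)$ weak-colouring bound intact. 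A third, smaller point worth recording: for $\mathcal C$ to be almost nowhere dense you must also keep $k$ subpolynomial in $|H'|$ (e.g.\ $k = O(\log n)$ by padding the clique instance with isolated vertices), since $\wcol_r(H')=\Theta(k^2\log n)$ is not $|H'|^{o(1)}$ if $k$ is allowed to grow polynomially; the $W[1]$- and ETH-hardness survive this padding.
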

Note that this result does not follow from the intractability result of FO-logic on subgraph-closed somewhere dense classes, i.e. not nowhere dense classes.

\subsection{Beyond Distance One}

We showed that the dominating set problem is \Ww-hard on some almost nowhere
dense graph class. However, this is not true for the distance-$r$ clique and
independent set problem.

Distance-$r$ clique and independent set on the other hand are fpt on almost
nowhere dense graph classes. Here, we use low treedepth colorings to solve
existential FO formulas. With the right formulation and inclusion-exclusion this
works even for distance-$r$ independent set which cannot be expressed as a
purely existential FO formula.
\begin{theorem}\label{thm:hardness-clique} There exists a computable function
    $f$ such that for every graph $G$ the distance-$r$ clique problem can be
    solved in time $\wcol_{f(k,r)}(G)^{f(k,r)} n$.
\end{theorem}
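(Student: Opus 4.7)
The plan is to express distance-$r$ clique as a short existential first-order formula and reduce the problem to bounded-treedepth instances via low-treedepth colorings, in the spirit of standard FO-meta-algorithms for sparse classes.

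First, I would observe that $G$ contains a distance-$r$ clique of size $k$ if and only if $G$ contains, as a subgraph, an $(r'-1)$-subdivision of $K_k$ for some $r' \le r$. This can be captured by a purely existential FO formula $\phi_{k,r}$ with $s \coloneqq k + \binom{k}{2}(r-1)$ variables: quantify over the $k$ principal vertices and, for each unordered pair $\{i,j\}$, over $r-1$ potential subdivision vertices, then assert the corresponding path adjacencies (allowing shorter paths by collapsing consecutive variables through equality) together with disjointness between distinct paths. The length of $\phi_{k,r}$ depends only on $k$ and $r$.

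Next, I would invoke \Cref{lem:wcol-approx} to compute a vertex ordering $\pi$ with $\wcol_{2^{s-2}}(G,\pi) \le m$, where $m \le \wcol_{g(k,r)}(G)^{g(k,r)}$ for some computable function $g$, in time $m\cdot n$. Then \Cref{prop:treedepth-col} produces an $s$-treedepth coloring of $G$ with at most $m$ colors in time $O(m n)$. Any witness to $\phi_{k,r}$ touches at most $s$ vertices and therefore at most $s$ color classes, so I would enumerate all $\binom{m}{s} \le m^s$ $s$-element subsets $S$ of colors and consider each induced subgraph $G[V_S]$, which by the defining property of the coloring has treedepth at most $s$. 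On each $G[V_S]$ I would run a standard linear-time FO model-checking algorithm for graphs of bounded treedepth to decide $G[V_S] \models \phi_{k,r}$; the overall answer is the disjunction over all $S$.

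The main bookkeeping obstacle is verifying that $\phi_{k,r}$ faithfully encodes an $(r-1)$-subdivided clique as a subgraph, so that collapsed-variable tricks do not allow spurious witnesses and so that every genuine witness really does live inside a single $s$-element color set, making the enumeration complete and sound. Granting this, summing over all color subsets gives total time $m^s \cdot O(n)$, which is bounded by $\wcol_{f(k,r)}(G)^{f(k,r)} \cdot n$ for a computable $f$ absorbing $g$, $s$, and the constant from the model-checking subroutine.
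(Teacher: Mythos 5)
Your proof is correct and takes essentially the same approach as the paper: reduce distance-$r$ clique to a short existential FO formula (equivalently, a bounded family of subgraph queries for $\le(r-1)$-subdivided $K_k$'s), compute an $s$-treedepth coloring via \Cref{lem:wcol-approx} and \Cref{prop:treedepth-col}, and decide the formula on each of the at most $m^s$ color-induced bounded-treedepth subgraphs. The paper's own proof is terser and dispatches the subgraph queries by citing \Cref{thm:opti} (and unrolls the same low-treedepth argument explicitly only in \Cref{thm:indset-algo}), whereas you go directly through bounded-treedepth FO model checking, which is self-contained and avoids having to argue that the additive error in \Cref{thm:opti} never obscures the exact yes/no answer.
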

\begin{proof}
    We can solve this problem with the help of subgraph queries where each
    subgraph is an ${\le}r$-subdivision of a $k$-clique. These subgraphs have
    less than $k^2(r+1)$ vertices and there are at most $(r+1)^{k^2}$ of them.
    Subgraph queries can be done by checking an existential FO-formula using
    \Cref{thm:opti}.
\end{proof}

\begin{theorem}\label{thm:indset-algo} There exists a computable function $f$
    such that for each graph $G$ the distance-$r$ independent set problem can be
    solved in time $\wcol_{f(k,r)}(G)^{f(k,r)}n$.
\end{theorem}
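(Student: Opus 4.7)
The plan parallels the proof of \Cref{thm:hardness-clique} but uses inclusion-exclusion on pairs to convert the negated short-path constraints of an independent set into purely existential subgraph queries. For each subset $S$ of pairs from $\{1,\ldots,k\}$, let $N_S$ denote the number of ordered $k$-tuples $(v_1,\ldots,v_k)$ of pairwise distinct vertices of $G$ such that $\dist_G(v_i, v_j) \le r$ for every $(i,j) \in S$ (with no constraint on the remaining pairs). Then the number of ordered distance-$r$ independent sets of size $k$ equals $\sum_S (-1)^{|S|} N_S$, so the decision problem reduces to computing each $N_S$ exactly and testing positivity of the signed sum; a witness can then be extracted by a standard self-reduction iterating over candidates for the first vertex.

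For each $S$, the count $N_S$ equals a finite linear combination of counts of occurrences in $G$ of fixed patterns $P$, where each $P$ is obtained from the multigraph with vertex set $\{1,\ldots,k\}$ and edge set $S$ by replacing every edge with a path of some length $\ell \in \{1,\ldots,r\}$. There are at most $r^{|S|} \le r^{k^2}$ such patterns $P$, each of size $|V(P)| \le k + \binom{k}{2}(r-1) = O(k^2 r)$. Thus the task reduces to exactly counting, for each fixed pattern $P$, the number of occurrences of $P$ in $G$ mapping $k$ designated principal vertices to $v_1,\ldots,v_k$.

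These subgraph-occurrence counts are then computed exactly via the low-treedepth coloring framework that underlies \Cref{thm:hardness-clique} and \Cref{thm:opti}. Invoke \Cref{lem:wcol-approx} to obtain in linear time an ordering $\pi$ with $\wcol_{2^{|V(P)|-2}}(G, \pi) \le \wcol_{h(k,r)}(G)^{h(k,r)}$ for a suitable computable $h$, then apply \Cref{prop:treedepth-col} to obtain a $|V(P)|$-treedepth coloring of $G$ with $m \le \wcol_{h(k,r)}(G)^{h(k,r)}$ colors. Every occurrence of $P$ in $G$ uses at most $|V(P)|$ colors and thus lies in some union of $|V(P)|$ color classes whose induced subgraph has treedepth at most $|V(P)|$. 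Enumerating all $\binom{m}{|V(P)|}$ such unions and counting $P$-occurrences on each exactly by dynamic programming on a bounded-treedepth decomposition (or a counting variant of Courcelle's theorem) yields, after an inner inclusion-exclusion on color subsets to avoid over-counting occurrences lying in smaller unions, the exact number of occurrences of $P$ in $G$.

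Combining these counts through the outer inclusion-exclusion decides the distance-$r$ independent set problem exactly, and self-reduction produces a witness. The main obstacle, compared with \Cref{thm:hardness-clique}, is that the additive error of \Cref{thm:opti} would render positivity testing of the alternating sum unreliable; this is circumvented by using exact counting on bounded-treedepth subgraphs instead of invoking \Cref{thm:opti}. Summing over all $2^{|S|}$ subsets $S$, $r^{|S|}$ patterns $P$, and $\binom{m}{|V(P)|}$ color selections yields a total running time of $\wcol_{f(k,r)}(G)^{f(k,r)} n$ for an appropriate computable $f$, as claimed.
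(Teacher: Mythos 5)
Your proposal follows essentially the same strategy as the paper's proof: both first remove the negated distance constraints by inclusion--exclusion and then reduce to exact counting of bounded-size subdivision-like patterns via a low-treedepth coloring whose number of colors is controlled by a weak coloring number (obtained from \Cref{lem:wcol-approx} and \Cref{prop:treedepth-col}). The paper phrases the inclusion--exclusion recursively through ``distance-$r$ $k$-subrelations'' $D \colon \binom{[k]}{2} \to [r]\cup\{\infty,*\}$, repeatedly replacing an $\infty$-entry by $*$ minus $r$; your direct formula $\sum_S (-1)^{|S|} N_S$ over subsets of pairs is the unrolled version of exactly that recursion, and the two are equivalent. Your remark about why the additive-error guarantee of \Cref{thm:opti} is inadequate here and must be replaced by exact counting on bounded-treedepth parts is correct and is implicitly what the paper does as well (it invokes an exact subgraph-counting result of Demaine et al.\ rather than \Cref{thm:opti}).

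One point to tighten: the claim that each $N_S$ ``equals a finite linear combination of counts of occurrences of fixed patterns $P$'' where $P$ has vertex-disjoint paths replacing each pair in $S$ is not literally correct. Witness paths for distinct pairs may share internal vertices or pass through the principals, and a single satisfying tuple may be certified by many non-isomorphic overlap configurations, so restricting to vertex-disjoint subdivisions both under- and over-counts. The fix is standard (enumerate all quotients/identifications of the vertex-disjoint patterns and perform a M\"obius-style inversion, which still produces only a function-of-$(k,r)$ many patterns of bounded size and bounded treedepth), but as written the step is glossed over. The paper's own phrasing (``expressed by a disjunction of subgraph queries'') is similarly terse and defers this bookkeeping to the cited counting theorem, so your proposal is at the same level of rigor; just be aware that the set of patterns to count is a bit richer than the ones you list.
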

%

\iflongpaper
\begin{proof}
    A \emph{distance-$r$ $k$-subrelation} is a function 
    $D \colon \binom{[k]}{2} \to [r] \cup \{\infty, *\}$.
    We write $(G, h)\models D$ for a graph $G$ and an injective function $h \colon [k] \to V(G)$ 
    if for  every $vw \in \binom{[k]}{2}$,
    \begin{enumerate}
        \item if $D(vw) = l \in [r]$, then $\dist_G(h(v), h(w)) \leq l$,
        \item if $D(vw) = \infty$, then $\dist_G(h(v), h(w)) \geq r+1$.
    \end{enumerate}
    Let $[[D]]^G$ be the number of functions $h$ with $(G, h) \models D$.
    In essence, $D$ encodes whether some graph $H$ appears in $G$ as a subgraph with conditions on the distances of non-adjacent vertices. 
    
    Let $D^\infty$ be the distance-$r$ $k$-subrelation with $D^\infty(vw) = \infty$ for every $vw$.
    There exists a distance-$r$ independent set in a graph  $G$ if and only if 
    there is some $h$ such that $(G,h) \models D^\infty$, 
    in particular, if and only if $[[D^\infty]]^G > 0$.

    To compute this value, we use the inclusion-exclusion principle.
    Let $D$ be some distance-$r$ $k$-subrelation with an entry $vw$ such that $D(vw) = \infty$.
    Then the value of $[[D]]^G$ can be computed as $[[D]]^G = [[D^*]]^G - [[D^{r}]]^G$ 
    where $D^*$ and $D^{r}$ are distance-$r$ $k$-subrelations equal to $D$ 
    except for the value of $vw$ which is $*$ and $r$ respectively.
    We apply this rule exhaustively until $\infty$ does not appear in the images of the subrelations.

    Distance-$r$ $k$-subrelations without $\infty$ can be expressed by a disjunction of subgraph queries
    where each graph is an $r$-subdivision of graph on $k$ vertices.
    These graphs have less than $k^2(r+1)$ vertices and there are at most $(r+2)^{k^2}$ such graphs.
    Using \Cref{lem:wcol-approx} and \Cref{prop:treedepth-col} we can compute a $k^2(r+1)$-treedepth coloring
    with $\wcol_{f(k^2(r+1))}(G)^{f(k^2(r+1))}$ many colors.
    Using these low treedepth colorings and \cite[Theorem 6 and 8]{DemaineRRVSS19} 
    one can count how often such graphs appear as subgraphs in time $\wcol_{f'(k, r)}(G)^{f'(k, r)} n$ for some computable function $f'$.
\end{proof}
\else
\begin{proof}[Proof sketch]
    We count specially designed subgraphs to solve this problem. These subgraphs encode that there
    are vertices $v_1, \dots v_k$ which have some distance $d(v_i,v_j)$ from each other. As the
    distance constraint ``$d(v_i,v_j)\geq r+1$'' for the distance-$r$
    independent set problem cannot be expressed this way, we use
    inclusion-exclusion to compute the number of such graphs. To count them, we
    use low treedepth colorings whose number of colors are bounded by weak coloring numbers.
\end{proof}

\begin{corollary}
    For every almost nowhere dense graph class $\cal C$, every $r \in \N$ and every real $\varepsilon > 0$ both the distance-$r$ clique problem and the distance-$r$ independent set problem can be solved in time $O(n^{1+\varepsilon})$ given a graph $G \in \cal C$.
\end{corollary}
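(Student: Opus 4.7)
The plan is to combine the two algorithmic theorems for distance-$r$ clique (\Cref{thm:hardness-clique}) and distance-$r$ independent set (\Cref{thm:indset-algo}) with the coloring-number characterization of almost nowhere dense classes from \Cref{thm:characterization_almost_nowhere_dense}. Both algorithms have running time of the form $\wcol_{f(k,r)}(G)^{f(k,r)} \cdot n$ for a computable function $f$, so the only thing left to do is to show that on an almost nowhere dense class this prefactor is absorbed into $n^{\varepsilon}$.

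First, I would fix the inputs: the graph class $\cal C$, the radius $r$, the solution size $k$ (which in a fixed-parameter algorithm is a constant in terms of the asymptotic analysis, but even if one wants the result parameterized, the argument is identical), and the target exponent $\varepsilon>0$. Let $s = f(k,r)$ be the constant coming out of the algorithms, so the running time on input $G$ of order $n$ is $\wcol_{s}(G)^{s}\cdot n$. By the characterization of almost nowhere dense classes (item on weak coloring numbers in \Cref{thm:characterization_almost_nowhere_dense}), applied with the parameter $\varepsilon' \coloneqq \varepsilon/s$ and radius $s$, there exists $n_0 = n_0(\cal C, s, \varepsilon')$ such that every $G \in \cal C$ with $|G| \ge n_0$ satisfies $\wcol_s(G) \le |G|^{\varepsilon'}$. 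Consequently $\wcol_s(G)^s \le |G|^{s\varepsilon'} = |G|^{\varepsilon}$, and the algorithm runs in time $|G|^{1+\varepsilon}$ on such graphs.

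For the finitely many graphs in $\cal C$ of order less than $n_0$, the running time is bounded by a constant depending only on $\cal C$, $k$, $r$, and $\varepsilon$, which is absorbed into the $O(\cdot)$ notation. Putting these together yields a running time of $O(n^{1+\varepsilon})$ for both the distance-$r$ clique problem via \Cref{thm:hardness-clique} and the distance-$r$ independent set problem via \Cref{thm:indset-algo}.

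I do not anticipate a real obstacle here: the work has been done in the two theorems and in the characterization, and this corollary is a clean instantiation. The only thing one has to be mildly careful about is matching the radius $s$ at which the weak coloring number needs to be bounded with the radius demanded by the algorithms; choosing the approximation parameter $\varepsilon'=\varepsilon/s$ rather than $\varepsilon$ itself handles the exponent $s$ cleanly, and the finite exceptional set of small graphs is handled by the usual constant-size-instance argument.
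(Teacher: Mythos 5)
Your proposal is correct and is exactly the argument the paper leaves implicit: instantiate the running-time bounds $\wcol_{f(k,r)}(G)^{f(k,r)}\cdot n$ from \Cref{thm:hardness-clique} and \Cref{thm:indset-algo}, then use the weak-coloring-number characterization of almost nowhere dense classes (\Cref{thm:characterization_almost_nowhere_dense}) with parameter $\varepsilon' = \varepsilon/f(k,r)$ to absorb the prefactor into $n^{\varepsilon}$, with the finitely many graphs below the threshold $n_0$ handled in constant time. The paper states the corollary without proof, and your derivation is the intended one, including the correct bookkeeping of the radius and exponent.
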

\fi

\subsection{Beyond Almost Nowhere Dense}

For graph classes that are closed under removing \emph{vertices and edges},
i.e., monotone graph classes, we know a lot already. Most importantly, FO-model
checking is fpt on such classes if and only if the class is nowhere dense
(unless \FPT = \Ww) \cite{GroheKS17}. We now want to consider graph classes that
are only closed under removing \emph{edges}. Here the concept of almost nowhere
dense graph classes becomes interesting.


The following observation follows directly from characterization
\ref{col_anwd_char} in \Cref{thm:characterization_almost_nowhere_dense}. If
$\cal P$ is a parameterized problem that can be solved in time
$\col_{f(k)}(G)^{f(k)}n$ and $\cal C$ is an almost nowhere dense graph class,
then $\cal P$ can be solved on $\cal C$ in almost linear fpt time $f(k,
\varepsilon)n^{1+\varepsilon}$ for every $\varepsilon > 0$. We complement this
by showing that the distance-$r$ clique problem is most likely not fpt on all
graph classes that are not almost nowhere dense, but closed under removing
edges. Hence, under certain common complexity theoretic assumptions, if a graph
class $\cal C$ is closed under removal of edges then distance-$r$ clique is fpt
on $\cal C$ iff $\cal C$ is almost nowhere dense.

\begin{theorem}\label{thm:hardness} 
    Let $\cal C$ be a graph class that is not
    almost nowhere dense, but closed under removing edges. Then there exists a
    number $r$, such that one cannot solve the distance-$r'$ clique problem
    parameterized by solution size in fpt time on $\cal C$ for all $r' \le r$
    unless {\rm i.o.W[1] $\subseteq$ FPT}.
\end{theorem}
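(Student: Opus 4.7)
My plan is to start from the failure of the subdivision-based characterization in \Cref{thm:characterization_almost_nowhere_dense}: since $\cal C$ is not almost nowhere dense, there exist $r \in \N$ and $\varepsilon > 0$ such that infinitely many graphs $H_n \in \cal C$ of order $n$ contain an $r'$-subdivision of $K_{\lceil n^\varepsilon \rceil}$ as a subgraph for some $r' \le r$. By pigeonhole I fix a single value $r^* \le r$ that occurs for an infinite subsequence, so that for infinitely many target sizes $m$ I have at hand a graph $H_n \in \cal C$ of order polynomial in~$m$ that contains an $r^*$-subdivision of $K_N$ with $N \ge m$ as a subgraph. This structural input is exactly what the reduction exploits, and it suggests setting the target distance to $r^*+1$, matching the length of a single subdivided path between principal vertices.

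The next step is a reduction from $k$-\textsc{Clique} on general graphs, which is not solvable in fpt time on infinitely many input sizes unless {\rm i.o.W[1]} $\subseteq$ \FPT, to distance-$(r^*+1)$-\textsc{Clique} on~$\cal C$. Given an instance $G$ on $m$ vertices $v_1,\dots,v_m$, I pick such an $H_n$, identify $v_i$ with the principal vertex $p_i$ for $i \le m$, and delete one interior edge of every subdivided path $P_{ij}$ for which either $\max(i,j) > m$ or $v_iv_j \notin E(G)$. Closure under edge removal keeps the resulting graph $H$ in~$\cal C$, and because $n$ is polynomial in~$m$ the whole reduction runs in polynomial time. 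Consequently any fpt algorithm for distance-$(r^*+1)$-\textsc{Clique} on~$\cal C$ would give an fpt algorithm for $k$-\textsc{Clique} on an infinite subsequence of input sizes, contradicting the hypothesis and yielding the theorem with $r := r^*+1$.

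The real work lies in proving correctness of this reduction, and this is the main obstacle. The easy direction is clear: a $k$-clique of $G$ yields $k$ principal vertices of $H$ that are pairwise joined by intact subdivided paths of length $r^*+1$. The converse must extract a $k$-clique of $G$ from an \emph{arbitrary} distance-$(r^*+1)$ clique $S$ of $H$, and $S$ may freely mix principal vertices with subdivision vertices. The key structural facts I would exploit are (i) the paths of an $r^*$-subdivision of $K_N$ are internally vertex-disjoint, so cutting a single interior edge raises the corresponding endpoint-to-endpoint distance in $H$ to at least $2(r^*+1)$, the only alternative route being a detour through some other principal vertex, and (ii) every subdivision vertex lies on a unique path and is therefore associated with a well-defined unordered pair of principal vertices. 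The hard part will be arguing that $S$ can be projected to a set of $k$ principal vertices in $\{p_1,\dots,p_m\}$ that still forms a distance-$(r^*+1)$ clique, hence a clique of~$G$. I expect to handle this either by reducing from colourful $k$-\textsc{Clique} and pre-assigning each colour class to its own block of principal vertices, forcing the $k$ target vertices to come from distinct blocks, or by strengthening the edge-deletion step to a small gadget of deletions that guarantees that any pair of vertices at distance $\le r^*+1$ in $H$ must both be principal.
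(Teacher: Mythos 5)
Your high-level strategy --- reduce from \textsc{Clique} using \Cref{thm:characterization_almost_nowhere_dense} / \Cref{lem:largecliques}, plant the instance on the principal vertices of a subdivided clique in some $H_n \in \cal C$, cut the unwanted paths, and aim for distance $r'+1$ --- is indeed the paper's strategy in outline. But there are two concrete gaps in the construction, both of which the paper avoids by performing a much more aggressive clean-up.

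First, the reduction cannot ``pick such an $H_n$.'' The class $\cal C$ is not given explicitly; all you have is a black-box algorithm that is only guaranteed correct on members of $\cal C$. The paper sidesteps this by building a graph with a \emph{universal} description: $H_{r',N}$ is precisely the $r'$-subdivision of the input $G$ plus $N$ isolated vertices. Because $\cal C$ is closed under edge removal, for the right $(r',N)$ this $H_{r',N}$ is a sub-edge-graph of the (unknown) witness graph and hence lies in $\cal C$; since the right $(r',N)$ is not computable, the paper runs the hypothetical algorithm in parallel over all $r'\le r$ and all $N\le n^{1/\varepsilon}$ and verifies witnesses. Your plan, which directly modifies $H_n$, has no way to lay hands on $H_n$.

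Second, and more fatally, deleting just one interior edge per unwanted subdivided path $P_{ij}$ is not sufficient even granting access to $H_n$. The graph $H_n$ is only guaranteed to \emph{contain} an $r^*$-subdivision of a large clique as a subgraph; in general it has plenty of other edges, and the unused principal vertices $p_{m+1},\dots,p_N$ remain attached. Your key fact (i), that cutting one interior edge raises the endpoint-to-endpoint distance to at least $2(r^*+1)$, holds only inside the subdivision subgraph: the extra edges of $H_n$ can create arbitrary shortcuts, so nothing prevents spurious distance-$(r^*+1)$ cliques. The paper instead deletes \emph{every} edge outside the chosen $r'$-subdivision of $G$, yielding exactly an $r'$-subdivision of $G$ plus isolated vertices. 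In that clean graph the backward direction you flag as ``the hard part'' goes through by an elementary degree argument (for $k\ge 4$, each vertex of a distance-$(r'+1)$ $k$-clique has degree $\ge 3$ in the subdivision model, but subdivision vertices have degree $2$, so the clique consists of principal vertices of $H$, and two principal vertices of an $r'$-subdivision are within distance $r'+1$ iff they are $G$-adjacent). Neither of your two sketched fixes (colourful clique, or a ``small gadget of deletions'') addresses the presence of the extra, non-subdivision edges of $H_n$; if you adopt the paper's deletion step, both the implementability issue and the backward direction resolve themselves.
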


Similar hardness results in parameterized complexity are usually built on the
hardness assumption FPT $\neq$ \Ww{}. The complexity class $i.o.$W[1] should be
read as ``infinitely often in W[1]'' and needs to be explained.
\begin{definition}
For a language $L$ and an integer $n$ let $L_n= L \cap \{0,1\}^n$. A language
$L$ is in $i.o.C$ for a complexity class $C$ if there is some $L' \in C$ such
that $L'_n = L_n$ for infinitely many input lengths $n$.
\end{definition}
Considering the infinite often variant $i.o.C$ of a complexity class $C$ is an
established technique in complexity theory (i.e.,
\cite{4567890,beigel2006infinitely}).
To prove our result, we show that a graph class $\cal C$ that is
not almost nowhere dense, contains an infinite sequence of graphs
having
cliques of polynomial size as bounded depth topological minors. If $\cal C$ is
also closed under removal of edges then having bounded depth
topological clique minors of size $n$ implies the existence of subdivisions of
arbitrary graphs $H$ of size $n$ as induced subgraphs. Extra care needs to be
taken to make sure that all paths connecting the principal vertices should be of
equal length, since otherwise a reduction would need to try out an exponential
number of possible length combinations to finally find the correct subdivision
of $H$ that is contained in $\cal C$. The following corollary is a direct
consequence of
\Cref{thm:characterization_almost_nowhere_dense}.\ref{subdiv_anwd_char}.

\begin{corollary}\label{lem:largecliques} Let $\cal C$ be some graph class that
    is not almost nowhere dense. Then there are $r$, $\varepsilon$ and an
    infinite sequence of strictly ascending numbers $n_0,n_1,\dots$ such that
    for all $i \in \N$ there is a graph $G \in \cal C$ of order at most $n_i$
    that contains an $r'$-subdivision of $K_{\lceil n_i^\varepsilon \rceil}$ as
    a subgraph for some $r' \le r$.
\end{corollary}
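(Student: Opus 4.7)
The plan is to apply the contrapositive of characterization~\ref{subdiv_anwd_char} in \Cref{thm:characterization_almost_nowhere_dense}. That characterization says $\cal C$ is almost nowhere dense if and only if for every $r \in \N$ and every $\varepsilon > 0$ there is a threshold $N$ such that no graph $G \in \cal C$ with $|G| \ge N$ contains an $r'$-subdivision of $K_{\lceil |G|^\varepsilon \rceil}$ as a subgraph for any $r' \le r$. Since $\cal C$ is assumed not to be almost nowhere dense, the negation yields $r \in \N$ and $\varepsilon > 0$ such that for every threshold $N \in \N$ some graph $G \in \cal C$ with $|G| \ge N$ does contain an $r'$-subdivision of $K_{\lceil |G|^\varepsilon \rceil}$ as a subgraph for some $r' \le r$.

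With these $r$ and $\varepsilon$ fixed, I would construct the required sequence inductively. For the base case, apply the negated condition with $N := 1$ to obtain a witness graph $G_0 \in \cal C$ and set $n_0 := |G_0|$. For the inductive step, given $n_i$, apply the negated condition with $N := n_i + 1$ to obtain a graph $G_{i+1} \in \cal C$ with $|G_{i+1}| \ge n_i + 1$ containing an $r'$-subdivision of $K_{\lceil |G_{i+1}|^\varepsilon \rceil}$ for some $r' \le r$, and define $n_{i+1} := |G_{i+1}|$.

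By construction, the sequence $n_0 < n_1 < \dots$ is strictly ascending, and each $G_i$ has order exactly $n_i$ (in particular at most $n_i$) and contains an $r'$-subdivision of $K_{\lceil n_i^\varepsilon \rceil}$ as required. There is no real obstacle here; the only mild subtlety is the quantifier juggling needed to turn the ``for every $N$'' statement from the negated characterization into a strictly ascending sequence, which is handled by raising the threshold just above the previous value at each step.
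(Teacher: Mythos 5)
Your argument is correct and is essentially the approach the paper has in mind: the paper states the corollary follows directly from item~\ref{subdiv_anwd_char} of \Cref{thm:characterization_almost_nowhere_dense}, and your contraposition plus the inductive choice of thresholds $N := n_i + 1$ is exactly the straightforward way to extract the required strictly increasing sequence (with $n_i := |G_i|$, so ``order at most $n_i$'' is trivially satisfied and the forbidden subdivision is of $K_{\lceil n_i^\varepsilon\rceil}$ as required).
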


The consequence $\rm i.o.W[1]\subseteq FTP$ is weaker than $\rm W[1]\subseteq FPT$.  We could use
the latter in Theorem~\ref{thm:hardness} if we required a stronger
precondition, i.e., that $\C$ has ``witnesses'' for input lengths
$n_0,n_1,n_2,\ldots$ such that the gap between $n_i$ and $n_{i+1}$ is
only polynomial.  This approach has been used, e.g., in proving lower
bounds on the running time of MSO-model checking in graph classes
where the treewidth grows polylogarithmically~\cite{MSO-hard,ganian2014lower}.


\begin{proof}[Proof of \Cref{thm:hardness}]
    Let $r$ and $\varepsilon$ be the constants (depending on $\mathcal{C}$) from
    \Cref{lem:largecliques}. Assume that the distance-$(r+1)$ clique problem on
    $\mathcal{C}$ is fpt when parameterized by solution size. We will present a
    Turing reduction showing that the (usual) clique problem on the class of all
    graphs is infinitely often in FPT.

    By \Cref{lem:largecliques} for infinitely many $n_0,n_1,\dots \in \N$ there
    exists a graph from $\cal C$ of size at most $n_i^{1/\varepsilon}$ that
    contains an $r'$-subdivision of a clique of size $n_i$ as a subgraph for
    some $r' \le r$. Let us pick one $n=n_i$. Suppose we want to decide whether
    a graph~$G$ with $n$ vertices contains a clique of size $k$. Since $\cal C$
    is closed under removal of edges, there exist $r' \le r$, and $n \le N \le
    n^{1/\varepsilon}$ such that $\cal C$ contains a graph $H_{r',N}$ consisting
    of an $r'$-subdivision of $G$ together with $N$ isolated vertices. Now for
    all $k$, $G$ contains a clique of size $k$ iff $H_{r',N}$ contains a
    distance-$(r'+1)$ clique of size $k$. Assume for contradiction we had an
    algorithm that decides in time at most $f(r',k)n^c$ whether a graph in $\cal
    C$ of size $n$ contains an distance-$(r'+1)$ clique for $r' \le r$. (For
    graphs not in $\cal C$, the algorithm may give a
    wrong answer, but we can modify it to construct and test a
    witness of a distance-$(r'+1)$ clique on yes-instances. Hence, we can assume
    that the algorithm never returns ``no'' on yes-instances.)

    The existence of such an algorithm yields us an FPT algorithm for the
    $k$-clique problem on general graphs: For all $r' \le r$, and $n \le N \le
    n^{1/\varepsilon}$, we run this (hypothetical) fpt algorithm in parallel on
    $H_{r',N}$ for $f(r',k)N^c$ time steps. Then $G$ contains a clique of size
    $k$ iff for at least one value of $r'$ and $N$ we have $H_{r',N} \in \cal C$
    and $H_{r,N}$ contains a distance-$(r'+1)$ $k$-clique.
    
    As the $k$-clique problem is $W[1]$-hard, we get the desired result.
\end{proof}

Note that this result does not extend to the distance-$r$ independent
set problem.  Consider the class of graphs where at least half
of its vertices are isolated.  Then the distance-$r$ independent set
problem is trivially FPT for this graph class.  However, this graph
class is closed under removing edges, but it is not almost nowhere dense.

\bibliographystyle{plainurl}
\bibliography{../wcol}
\end{document}
\typeout{get arXiv to do 4 passes: Label(s) may have changed. Rerun}